\newcommand{\kkcomment}[1]{\textcolor{magenta}{(#1 --KK)}}
\newcommand{\postedit}[1]{\textcolor{black}{#1}}
\title{\Large Mechanism Design for Collaborative Normal Mean Estimation}
\author{%
  Yiding Chen\\
  UW-Madison \\
  \texttt{ychen695@wisc.edu} \\
  \and
  Xiaojin Zhu\\
  UW-Madison\\
  \texttt{jerryzhu@cs.wisc.edu}\\
  \and
  Kirthevasan Kandasamy\\
  UW-Madison\\
  \texttt{kandasamy@cs.wisc.edu}
}
\date{}
\begin{document}

\maketitle

\newcommand{\insertAlgoNoFalse}{
\begin{algorithm}[t]
\caption{$\mechind$}
\label{alg:nofalse}
\begin{algorithmic}[1]
\State \textbf{Mechanism designer publishes:}
\State \indent 
The allocation space $\allocspace = \bigcup_{n\geq 0}\RR^n$, and the procedure in lines~\ref{lin:nofalsemechstart} --\ref{lin:nofalsemechend}.
\State \textbf{Each agent $i$:}
\State \indent Choose strategy $s_i = (n_i, \subfunci, \estimi)$.
\State \indent Sample $n_i$ points
$\initdatai=\{\xiijj{i}{j}\}_{j=1}^{n_i}$ and submit $\subdatai=\subfunci(\initdatai)$
 to the mechanism.
\State \textbf{Mechanism:}
\label{lin:nofalsemechstart} 
\State \indent For each agent $i\in[m]$: \Comment{can be done simultaneously for all
agents}
\State \indent \indent $\alloci\leftarrow\bigcup_{j\neq i} \subdataj$ \text{ \; if $|\subdatai|\geq \sigma/\sqrt{cm}$, }\quad
 $\alloci\leftarrow\emptyset$ \text{ otherwise. }
\State \indent \indent 
Return $\alloci$ to each agent.
\State \textbf{Each agent $i$:}
\State \indent Compute estimate $\estimi(\initdatai, \subdatai, \alloci)$.
\label{lin:nofalsemechend}
\end{algorithmic}
\end{algorithm}
}

\newcommand{\insertAlgoVTwo}{
\begin{algorithm}[t]
\caption{$\mechpk$}
\label{alg:noest}
\begin{algorithmic}[1]
\Require Approximation parameter $\epsilon > 0$ 
\Comment{to obtain a $1+\epsilon$ bound on $\pos$.}
\State \textbf{Mechanism designer publishes:}
\quad The procedure in lines~\ref{lin:noestmechstart} --\ref{lin:noestmechend}.
\State \textbf{Each agent $i$:}
\State \indent Choose strategy $s_i = (n_i, \subfunci)$.
\State \indent Sample $n_i$ points
$\initdatai=\{\xiijj{i}{j}\}_{j=1}^{n_i}$ and submit $\subdatai=\subfunci(\initdatai)$
 to the mechanism.
\State \textbf{Mechanism:}
\label{lin:noestmechstart}
\State \indent  
$k_\epsilon \leftarrow \lceil \frac{1}{2\epsilon} \rceil$,
\quad 
$\beta_\epsilon \leftarrow 
\sqrt{\frac{\prn*{\sum_{i=1}^m\abr{Y_i}}^2(m-1)^{k_\epsilon-1}}
{k_\epsilon(2k_\epsilon-1)!!\sigma^{k_\epsilon}c^{\frac{k_\epsilon-2}{2}}m^{3k_\epsilon/2}}}$
\label{lin:epsilonparams}
\State \indent \textbf{For} each agent $i\in[m]$: \Comment{can be done simultaneously for all
agents}
\State \indent \indent $\subdatami \leftarrow \bigcup_{j\neq i} \subdataj$.
\State \indent \indent $ 
\etaisq \leftarrow 
\beta_\epsilon^2 \left(\frac{1}{\abr{\subdatai}}\sum_{y\in\subdatai}y
-
\frac{1}{\abr{\subdatami}}\sum_{y\in \subdatami}y
\right)^{2k_\epsilon}
$.
\label{lin:noestetaisq}
\State \indent \indent $\datai \leftarrow \{z + \epsilon_{z,i},
\,\text{ for all }\, z\in \subdatami
\,\text{ where }\,
\epsilon_{z,i} \sim \Ncal(0, \etaisq) \}$
\label{lin:noestdatacorruption}
\State \indent \indent
Deploy estimate
$\Big(\frac{1}{|\subdatai \cup \datai|} \sum_{u\in \subdatai \cup \datai} u\,\Big)$
for  agent $i$.
\label{lin:noestmechend}
\end{algorithmic}
\end{algorithm}
}

\newcommand{\insertAlgoVThree}{
\begin{algorithm}[t]
\caption{$\mechany$}
\label{alg:main}
\begin{algorithmic}[1]
\State \textbf{Mechanism designer publishes:}
\State \indent 
The allocation space $\allocspace = \bigcup_{n\geq 0}\RR^n \times \bigcup_{n\geq 0}\RR^n \times
\RR_+$, and the procedure in lines~\ref{lin:mainmechstart}--\ref{lin:mainmechend}.
\State \textbf{Each agent $i$:}
\State \indent Choose strategy $s_i = (n_i, \subfunci, \estimi)$.
\Comment{See~\eqref{eqn:mainsopti} for recommended strategies.}
\State \indent Sample $n_i$ points
$\initdatai=\{\xiijj{i}{j}\}_{j=1}^{n_i}$ and submit $\subdatai=\subfunci(\initdatai)$
 to the mechanism.
\State \textbf{Mechanism:}
\label{lin:mainmechstart}

\State \indent \textbf{For} each agent $i\in[m]$: \Comment{can be done simultaneously for all
agents}
\State \indent \indent $\subdatami \leftarrow \bigcup_{j\neq i} \subdataj$.
\State \indent \indent \textbf{If} $m\leq 4$:
\Comment{Simply pool and return all of the other agents' data to agent $i$.}
\State \indent \indent \indent
    $\alloci \leftarrow (\subdatami, \emptyset, 0)$.
Return $\alloci$ to agent $i$.
\State \indent \indent \textbf{Else}:
\State \indent \indent \indent
    $\datai\leftarrow \text{sample $\min\{|\subdatami|, \,\sigma/\sqrt{cm}\}$ points in $\subdatami$ without replacement}$. \hspace{-1.2in}
\label{lin:maindatai}
\State \indent \indent \indent $ 
\etaisq \leftarrow 
\alpha^2 \left(\frac{1}{\abr{\subdatai}}\sum_{y\in\subdatai}y
-
\frac{1}{\abr{\datai}}\sum_{z\in \datai} z
\right)^2
$
\Comment{See~\eqref{eqn:mainalpha} for $\alpha$.}
\label{lin:mainetaisq}
\State \indent \indent \indent $\datai' \leftarrow \{z + \epsilon_{z,i},
\,\text{ for all }\, z\in \subdatami\backslash \datai
\,\text{ where }\,
\epsilon_{z,i} \sim \Ncal(0, \etaisq) \}$
\label{lin:maindatacorruption}
\State \indent \indent \indent $\alloci \leftarrow (\datai, \datai', \etaisq)$.
Return $\alloci$ to agent $i$.
\label{lin:mainmechend}
\State \textbf{Each agent $i$:}
\State \indent Compute estimate $\estimi(\initdatai, \subdatai, \alloci)$.
\Comment{See~\eqref{eqn:mainsopti} for recommended estimator.}
\end{algorithmic}
\end{algorithm}
}

\begin{abstract}
We study collaborative normal mean estimation, where $m$ strategic agents collect i.i.d samples from a normal distribution $\mathcal{N}(\mu, \sigma^2)$ at a cost. They all wish to estimate the mean $\mu$. By sharing data with each other, agents can obtain better estimates while keeping the cost of data collection small. To facilitate this collaboration, we wish to design mechanisms that encourage agents to collect a sufficient amount of data and share it truthfully, so that they are all better off than working alone. In naive mechanisms, such as simply pooling and sharing all the data, an individual agent might find it beneficial to under-collect and/or fabricate data, which can lead to poor social outcomes. We design a novel mechanism that overcomes these challenges via two key techniques: first, when sharing the others' data with an agent, the mechanism corrupts this dataset proportional to how much the data reported by the agent differs from the others; second, we design minimax optimal estimators for the corrupted dataset. Our mechanism, which is Nash incentive compatible and individually rational, achieves a social penalty (sum of all agents' estimation errors and data collection costs) that is at most a factor 2 of the global minimum. When applied to high dimensional (non-Gaussian) distributions with bounded variance, this mechanism retains these three properties, but with slightly weaker results. Finally, in two special cases where we restrict the strategy space of the agents, we design mechanisms that essentially achieve the global minimum.
\end{abstract}

\section{Introduction}
\label{sec:intro}

With the rise in popularity of machine learning,
data is becoming an increasingly valuable resource for businesses,
scientific organizations, and government institutions.
However, data collection is often costly.
For instance, to collect data,
businesses may need to carry out market research,
scientists may need to conduct experiments,
and government institutions may need to perform surveys on public services. 
However, once data has been generated, it can be freely replicated and used by many
organizations~\citep{jones2020nonrivalry}.
Hence, instead of simply collecting and learning from their own data,
by sharing data with each other, organizations can mutually reduce their own data collection costs
and improve the utility they derive from data~\citep{kairouz2021advances}.
In fact, there are already several platforms to facilitate data sharing
among businesses~\citep{googleads,zheng2019helen},
scientific organizations~\citep{pubchem,cancer2013cancer},
and public institutions~\citep{flores2021federated,sheller2019multi}.



However, simply pooling everyone's data and sharing with each other can lead to
free-riding~\citep{karimireddy2022mechanisms,sim2020collaborative}. 
For instance, if an agent (e.g an organization)
sees that other agents are already  contributing a large amount of data,
then, the cost she incurs to collect her own dataset
may not offset the marginal improvement in \emph{her own} learned model
due to diminishing returns of increasing dataset sizes
(we describe this rigorously in~\S\ref{sec:setup}).
Hence, while she benefits from others' data,
she has no incentive to collect and contribute data to the pool.
A seemingly simple fix to this free-riding problem is to only return the datasets of the others
if an agent submits a large enough dataset herself.
However, this can be easily manipulated by a strategic agent
who submits a large fabricated (fake)
dataset without incurring any cost, receives the others' data, and then discards
her fabricated dataset when learning.
While the agent has benefited by this bad behavior, other agents who may use this
fabricated dataset are worse off.
Moreover, a naive test by the mechanism to check if the agent has fabricated data can be sidestepped by agents who collect only a small dataset and fabricate a larger dataset using this small dataset (e.g by fitting a model to the small dataset and then sampling from this fitted model).


In this work, we study these challenges in data sharing in
one of the most foundational statistical problems,
normal mean estimation, where the goal is to estimate the mean $\mu$ of a normal distribution
$\Ncal(\mu,\sigma^2)$ with known variance $\sigma^2$.
We wish to design \emph{mechanisms} for data sharing
 that satisfy the three fundamental desiderata of mechanism
design;
\emph{Nash incentive compatibility (NIC):} agents have incentive to collect a sufficiently large
amount of data and share it truthfully provided that all other agents are doing so;
\emph{individual rationality (IR):} agents are better off participating in the mechanism
than working on their own;
and
\emph{efficiency:} the mechanism leads to outcomes with small estimation error and data
collection costs for all agents.

\textbf{Contributions:}
\emph{(i)} 
In~\S\ref{sec:setup}, we formalize collaborative normal mean estimation in the presence of strategic
agents.
\emph{(ii)}
In~\S\ref{sec:onedgauss}, we design an NIC and IR mechanism for this problem to prevent
free-riding and data fabrication and show that its social penalty,
i.e sum of all agents' estimation errors and data collection costs, is at most twice that
of the global minimum.
\emph{(iii)} 
In Appendix~\ref{sec:general}, we 
study the same mechanism in
high dimensional settings and relax the Gaussian assumption to distributions with bounded
variance. We show that the mechanism retains its properties, with only a slight weakening of the NIC and efficiency guarantees.
\emph{(iv)} 
In~\S\ref{sec:specialcases}, we consider two special cases where we impose natural
restrictions on the agents' strategy space.
We show that it is possible to design mechanisms which essentially achieve the global
minimum social penalty in both settings.
Next, we will summarize our primary mechanism and the associated theorem in~\S\ref{sec:onedgauss}.

\subsection{Summary of main results}
\label{sec:contributions}


\emph{Formalism:}
We assume that all agents have a fixed cost for collecting one sample,
and define an agent's penalty (negative utility)
as the sum of her estimation error and the cost she incurred to collect data.
To make the problem well-defined,
for the estimation error, we find it necessary to consider the
\emph{maximum risk}, i.e
maximum
expected error over all $\mu\in\RR$. 
A mechanism asks agents to collect data, and then shares the data among the agents in an
appropriate manner to achieve the three desiderata.
An agent's strategy space consists of three components:
how much data she wishes to collect,
what she chooses to submit after collecting the data,
and how she estimates the mean $\mu$ using the dataset she collected,
the dataset she submitted, and the information she received from the mechanism.

\emph{Mechanism and theoretical result:}
In our mechanism,  which we call \mechacronym{} (Cross-Check and Corrupt based on Difference),
each agent $i$ collects a dataset $\initdatai$ and submits a possibly fabricated
or altered version $\subdatai$ to the mechanism.
The mechanism then determines agent $i$'s allocation in the following manner.
It pools the data from the other agents and splits them into two subsets $\datai,
\datai'$.
Then, $\datai$ is returned as is, while $\datai'$ is corrupted by adding noise that is
proportional to the difference between $\subdatai$ and $\datai$.
If an agent collects less or fabricates, 
she risks looking different to the others, and will receive a dataset $\datai'$ of poorer quality.
We show that this mechanism has a Nash equilibrium where all agents collect a sufficiently
large amount of data, submit it truthfully, and use a carefully weighted average of the 
three datasets $\initdatai, \datai,$ and $\datai'$ as their estimate for $\mu$.
The weighting uses some additional side information that the mechanism provides to each agent.
Below, we state an informal version of the main theoretical result of this paper,
which summarizes the properties of our mechanism.

\emph{%
\textbf{Theorem~\ref{thm:main} (informal):}
The above mechanism is Nash incentive compatible, individually rational, and achieves a social
penalty that is at most twice the globally  minimum social penalty.
}

Corruption is the first of two ingredients to achieving NIC.
The second is the design of the weighted average estimator which is
(minimax) optimal after corruption.
To illustrate why this is important, say that
the mechanism had assumed that the agents will use any other sub-optimal estimator (e.g a simple
average). Then it will need to lower the amount of corruption to ensure IR and efficiency.
However, a strategic agent 
will realize that she can achieve a lower maximum risk with a better estimator (instead of collecting more data herself and/or receiving less corrupt data from the mechanism).
  She can leverage this insight to collect less data and lower her overall penalty.

\emph{Proof techniques:}
The most challenging part of our analysis is to show NIC,
First, to show minimax optimality of our estimator,
we construct a sequence of normal priors for $\mu$ and show that the
minimum Bayes' risk converges to the maximum risk of the weighted average estimator.
However, when compared to typical minimax proofs, we face more significant challenges.
The first of these is that the combined dataset $\initdatai\cup\datai\cup\datai'$ is neither
independent nor identically distributed as the corruption is data-dependent.
The second is that the agent's submission $\subdatai$ also determines the degree of corruption, so
we cannot look at the  estimator in isolation when computing the minimum Bayes'
risk; we should also consider the space of functions an agent may use to determine $\subdatai$ from
$\initdatai$.
The third is that the expressions for the minimum Bayes' risk do not have closed form solutions and
require non-trivial algebraic manipulations.
To complete the NIC proof, we show that due to the carefully chosen amount of corruption,
the agent should collect a sufficient amount of data to avoid excessive corruption, but not too
much so as to increase her data collection costs.

\subsection{Related Work}
\label{sec:relwork}

Mechanism design is one of the core areas of research in 
game theory~\citep{vickrey1961auction,groves1979efficient,clarke1971multipart}.
Our work here is more related to mechanism design without payments,
which has seen applications in fair division~\cite{procaccia2013cake}, matching
markets~\citep{roth1986allocation},
and kidney exchange~\citep{roth2004kidney} to name a few.
There is a long history of work in the intersection of machine learning and
mechanism design, although the overwhelming majority apply learning techniques
when there is incomplete information about the mechanism
or agent preferences, (e.g ~\citep{amin2013learning,mansour2015bayesian,athey2013efficient,%
nazerzadeh2008dynamic,kakade2010optimal}).
On the flip side, some work have designed data marketplaces, where customers
may purchase data from
contributors~\citep{agarwal2020towards,agarwal2019marketplace,jia2019towards,wang2020principled}.
These differ from our focus where we wish to incentivize agents to collaborate without payments.

Due to the popularity of shared data
platforms~\citep{pubchem,flores2021federated,sheller2019multi,googleads}
and federated learning~\citep{kairouz2021advances},
there has been a recent interest in designing mechanisms for data sharing.
\citet{sim2020collaborative} and~\citet{xu2021gradient} study fairness in
collaborative data sharing, where the goal is to reward agents according to the amount of data they
contribute.
However, their mechanisms do not apply when strategic agents may try to manipulate a mechanism.
\citet{blum2021one} and \citet{karimireddy2022mechanisms} study
collaboration in federated learning.
However, the strategy space of an agent is restricted to how much data they collect and their
mechanism rewards each agent according to the quantity of the data she submitted.
The above four works recognize that free-riding can be detrimental to data sharing, but assume that
agents will not fabricate data.
As discussed above, if this assumption is not true, agents can easily manipulate such mechanisms.
\citet{fraboni2021free} and~\citet{lin2019free} study federated learning settings where free-riders
may send in fabricated gradients without incurring the computational cost of computing the gradients.
However, their focus is on designing gradient descent algorithms that are robust to such attacks and
not on incentivizing agents to perform the gradient computations.
Some work have designed mechanisms for federated learning so as to elicit private information (such
as data collection costs), but their focus is not on preventing free-riding or
fabrication~\citep{ding2020incentive,liu2022contract}.
Miller et al.~\cite{miller2005eliciting} uses scoring systems to develop mechanisms that prevent signal fabrication. However, the agents in their settings can only choose to report either their true signal or something else but can not freely choose how much data to collect.
Cai et al.~\cite{cai2015optimum} study mechanism design where a learner incentivizes agents to collect data via payments.
Their mechanism, which also cross-checks the data submitted by the agents, has connections to our setting in~\S\ref{sec:onednoest} where we consider a restricted strategy space for the agents.


Our approach of using corruption to engender good behaviour
draws inspiration from the robust estimation 
literature, which design estimators that are
robust to data from malicious agents~\citep{diakonikolas2016robust, lugosi2021robust,chen2023byzantine}.
However, to the best of our knowledge,
 the specific form of corruption and the subsequent design of the minimax optimal estimator
are new in this work, and require novel analysis techniques.


\section{Problem Setup}
\label{sec:setup}

We will now formally define our problem.
We have $m$ agents, who are each able to collect i.i.d samples from a normal distribution
$\Ncal(\mu,\sigma^2)$, where $\sigma^2$ is known.
They wish to estimate the mean $\mu$ of this distribution.
To collect one sample, the agent has to incur a cost $c$.
We will assume that $\sigma^2$, $c$, and $m$ are public information.
However, $\mu\in\RR$ is unknown, and no agent has auxiliary information, such as a prior, about
$\mu$.
An agent wishes to minimize her estimation error, while simultaneously
 keeping the cost of data collection low.
While an agent may collect data on her own to manage this trade-off,  by
sharing data with other agents, she can reduce costs while simultaneously
improving her estimate.
We wish to design mechanisms to facilitate such sharing of data.


\vspace{0.1in}\noindent
\textbf{Mechanism:}
A mechanism receives a dataset from each agent, and in turn returns an \emph{allocation}
$\alloci$ to each agent.
An agent will use her allocation to estimate $\mu$.
This allocation could be, for instance, a larger dataset obtained with other agents' datasets.
The mechanism designer is free to choose a space of allocations $\allocspace$ to
achieve the desired goals.
Formally, we define a mechanism as a tuple $M=(\allocspace, \mechfunc)$ where $\allocspace$ denotes the
space of allocations,
and $\mechfunc$ is a procedure to
map the datasets collected from the $m$ agents to $m$ allocations.
Denoting the universal set by $\Ucal$, we write the space of mechanisms $\mechspace$ as
\begin{align*}
\mechspace = \big\{
M=(\allocspace, \,b):\;\;
\allocspace \subset \Ucal, \;\;
b:(\textstyle \bigcup_{n\geq 0} \RR^{n} 
)^m
\rightarrow \allocspace^m
\big\}.
\numberthis
\label{eqn:mechdefn}
\end{align*}
As is customary, we will assume that the mechanism designer will publish 
the space of allocations $\allocspace$
and the mapping $b$ (the procedure used to obtain the allocations) ahead of
time, so that agents can determine their strategies.
However, specific values computed/realized during the execution of the mechanism are not revealed,
unless the mechanism chooses to do so via the allocation $\alloci$.

\vspace{0.1in}\noindent
\textbf{Agents' strategy space:}
Once the mechanism is published, the agent will choose a strategy.
In our setting, this will be the tuple $(n_i, \subfunci, \estimi)$, which determines
how much data she wishes to collect,
what she chooses to submit, and how she wishes to estimate the mean $\mu$.
First, the agent samples $n_i$ points to collect her initial dataset
$\initdatai = \{\xiijj{i}{j}\}_{j=1}^{n_i}$, where $\xiijj{i}{j}\sim \Ncal(\mu, \sigma^2)$,
incurring $c n_i$ cost.
She then submits $\subdatai = \{\yiijj{i}{j}\}_{j} = \subfunci(\initdatai)$ to the
mechanism. Here $\subfunci$ is a function which maps the collected
dataset to a possibly fabricated or falsified dataset of a potentially different size.
In particular, this fabrication can depend on the data she has collected.
For instance, the agent could collect only a small dataset, fit a Gaussian, and then sample from it. 

Finally, the mechanism returns the agent's allocation $\alloci$,
and the agent computes an estimate $\estimi(\initdatai, \subdatai, \alloci)$ for $\mu$ using
 her initial dataset $\initdatai$,
the dataset she submitted $\subdatai$,
and the allocation she received $\alloci$.
We include $\subdatai$ as part of the estimate since an agent's submission may affect the allocation
she receives. Consequently, agents could try to elicit additional
information about $\mu$ via a carefully chosen $\subdatai$.
We can write the strategy space of an agent
as $\Scal = \NN \times \subfuncspace \times \estspace$,
where $\subfuncspace$ is the space of
functions mapping the dataset collected to the dataset submitted, and
$\estimspace$ is the space of all
estimators  using all the information she has.
We have:
\begin{equation}
\subfuncspace = \big\{\subfunc:\textstyle \bigcup_{n\geq 0} \RR^n \rightarrow \textstyle\bigcup_{n\geq 0}
\RR^n\big\},
\hspace{0.4in}
\estspace = \big\{\estim:\textstyle\bigcup_{n\geq 0} \RR^n \; \times \; \textstyle\bigcup_{n\geq 0} \RR^n 
\;\times\; \allocspace \;  \rightarrow \RR\big\}.
\label{eqn:subestspaces}
\end{equation}
One element of interest in $\subfuncspace$ is the identity $\identity$ which maps a dataset to
itself.
A mechanism designer would like an agent to use $\subfunci=\identity$, 
i.e to submit the data
that she collected as is, so that other agents can benefit from her data.
%

Going forward, when $s= \{s_i\}_{i} \in \Scal^m$ denotes the strategies of all agents,
we will use
 $\smi = \{\sj\}_{j\neq i}$ to denote the strategies of all agents except $i$.
Without loss of generality, we will assume that agent strategies are deterministic. If they are stochastic, our results will carry through for every realization of any external source of randomness that the agent uses to determine $(n_i, \subfunci, \estimi)$.



\vspace{0.1in}\noindent
\textbf{Agent penalty:}
The agent's \emph{penalty} $\penali$ (i.e negative utility) is the sum of her
squared estimation error
and the cost $cn_i$ incurred to collect her dataset $\initdatai$ of $n_i$ points.
The agent's penalty depends on the mechanism $M$ and the strategies $s=\{s_j\}_j$ of all the agents.
Making this explicit, $\penali$ is defined as:
\begin{align*}
\penali(M, s) =
\sup_{\mu\in\RR} \EE\left[(\estimi(\initdatai,\subdatai,\alloci) -
\mu)^2\,\,\Big|\,
\mu \right]  \;+\; c n_i
\numberthis
\label{eqn:penalty}
\end{align*}
The term inside the expectation is the squared difference between the agent's estimate  and the
true mean (conditioned on the true mean $\mu$).
The expectation is with respect to the randomness of all agents' data and possibly
any randomness in the mechanism.
We consider the \emph{maximum risk}, 
i.e supremum over $\mu\in\RR$, since the true mean $\mu$ is unknown to the agent a
priori, and their strategy should yield good estimates, and hence small penalty, over all possible
values $\mu$.
To illustrate this further, note that when the value of true mean $\mu$ is $\mu'$,
the optimal strategy for an agent
will always be to not collect any data and choose the estimator
$\estimi(\cdot,\cdot,\cdot) = \mu'$ leading to $0$ penalty.
However, this strategy can be meaningfully realized 
by an agent 
only if she knew that $\mu=\mu'$ a priori which renders the problem meaningless\footnote{%
This is akin to the reason why it is customary to study the maximum risk in frequentist
statistics~\citep{lehmann2006theory,wald1939contributions}.
An alternative approach is
to take a Bayesian view, considering a prior on $\mu$ and using the Bayes' risk
$\EE_\mu[\EE[(\estimi(\initdatai,\subdatai,\alloci) - \mu)^2|\mu]]$
instead of the maximum risk in $\penali$.
While we have adopted a frequentist formalism here,
our main proof ideas can be ported over to the Bayesian setting as well (See Appendix~\ref
{sec:bayes setting} for more details)
}.
Considering the maximum risk accounts for the fact that $\mu$ is unknown
and makes the problem well-defined.

%



\vspace{0.1in}\noindent
\textbf{Recommended strategies:}
In addition to publishing the mechanism,
the mechanism designer will recommend
 strategies $\sopt= \{\sopti\}_{i} \in \Scal^m$ for the agents
so as to incentivize collaboration and induce optimal social outcomes.



\vspace{0.1in}\noindent
\textbf{Desiderata:}
We can now define the three desiderata for a mechanism:
\vspace{-0.1in}
\begin{enumerate}[leftmargin=0.2in]
\item \emph{Nash Incentive compatibility (NIC):}
A mechanism $M=(\allocspace,\mechfunc)$ is said to be NIC at the recommended strategy profile $\sopt$
if, for each agent $i$, and for every other alternative strategy
$s_i\in\Scal$ for that agent, we have
$\penali(M, \sopt) \leq \penali(M, (s_i, \soptmi))$.
That is, $\sopt$ is a Nash equilibrium so no agent has incentive to deviate if all other agents are
following $\sopt$.
\item \emph{Individual rationality (IR):}
We say that a mechanism $M$ is IR at $\sopt$ if no agent suffers from a higher penalty by participating in the mechanism
than the lowest possible penalty she could achieve on her own 
when all other agents are following $\sopt$. 
If an agent does not participate, she does not
submit nor receive any data from the mechanism; she will simply choose how much data to collect and
design the best possible estimator.
Formally, we say that a mechanism $M$ is IR if the following is true for each agent $i$:
\vspace{-0.05in}
\begin{align*}
\penali(M, \sopt) \leq \inf_{n_i \in \NN, \,\estimi \in \estspace}\,
\left\{
\sup_{\mu \in \RR}\,
\EE\left[(\estimi(\initdatai,\emptyset,\emptyset) - \mu)^2\,|\,\mu\right]  \;+\; c n_i    \right\}.
\numberthis
\label{eqn:ir}
\end{align*}
\item \emph{Efficiency:}
The \emph{social penalty} $\socpenal(M, s)$ of a mechanism $M$ 
when agents follow strategies $s$, is the sum
of agent penalties (defined below).
We define  $\pos(M,\sopt)$ to be the ratio between the
 social penalty of a mechanism at the recommended strategies $\sopt$, and the
lowest possible social
penalty among all possible mechanisms and strategies
(\emph{without} NIC or IR constraints).
We have:
\begin{align*}
\socpenal(M, s) = \sum_{i\in[m]} \penali(M, s),
\hspace{0.4in}
     \pos(M, \sopt) = \frac{\socpenal(M, \sopt)}{\displaystyle\inf_{M'\in\mechspace, \,s\in \Scal^m} \socpenal(M', s)}
     \numberthis
     \label{eqn:pos}
\end{align*}
 Note that $\pos \geq 1$.
 We say that a mechanism is efficient if $\pos(M,\sopt) = 1$ and that it is approximately efficient if
 $\pos(M,\sopt)$ is bounded by some constant that does not depend on $m$.
If $\sopt$ is a Nash equilibrium, then $\pos(M,\sopt)$ can be viewed as an upper bound on the price of
stability~\citep{anshelevich2008price}.
\end{enumerate}

For what follows, we will discuss optimal strategies for agents working on her own and present
a simple mechanism which minimizes the social penalty, but has a poor Nash equilibrium.

\vspace{0.1in}\noindent
\textbf{Optimal strategies for an agent working on her own:}
Recall that, given $n$ samples $\{x_i\}_{i=1}^n$ from
$\Ncal(\mu,\sigma^2)$, the sample mean is a minimax optimal estimator~\citep{lehmann2006theory}; i.e
among all possible estimators $h$, the sample mean minimizes the
maximum risk $\sup_{\mu\in\RR}\EE[(\mu-h(\{x_i\}_{i=1}^n, \emptyset, \emptyset))^2\,|\,\mu]$
(note that the agent only has the dataset she collected).
Moreover, its mean squared error is $\sigma^2/n$ for all $\mu\in\RR$.
Hence, an agent acting on her own will choose the sample mean and collect $n_i=\sigma/\sqrt{c}$
samples so as to minimize their penalty; as long as the amount of data is less than
$\sigma/\sqrt{c}$, an agent has incentive to collect more data since the 
cost of
collecting one more point is offset by the marginal decrease in estimation error.
This can be seen via the following simple calculation:
\vspace{-0.02in}
\begin{align*}
\numberthis
\label{eqn:indbestutil}
\inf_{\substack{n_i \in \RR \\ \estimi \in \estspace}}  \Big(
    \sup_\mu\EE\left[(\estimi(\initdatai,\emptyset,\emptyset) - \mu)^2 \,\, \Big| \,\mu\right] 
    \;+\; c n_i    \Big)
= \min_{n_i \in \RR} \Big(\frac{\sigma^2}{n_i} + cn_i \Big)
= 2 \sigma\sqrt{c} \;\;\defeq \,\pminir\,. 
\end{align*}
Let $\pminir=2\sigma\sqrt{c}$ denote the lowest achievable penalty by an agent working on her own. If
all $m$ agents work independently, then the total social penalty is $m\pminir=2\sigma m\sqrt{c}$.
Next, we will look at a simple mechanism and an associated set of strategies which achieve
the global minimum penalty.
This will show that it is
possible for all agents to achieve a significantly lower penalty via collaboration.


\vspace{0.1in}\noindent
\textbf{A globally optimal mechanism \emph{without} strategic considerations:}
The following simple mechanism $\mechpool$,
pools all the data from the other
agents and gives it back to an agent.
Precisely, it chooses the space of allocation $\allocspace=\bigcup_{n\geq 0} \RR^n$ to be
datasets of arbitrary length,
and sets agent $i$'s allocation to be $\alloci=\bigcup_{j\neq i} \subdatai$.
The recommended strategies
$\soptpool = \{(\noptpooli,\subfuncoptpooli,\estimoptpooli)\}_{i}$
asks each agent to collect $\noptpooli=\sigma/\sqrt{cm}$ points\footnote{%
To avoid rounding effects, henceforth
we will treat $\sigma/\sqrt{cm}$, and $\sigma/\sqrt{c}$ as integers. 
}, submit it as is $\subfuncoptpooli=\identity$,
and use the sample mean of all points  as her estimate
$\estimoptpooli(\initdatai, \initdatai, \alloci) = \frac{1}{|\initdatai \cup \alloci|} \sum_{z\in \initdatai
\cup \alloci} z$.
It is straightforward
to show that this minimizes the social penalty if all agents follow $\soptpool$.
After each agent has collected their datasets $\{\initdatai\}_i$, the social
penalty is minimized if all agents have access to each other's datasets and they all use a minimax
optimal estimator:
this justifies using $\mechpool$ with $\subfuncoptpooli=\identity$ and setting $\estimoptpooli$ to
be the sample mean.
The following simple calculation justifies the choice of $\sum_i\noptpooli$:
\[
\inf_{s\in\Scal^m} \sum_{i=1}^m\left(
\sup_\mu \EE\left[(\estimi(\initdatai,\subfunci,\postedit{\alloci}) - \mu)^2\,\,\Big|\,
\mu\right]  + c n_i    \right) =
\min_{\{n_i\}_i} \left( \frac{m \sigma^2}{\sum_i n_i}
+ c\sum_i n_i \right)
= 2\sigma \sqrt{mc}.
\]
However, 
$\soptpool$ is not a Nash equilibrium of this mechanism,
as an agent will find it beneficial to free-ride.
If all other agents are submitting $\sigma/\sqrt{cm}$ points,
by collecting no points, an agent's
penalty is $\sigma\sqrt{mc}/(m-1)$, as she does not incur any data collection cost.
This is strictly smaller than
$2\sigma\sqrt{c/m}$ when $m\geq 3$.
In fact, it is not hard to show that
 $\mechpool$ is at a Nash equilibrium only when the total amount of data is
$\sigma/\sqrt{c}$; for additional points,
 the marginal reduction in the estimation error for an individual agent
does not offset her data collection costs. 
The social penalty at these equilibria is $\sigma\sqrt{c}(m+1)$ which is significantly larger
than the global minimum when there are many agents.

A seemingly simple way to fix this mechanism is to only return the datasets of the other agents if an
agent submits at least $\sigma/\sqrt{cm}$ points.
However, as we will see in~\S\ref{sec:onednofalse}, such a mechanism can also be manipulated
by an agent who submits a fabricated dataset of $\sigma/\sqrt{cm}$
points without actually collecting any data and incurring any cost and then discarding the
fabricated dataset when estimating.
Any  naive test to check for the quality of the
data can also be sidestepped by agents who sample  only a few points, and use that to
fabricate a larger dataset (e.g 
by sampling a large number of points from a Gaussian fitted to the small sample). 
Next, we will present our mechanism for this problem which satisfies all three desiderata.

\section{Method and Results}
\label{sec:onedgauss}
\vspace{-0.1in}

\insertAlgoVThree

We have outlined our mechanism $\mechany$, and its interaction with the agents in
Algorithm~\ref{alg:main} in the natural order of events.
We will first describe it procedurally, and then motivate our design choices.
Our mechanism uses the following allocation space,
$\allocspace=\bigcup_{n\geq 0}\RR^n \times \bigcup_{n\geq 0}\RR^n \times \RR_+$.
An allocation $\alloci=(\datai, \datai', \etaisq)\in\allocspace$ consists of 
an uncorrupted dataset $\datai$, 
a corrupted dataset $\datai'$, 
and the variance $\etaisq$  of the noise added to $\datai'$ for corruption.
Once the mechanism and the allocation space are published,
agent $i$ chooses her strategy $s=(n_i, \subfunci,\estimi)$.
She collects a dataset
$\initdatai = \{x_{i,j}\}_{j=1}^{n_i}$, where $x_{i,j}\sim\Ncal(\mu, \sigma^2)$,
and submits $\subdatai = \subfunci(\initdatai)$ to the mechanism.

Our mechanism determines agent $i$'s allocation as follows.
Let $\subdatami$ be the union of all datasets submitted by the other agents.
If there are at most four agents, we simply return all of the other agents'
data without corruption by setting $\alloci \leftarrow (\subdatami, \emptyset, 0)$.
If there are more agents,
the mechanism first
chooses a random subset of size $\min\{|\subdatami|, \,\sigma/\sqrt{cm}\}$
from $\subdatami$; 
denote this $\datai$.
In line~\ref{lin:mainetaisq}, the mechanism individually
adds Gaussian noise to the remaining points
$\subdatami\backslash\datai$ to obtain $\datai'$
(line~\ref{lin:maindatacorruption}).
The variance $\etaisq$ of the noise depends on
the difference between the
sample means of the subset $\datai$ and the agent's submission $\subdatai$.
It is modulated by a value $\alpha$, which is a
function of $c$, $m$, and $\sigma^2$.
Precisely, $\alpha$ is the smallest number larger than $\sqrt{\sigma}(cm)^{-1/4}$
which satisfies  $G(\alpha) = 0$, where:
\begingroup\makeatletter\def\f@size{6}\check@mathfonts
\def\maketag@@@#1{\hbox{\m@th\large\normalfont#1}}%
\begin{align*}
\hspace{-0.1in}
G(\alpha) := &
{\prn*{\frac{m-4}{m-2}\frac{4\alpha^2}{\sigma/\sqrt{cm}}-1}\frac{4\alpha}{\sqrt{\sigma}(m/c)^{1/4}}}
-
\prn*{{4(m+1)\frac{\alpha^2}{\sigma\sqrt{m/c}}-1}}\sqrt{2\pi}\exp\prn*{\frac{\sigma\sqrt{m/c}}{8\alpha^2}}
\erfc\prn*{\frac{\sqrt{\sigma}(m/c)^{1/4}}{2\sqrt{2}\alpha}}
\numberthis\label{eqn:mainalpha}
\end{align*}
\endgroup
Finally, the mechanism returns the allocation $\alloci=(\datai, \datai', \etaisq)$ to agent $i$
and the agent estimates $\mu$.


\vspace{0.1in}\noindent
\emph{Recommended strategies:}
The recommended strategy $\sopti=(\nopti,\subfuncopti,\estimopti)$
for agent $i$ is given in~\eqref{eqn:mainsopti}.
The agent should
 collect $\nopti=\sigma/(m\sqrt{c})$ samples if there are at most four agents,
and $\nopti=\sigma/\sqrt{cm}$ samples otherwise.
She should submit it without fabrication or alteration $\subfunci=\identity$,
 and then use a weighted average of the datasets $(\initdatai,\datai,\datai')$ to estimate $\mu$.
The weighting is proportional to the inverse variance of the data.
For $\initdatai$ and $\datai$ this is simply $\sigma^2$, but for $\datai'$,
the variance is $\sigma^2 + \etaisq$ since the mechanism adds 
Gaussian noise with variance $\etaisq$.
We have:
\begin{align*}
\hspace{-0.3in}
&\nopti = 
\begin{cases}
\frac{\sigma}{m\sqrt{c}} & \text{ if } m\le4    \\
\frac{\sigma}{\sqrt{cm}} & \text{ if } m>4   
\end{cases}, \hspace{1.1in}
\subfuncopti = \identity, \\
&\estimopti(\initdatai, \subdatai, (\datai, \datai', \etaisq)) = 
\frac{ \frac{1}{\sigma^2}\sum_{u\in \initdatai\cup
\datai}u + \frac{1}{\sigma^2+\etaisq}\sum_{u\in \datai'}u}
{ \frac{1}{\sigma^2}{\abr{\initdatai\cup \datai'}}+ \frac{1}{\sigma^2+\etaisq}{\abr{\datai'}}}
\numberthis \label{eqn:mainsopti}
\end{align*}
\emph{Design choices:}
Next, we will describe our design choices and highlight some key challenges.
When $m\leq 4$, it is straightforward to show that the mechanism satisfies all our desired
properties (see beginning of~\S\ref{sec:proofsketch}), so we will focus on the case $m>4$.
First,  recall that the mechanism needs to incentivize agents to
collect a sufficient amount of samples.
However, simply counting the number of samples can be easily manipulated by an agent who
simply submits a fabricated dataset of a large number of points.
Instead, Algorithm~\ref{alg:main} attempts to infer the quality of the data submitted by the agents
using how well an agent's submission $\subdatai$ approximates $\mu$.
Ideally, we would set the variance $\etaisq$ of this corruption to be proportional to
the difference
$(\frac{1}{|\subdatai|}\sum_{y\in\subdatai}y - \mu)^2$, so that the more data she submits, the less
the variance of $\datai'$, which in turn yields a more accurate estimate for $\mu$.
However,
since $\mu$ is unknown, we use a subset $\datai$ obtained from other agents' data as a proxy for
$\mu$, and set $\etaisq$ proportional to
$\big(\frac{1}{|\subdatai|}\sum_{y\in\subdatai}y - \frac{1}{|\datai|}\sum_{z\in\datai}z\big)^2$.
If all agents are following $\sopt$, then $ |\subdatai| = |\datai| = \sigma/\sqrt{cm} = \nopti$;
it is sufficient
to use only $\nopti$ points for validating $\subdatai$ since both
$\frac{1}{|\subdatai|}\sum_{y\in\subdatai}y$ and
$\frac{1}{|\datai|}\sum_{z\in\datai}z$ will have the same order of error in approximating $\mu$.


The second main challenge
 is the design of the recommended estimator $\estimopti$.
%
In~\S\ref{sec:proofsketch} we show how
splitting $\subdatami$ into a clean and corrupted parts $\datai,\datai'$ 
allows us to design a minimax optimal estimator.
A minimax optimal estimator is crucial to achieving NIC.
To explain this, say that the mechanism  assumes that agents will use a sub-optimal estimator,
e.g  sample mean of  $\initdatai \cup \datai \cup \datai'$. 
Then, to account for the larger estimation error,
it will need to choose a lower level of corruption $\etaisq$ to minimize the social penalty.
However, a smart agent  will realize that she can achieve a lower maximum risk by using a better estimator, such as the weighted average, instead of collecting  more data in order to reduce the amount of corruption used by the mechanism.
  She can leverage this insight to collect less data and reduce \postedit{her} overall penalty.


This concludes the description of our mechanism.
The following theorem, which is the main theoretical result of this paper,
states that $\mechany$ achieves the three desiderata outlined in~\S\ref{sec:setup}.

\begin{theorem}
\label{thm:main}
Let $m>1$, 
$\alpha$ be as defined in~\eqref{eqn:mainalpha}, and
$\sopti$ be as defined in~\eqref{eqn:mainsopti}. 
Then,
the following statements are true about the mechanism \emph{$\mechany$} in
Algorithm~\ref{alg:main}. 
{(i)} The strategy profile $\sopt$ is a Nash equilibrium.
{(ii)} The mechanism is individually rational at $\sopt$.
{(iii)} The mechanism is approximately efficient, with
\emph{$\pos(\mechany, \sopt) \leq 2$}.
\end{theorem}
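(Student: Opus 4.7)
The analysis splits by the size of $m$. The case $m \le 4$ is direct: the mechanism returns every agent the uncorrupted pool $\subdatami$, so $\mechany$ coincides with $\mechpool$ from~\S\ref{sec:setup}, the recommended estimator reduces to the sample mean of $\initdatai\cup\subdatami$, and the total submitted sample size equals $\sigma/\sqrt{c}$. A direct calculation on the quadratic-plus-linear penalty $\sigma^2/n + cn$ shows that $\nopti = \sigma/(m\sqrt{c})$ is a Nash best response, that each per-agent penalty is below $\pminir$, and that the social penalty is within a factor of 2 of $2\sigma\sqrt{mc}$ for $m\in\{2,3,4\}$. I focus on $m>4$ below.

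For part (i), fix agent $i$ and condition on all other agents playing $\soptmi$, so that $\subdatami$ is an i.i.d.\ $\Ncal(\mu,\sigma^2)$ sample of size $(m-1)\sigma/\sqrt{cm}$ and $\datai$ is a uniform random subset of size $\sigma/\sqrt{cm}$. I would decompose the best-response problem into three sub-problems, solved in order: \textbf{(a)} for any fixed $(n_i,\subfunci)$, identify the estimator minimizing agent $i$'s maximum risk; \textbf{(b)} assuming that estimator is used, show that truthful submission $\subfunci=\identity$ minimizes the penalty over $\subfunci$; \textbf{(c)} optimize the resulting one-dimensional penalty over $n_i$ and verify that $\sigma/\sqrt{cm}$ is the minimizer. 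Step (b) reduces to showing $\mathbb{E}[\etaisq]$ is minimized at $\subfunci=\identity$: writing $\bar y_i$ for the mean of $\subdatai$ and $\bar z_i$ for the mean of $\datai$, a bias--variance decomposition conditional on $\initdatai$ gives $\mathbb{E}[(\bar y_i-\bar z_i)^2] = \mathbb{E}[(\bar y_i(\initdatai)-\mu)^2] + \sigma^2/|\datai|$, and the first term is minimized pointwise in $\mu$ by choosing $\bar y_i$ equal to the sample mean of $\initdatai$, which truthful submission realizes. Step (c) is then a one-dimensional optimization over a closed-form penalty obtained from (a); the explicit definition of $\alpha$ in~\eqref{eqn:mainalpha} is engineered precisely so that this minimizer lands at $\sigma/\sqrt{cm}$.

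The main obstacle is step (a). Because $\etaisq$ couples $\subdatai$ and $\datai$, the points in $\initdatai\cup\datai\cup\datai'$ are neither independent nor identically distributed, and classical minimax arguments for Gaussian location do not apply directly. My plan is a Bayes-limiting argument. Place the prior $\mu\sim\Ncal(0,\tau^2)$, and condition on the pair $(\subdatai,\datai)$ so that $\etaisq$ becomes a fixed scalar; then the remaining randomness in $\initdatai$ and $\datai'$ is jointly Gaussian with variances $\sigma^2$ and $\sigma^2+\etaisq$ respectively, and the Bayes-optimal estimator is its posterior mean, which is an inverse-variance weighted average of all available data and the prior mean. Taking $\tau\to\infty$ recovers $\estimopti$ from~\eqref{eqn:mainsopti}. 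The Bayes risk itself has no closed form because $\bar y_i - \bar z_i$ is Gaussian but enters $\etaisq$ quadratically; integrating over its conditional law produces the complementary error function, and the stationarity condition in $\alpha$ is precisely $G(\alpha)=0$ in~\eqref{eqn:mainalpha}. Since the risk of $\estimopti$ is constant in $\mu$ by translation invariance, the maximum risk equals this limiting Bayes risk, which establishes minimax optimality and hence step (a).

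Parts (ii) and (iii) then follow by plug-in. From the closed-form MSE of $\estimopti$ at $\sopt$ obtained in (a), I add the collection cost $c\nopti=\sigma\sqrt{c/m}$ and compare to $\pminir=2\sigma\sqrt{c}$ to get IR, then sum across agents and divide by the global minimum $2\sigma\sqrt{mc}$ established in~\S\ref{sec:setup} to get $\pos\le 2$. Both inequalities reduce to algebraic bounds on the ratio $\alpha/(\sqrt{\sigma}(cm)^{-1/4})$, which follow from the defining relation $G(\alpha)=0$ together with the lower-bound constraint $\alpha\ge\sqrt{\sigma}(cm)^{-1/4}$ imposed in~\eqref{eqn:mainalpha}. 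In short, the whole theorem rides on the minimax calculation in (a): the splitting of $\subdatami$ into $\datai,\datai'$, the data-dependent $\etaisq$, and the transcendental formula for $\alpha$ are all chosen so that this calculation closes with a constant-factor rather than $m$-dependent loss.
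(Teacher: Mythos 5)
Your overall architecture matches the paper's: the easy $m\le4$ case, a Bayes-limiting argument with Gaussian priors to identify the optimal estimator, a one-dimensional optimization over $n_i$ pinned down by $G(\alpha)=0$, and plug-in calculations for IR and efficiency. However, there is a genuine gap at precisely the step the paper flags as the central difficulty: the optimization over the submission function $\subfunci$.

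Your step (b) asserts that, once the optimal estimator is in place, choosing $\subfunci$ reduces to minimizing $\EE[\etaisq]$, and that $\EE[(\bar y_i(\initdatai)-\mu)^2]$ is ``minimized pointwise in $\mu$'' by the sample mean. Neither claim holds. First, the risk is $\EE\big[\big((|\initdatai|+|\datai|)/\sigma^2 + |\datai'|/(\sigma^2+\etaisq)\big)^{-1}\big]$, a nonlinear increasing function of $\etaisq$, so minimizing $\EE[\etaisq]$ neither implies nor is implied by minimizing the risk; one needs control of the whole distribution of $\bar y_i-\bar z_i$, not just its second moment. Second, for any fixed $\mu$ the quantity $\EE[(\bar y_i(\initdatai)-\mu)^2]$ is minimized by the constant map $\bar y_i\equiv\mu$, not by the sample mean, so there is no pointwise-in-$\mu$ optimality of truthful submission and the problem does not decouple into your (a) followed by (b). This is exactly why the paper optimizes over the pair $(\subfunci,\estimi)$ jointly under each prior $\Lambda_\ell=\Ncal(0,\ell^2)$: for each $\subfunci$ the Bayes-optimal estimator is the posterior mean, the resulting Bayes risk is the integral of an even increasing function of $e+\phi(\initdatai,\subfunci)/\tilde\sigma_\ell$ against a Gaussian density, and a corollary of the Hardy--Littlewood inequality shows this is minimized when $\phi=0$, i.e.\ when $\subfunci$ is a \emph{shrunk} version of the identity rather than the identity itself; only in the limit $\ell\to\infty$ does the optimal submission converge to $\identity$. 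Without a Bayes-risk lower bound that is uniform over all $\subfunci$, your argument does not rule out a profitable deviation in the submission function. Two smaller omissions: you never establish that $G(\alpha)=0$ actually has a root with $\alpha>\sqrt{\sigma}(cm)^{-1/4}$ (the paper proves $G(\sqrt{\nopti})<0<G\big(\sqrt{\nopti}(1+C_m/m)\big)$, and the resulting bounds on $\alpha$ are what drive $\pos<2$), and in step (c) you need convexity of the penalty in $n_i$ (the paper shows the integrand is convex and that expectation preserves convexity) to conclude that the stationary point $\nopti$ is a global minimizer rather than merely a critical point.
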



The mechanism is NIC as, 
provided that others are following $\sopti$, there is no reason for
any one agent to deviate.
Moreover, we achieve low social penalty at $\sopti$.
Other than $\sopt$,
there is also a set of similar Nash equilibria with the same social penalty: the agents can each add a same constant to the data points they collect and subtract the same value from the final estimate.
Before we proceed, the expression for $\alpha$ in~\eqref{eqn:mainalpha} warrants explanation.
If we treat $\alpha$ is a variable,
we find that different choices of $\alpha$ can lead to other Nash equilibria with corresponding
bounds on \pos.
This specific choice of $\alpha$ leads to a Nash equilibrium where agents collect
$\sigma/\sqrt{cm}$ points, and a small bound on $\pos$.
Throughout this manuscript, we will treat $\alpha$ as the specific value obtained by
solving~\eqref{eqn:mainalpha}, and \emph{not} as a variable.


\paragraph{High dimensional non-Gaussian distributions:}
In Appendix~\ref{sec:general}, we study $\mechany$ when applied to
 $d$--dimensional distributions.
In Theorem~\ref{thm:highdimensions}, we show that under bounded variance assumptions,
$\sopt$ is an $\varepsilon_m$-approximate Nash
equilibrium and that $\pos(\mechany, \sopt) \leq 2 + \varepsilon_m$ where $\varepsilon_m \in \bigO(1/m)$.

\subsection{Proof sketch of Theorem~\ref{thm:main}}
\label{sec:proofsketch}

\emph{When $m\leq 4$:}
First, consider the (easy) case $m\leq 4$.
At $\sopti$, the total amount of data collected
is ${\sigma}/{\sqrt{c}}$ (see $\nopti$ in~\eqref{eqn:mainsopti}),
and as there is no corrupted dataset,
 $\estimopti$ simply reduces to the sample mean of $\initdatai\cup\subdatami$.
The mechanism is IR since
an agent's penalty will be $\sigma\sqrt{c}(1 + 1/m)$ which is smaller than
$\pminir$~\eqref{eqn:indbestutil}.
It is approximately efficient since the social penalty is 
$\sigma\sqrt{c}(m + 1)$ which is at most twice the global minimum
$2\sigma\sqrt{mc}$ when $m\leq 4$.
Finally, NIC is guaranteed by the same argument used in~\eqref{eqn:indbestutil};
as long as the total amount of data is less than $\sigma/\sqrt{c}$,
 the cost of collecting one more point is offset by the marginal
decrease in the estimation error; hence, the agent is incentivized to collect more data.
Moreover, as $\alloci$ does not depend on $\subfunci$
under these conditions, there is no incentive to  fabricate or falsify data.

\vspace{0.1in}\noindent
\emph{When $m> 4$:}
We will divide this proof into four parts.
We first show that $G(\alpha)=0$ in line~\eqref{lin:mainmechstart} has a solution
$\alpha$ larger than $\sqrt{\nopti} = \sqrt{\sigma}(cm)^{-1/4}$.
This will also be useful when analyzing the efficiency.




\vspace{0.1in}\noindent
\textbf{1. Equation~\eqref{eqn:mainalpha} has a solution.}
We derive an asymptotic expansion of $\erfc(\cdot)$
using integration by parts to analyze the solution to~\eqref{eqn:mainalpha}.
When $m\ge 5$,
we show that 
$G\prn*{\sqrt{\nopti}}\times$
$G\prn*{\sqrt{\nopti}\prn*{1+8/\sqrt{m}}}<0$.
By continuity of $G$,
there exists $\alpha_m\in\prn*{\sqrt{\nopti},\sqrt{\nopti}\prn*{1+8/\sqrt{m}}}$ s.t. 
$G(\alpha_m)=0$. 
For $m$ large enough such that the residual in the asymptotic expansion is negligible, we show
$\alpha_m\in\prn*{\sqrt{\nopti},\sqrt{\nopti}\prn*{1+\log m/m}}$ via an identical technique.


\vspace{0.1in}\noindent
\textbf{2. The strategies $\sopt$ in~\eqref{eqn:mainsopti} is a Nash equilibrium:}
We show this via the following two steps.
First (\textbf{2.1}),
We show that fixing any $n_i$,  the maximum risk
and thus the penalty $\penali$ is minimized when agent $i$ submits the raw data and uses the weighted average
as specified in~\eqref{eqn:mainsopti}, i.e for all $n_i$,
\begin{align}
\hspace{-0.08in}
\penali(\mechany, ((n_i,\subfuncopti,\estimopti),\soptmi)) \leq 
		\penali(\mechany, ((n_i,\subfunci,\estimi), \soptmi)),
\;\;
\forall (n_i,\subfunci,\estimi) \in \NN \times \subfuncspace\times \estimspace.
\label{eqn:nashproofone}
\end{align}
Second (\textbf{2.2}),
we show that $\penali$ is minimized when agent $i$ collects $\nopti$ samples
under  $(\subfuncopti,\estimopti)$, i.e.
\begin{align}
\penali(\mechany, ((\nopti,\subfuncopti,\estimopti),\soptmi)) \leq 
\penali(\mechany, ((n_i,\subfuncopti,\estimopti), \soptmi)),
\quad\quad
\forall\,  n_i \in \NN.
\label{eqn:nashprooftwo}
\end{align}

\textbf{\emph{2.1: Proof of~\eqref{eqn:nashproofone}.}}
As the data collection cost does not change for fixed $n_i$, it is sufficient to show that
$(\subfuncopti, \estimopti)$ minimizes the maximum risk.
Our proof is inspired by the following well-known recipe for proving minimax optimality of an
estimator~\cite{lehmann2006theory}:
design a sequence of priors $\{\Lambda_\ell\}_\ell$,
compute the minimum Bayes' risk  $\{R_\ell\}_\ell$ for any estimator,
and then show that $R_\ell$ converges to the maximum risk of the proposed estimator as
$\ell\rightarrow \infty$.

To apply this recipe,
we use a sequence of normal priors $\Lambda_\ell = \Ncal(0,\ell^2)$ for $\mu$.
Howeiver, before we proceed, we need to handle two issues.
The first of these concerns the posterior for $\mu$ when
conditioned on $(\initdatai,\datai,\datai')$.
Since the corruption terms $\epsilon_{z,i}$ added to $\datai'$ depend on $\initdatai$ and
$\datai$, this dataset is not independent. Moreover, as the variance $\etaisq$ 
is the difference between two normal random variables, $\datai'$ is not normal.
Despite these, we are able to
show that
the posterior $\mu|(\initdatai,\datai,\datai')$ is normal.
The second challenge is that the submission $\subfunci$ also affects the estimation error as
it determines the amount of noise $\etaisq$.
We handle this by viewing $\subfuncspace\times\estimspace$ as a rich class of estimators
and derive the optimal Bayes' estimator $(\subfuncil,\estimil)\in\subfuncspace\times\estimspace$ 
under the prior $\Lambda_\ell$.
We then show that
the minimum Bayes' risk converges to
the maximum risk when using $(\subfuncopti, \estimopti)$.

Next, under the prior $\Lambda_\ell=\Ncal(0,\ell^2)$, we can minimize the Bayes' risk with respect to
$\estimi\in\estimspace$ by setting
$\estimil$ to be the posterior mean.
Then, the minimum Bayes' risk $R_\ell$ can be written as, 
{\small
\begin{equation*}
R_\ell = \inf_{\subfunci\in\subfuncspace} \;
\EE\brk*{\left({\abr{\datai'}}\bigg(\sigma^2+\alpha^2 \bigg(\frac{1}{\abr{\subdatai}}\sum_{y\in\subdatai}y
-
\frac{1}{\abr{\datai}}\sum_{z\in \datai} z
\bigg)^2\bigg)^{-1}+\frac{\abr{\initdatai}+\abr{\datai}}{\sigma^2}+\frac{1}{\ell^2}\right)^{-1}}    
\end{equation*}
}%
Note that $\subdatai=\subfunci(\initdatai)$ depends on $\subfunci$.
Via the Hardy-Littlewood inequality~\citep{burchard2009short},
we can show that the above quantity is minimized
when $\subfuncil$ is chosen to be a shrunk version of the agent's initial dataset $X_i$, i.e
$
\subfuncil(\initdatai)=\big\{
\big(1+\sigma^2/(|X|\ell^2)\big)^{-1}\,x,
\; \forall\,x\in \initdatai\big\}
$.
This gives us an expression for the minimum Bayes' risk $R_\ell$ under prior
$\Lambda_\ell$.
To conclude the proof,
we note that the minimum Bayes' risk under any prior
is a lower bound on the maximum risk, and show that $R_\ell$
approaches the maximum risk of $(\subfuncopti, \estimopti)$ from below.
Hence, 
$(\subfuncopti, \estimopti)$ is minimax optimal for any $n_i$.
(Above, it is worth noting that 
$\subfuncil\rightarrow\subfuncopti=\identity$ as $\ell\rightarrow\infty$.
In the Appendix, we also find that $\estimil\rightarrow\estimopti$.
)

\textbf{\emph{2.2: Proof of~\eqref{eqn:nashprooftwo}.}}
We can now write
$\penali(\mechany, ((n_i,\subfuncopti,\estimopti), \soptmi)) = 
R_\infty + cn_i
$,
where $R_\infty$ is the maximum risk of $(\subfuncopti,\estimopti)$
(and equivalently, the limit of the minimum Bayes' risk):
\begin{equation*}
R_\infty:=
\EE_{x\sim\Ncal(0,1)}\brk*
{\prn*{{(m-2)\nopti}\prn*{{\sigma^2+\alpha^2\prn*{{\sigma^2}/{n_i}
                    +
                    {\sigma^2}/{\nopti}} x^2}}^{-1}
        +
        \prn*{n_i+\nopti}{\sigma^{-2}}
}^{-1}}
\end{equation*}
The term inside the expectation is convex in $n_i$ for each fixed $x$.
As expectation preserves convexity, we can conclude that $\penali$
is a convex function of $n_i$. 
The choice of $\alpha$ in~\eqref{eqn:mainalpha} ensures that
the derivative is $0$ at $\nopt$ which implies that $\nopt$ is a minimum of this function.


\vspace{0.1in}\noindent
\textbf{3. $\mechany$ is individually rational at $\sopt$:}
This is a direct consequence of step 2 as we can show that an agent `working on her own'
is a valid strategy in $\mechany$. 

\vspace{0.1in}\noindent
\textbf{4. $\mechany$ is approximately efficient at $\sopt$:}
By observing that the global minimum penalty is $2\sigma\sqrt{cm}$,
we use a series of nontrivial algebraic manipulations to show 
$
\pos(\mechany, \sopt)
=
\frac{1}{2}\prn*{
	\frac{10\alpha^2/\nopti-1}{4(m+1)\alpha^2/(m\nopti)-1}+1
}.    
$
As $\alpha > \sqrt{\nopti}$, some simple algebra leads to
$
\pos(\mechany, \sopt) < 2
$.

%
%

\section{Special Cases: Restricting the Agents' Strategy Space}
\label{sec:specialcases}

In this section, we study two special cases motivated by some natural use cases,
where we restrict the  agents' strategy space.
In addition to providing better guarantees on the efficiency, this will also help us better
illustrate the challenges in our original setting.

\subsection{Agents cannot fabricate or falsify data}
\label{sec:onednofalse}
\vspace{-0.1in}

First, we study a setting where agents are not allowed to fabricate data
or falsify data.
Specifically, in~\eqref{eqn:subestspaces},
$\subfuncspace$ is restricted to functions which map a dataset to any subset.
This is applicable when there are regulations preventing such behavior  (e.g government
institutions, hospitals)

\emph{Mechanism:}
The discussion at the end of~\S\ref{sec:setup}
motivates the following modification to the pooling mechanism.
We set the allocation space to be $\allocspace = \bigcup_{n\geq
0}\RR^n$, i.e the space of all datasets.
If an agent $i$ submits at least $\sigma/\sqrt{cm}$ points, then
give her all the other agents'
datasets, i.e $\alloci = \cup_{j\neq i} \subdataj$;
otherwise, set $\alloci = \emptyset$.
The recommended strategy $\sopti=(\nopti, \subfuncopti,\estimopti)$ of each agent is to 
collect $\sigma/\sqrt{cm}$ points, submit it as is $\subfuncopti=\identity$,
and then use the sample mean
of $\datai\cup\alloci$ to estimate $\mu$.
The theorem below, whose proof is straightforward, states the main properties of this mechanism. 

\begin{theorem}
\label{thm:nofalse}
The following statements about the mechanism and strategy profile $\sopt$ in the paragraph
above are true when
$\subfuncspace$ is
restricted to functions which map a dataset to any subset:
{(i)} $\sopt$  is a Nash equilibrium.
{(ii)} The mechanism is individually rational at $\sopt$.
{(iii)} At $\sopt$, the mechanism is efficient.
\end{theorem}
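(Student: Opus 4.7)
}
The plan is to establish (iii), then (ii), then (i), since the first two are immediate calculations that will give us the numerical targets the NIC argument must beat. For efficiency, I would directly evaluate $\socpenal(\mechany,\sopt)$ at the recommended profile. Under $\sopt$ every agent collects and submits $\nopti=\sigma/\sqrt{cm}$ real points, so the mechanism returns to each agent $\alloci=\bigcup_{j\ne i}\initdataj$, giving her access to all $m\nopti=\sigma\sqrt{m/c}$ pooled samples. These are i.i.d.\ $\Ncal(\mu,\sigma^2)$, and the sample-mean estimator $\estimopti$ achieves maximum risk $\sigma^2/(m\nopti)=\sigma\sqrt{c/m}$. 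Adding the per-agent cost $c\nopti=\sigma\sqrt{c/m}$ gives $\penali(\mechany,\sopt)=2\sigma\sqrt{c/m}$ and $\socpenal(\mechany,\sopt)=2\sigma\sqrt{cm}$. Since the computation following~\eqref{eqn:indbestutil} already established that $2\sigma\sqrt{cm}$ is the unconstrained global minimum of $\socpenal$, we conclude $\pos(\mechany,\sopt)=1$. Individual rationality then follows immediately because $2\sigma\sqrt{c/m}\le 2\sigma\sqrt{c}=\pminir$ for all $m\ge 1$.

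The substantive part is Nash incentive compatibility. Fix agent $i$ and assume all others play $\sopti$, so $|\subdatami|=(m-1)\nopti$. I would partition the alternative strategies $(n_i,\subfunci,\estimi)$ according to $k:=|\subfunci(\initdatai)|$, the number of submitted points; note that the restriction on $\subfuncspace$ forces $k\le n_i$ and forces every submitted point to be a genuine $\Ncal(\mu,\sigma^2)$ sample. \emph{Case A:} $k<\nopti$. Then $\alloci=\emptyset$, and the information available to agent $i$'s estimator is a subset of $n_i$ i.i.d.\ samples from $\Ncal(\mu,\sigma^2)$. The standard minimax lower bound for normal mean estimation (cited in the paragraph preceding~\eqref{eqn:indbestutil}) gives maximum risk at least $\sigma^2/n_i$, so the total penalty is at least $\sigma^2/n_i+cn_i\ge \pminir=2\sigma\sqrt{c}>2\sigma\sqrt{c/m}$ whenever $m>1$. \emph{Case B:} $k\ge\nopti$, which requires $n_i\ge\nopti$. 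Then agent $i$ receives $\alloci=\subdatami$, and the union $\initdatai\cup\alloci$ consists of $n_i+(m-1)\nopti$ i.i.d.\ $\Ncal(\mu,\sigma^2)$ samples, on which the sample mean is minimax optimal with risk $\sigma^2/(n_i+(m-1)\nopti)$. The penalty is therefore at least
\[
f(n_i)\;:=\;\frac{\sigma^2}{n_i+(m-1)\nopti}+cn_i,\qquad n_i\ge\nopti.
\]
Differentiating, $f'(n_i)=c-\sigma^2/(n_i+(m-1)\nopti)^2$, and at $n_i=\nopti$ we have $n_i+(m-1)\nopti=m\nopti=\sigma\sqrt{m/c}$, giving $f'(\nopti)=c-c/m>0$ for $m>1$. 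Since $f'$ is increasing in $n_i$, $f$ is minimized on $[\nopti,\infty)$ at $n_i=\nopti$, where $f(\nopti)=2\sigma\sqrt{c/m}=\penali(\mechany,\sopt)$. Combining the two cases shows no deviation lowers agent $i$'s penalty, establishing NIC.

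The main conceptual point, and the only part I would be careful with, is Case A: one must rule out the possibility that submitting a strict subset of $\initdatai$ could somehow help. The argument above makes this rigorous by observing that $\alloci=\emptyset$ reduces $(\subdatai,\alloci)$ to deterministic functions of $\initdatai$, so the maximum risk is no smaller than the minimax risk based on $\initdatai$ alone, which is $\sigma^2/n_i$. Together with the boundary derivative computation in Case B, this pins down $\nopti$ as the unique best response and completes the proof.
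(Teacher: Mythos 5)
Your proof is correct and follows essentially the same route as the paper's: minimax optimality of the sample mean, a case split on whether the submission meets the $\sigma/\sqrt{cm}$ threshold, and a one-dimensional optimization over $n_i$ showing the penalty is $2\sigma\sqrt{c}$ below the threshold and minimized at $\nopti$ above it, with IR and efficiency read off from $\penali(\cdot,\sopt)=2\sigma\sqrt{c/m}$. Your Case B is in fact slightly more explicit than the paper, which asserts the bound $\sigma^2/(n_i+(m-1)\nopti)+cn_i\ge 2\sigma\sqrt{c/m}$ without spelling out the monotonicity argument you give.
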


It is not hard to see that this mechanism can be easily manipulated by the agent if
there are no restrictions on $\subfuncspace$.
As the mechanism only checks for the amount of data submitted, the agent can submit a fabricated
dataset of $\sigma/\sqrt{cm}$ points, and then discard this dataset when computing the estimate, 
which results in detrimental free-riding.

\subsection{Agents accept an estimated value from the mechanism}
\label{sec:onednoest}
\vspace{-0.1in}

Our next setting is motivated by use cases where the mechanism may directly deploy
the estimated value for $\mu$
in some downstream application for the agent, i.e
the agents are forced to use this value.
This is motivated by federated learning, where agents collect and send data to a
server (mechanism), which  
deploys a model (estimate) directly on the agent's
device~\citep{blum2021one,karimireddy2022mechanisms}.
This requires modifying the agent's strategy space to $\Scal =\postedit{\NN} \times
\subfuncspace$.
Now, 
an agent can only choose $(n_i, \subfunci)$, how much data
she wishes to collect,
and how to fabricate or falsify the dataset.
A mechanism is defined as a procedure
$\mechfunc:
\big(\bigcup_{n\geq 0} \RR^{n}\big)^m
\rightarrow
\RR^m$,
which maps $m$ datasets to $m$ estimated mean values.


Algorithm~\ref{alg:noest} (see Appendix~\ref{app:convexcomb}) outlines a family of mechanisms parametrized by
$\epsilon>0$ for this setting.
As we will see shortly, with parameter $\epsilon$, the mechanism can achieve a $\pos$ of
$(1+\epsilon)$.
This mechanism computes agent $i$'s estimate for $\mu$ as follows.
First, let $\subdatami$ be the union of all datasets submitted by the other agents.
Similar to Algorithm~\ref{alg:main}, the algorithm individually adds Gaussian 
to each $\subdatami$ to obtain $\datai$ (line~\ref{lin:noestdatacorruption}).
Unlike before, this noise is added to the entire dataset and the
variance $\etaisq$ of this noise depends on the difference between
the sample means of the agent's submission $\subdatai$ and all of the other agents'
submissions $\subdatami$.
It also depends on two $\epsilon$-dependent parameters defined in
line~\ref{lin:epsilonparams}.
Finally, the mechanism deploys the sample mean of $\postedit{\subdatai\cup\datai}$
as the estimate for $\mu$.
The recommended strategies $\sopti=(\nopti,\subfuncopti)$
 for the agents is to simply collect $\nopti=\sigma/\sqrt{cm}$
points and submit it as is $\subfuncopti=\identity$. 
The following theorem states the main properties of the mechanism. 

\begin{theorem}
\label{thm:noest}
Let $\epsilon>0$.
The following statements about Algorithm~\ref{alg:noest} and the strategy profile
$\sopt$ given in the
paragraph above are true:
{(i)}  $\sopt$  is a Nash equilibrium.
{(ii)} The mechanism is individually rational at $\sopt$.
{(iii)} At $\sopt$, the mechanism is approximately efficient with
$\pos(M, \sopt) \leq 1 + \epsilon$.
\end{theorem}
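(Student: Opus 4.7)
The plan is to follow the three-step structure used for Theorem~\ref{thm:main}, which simplifies considerably here because the agent's strategy space has shrunk to $(n_i,\subfunci)$ and the mechanism itself deploys the estimator. I would first compute the equilibrium penalty directly, from which (ii) and (iii) drop out, and then verify Nash by analyzing unilateral deviations.

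\textbf{Penalty at $\sopt$, giving (ii) and (iii).}
When all agents follow $\sopt$, every submission is an i.i.d.\ sample of size $\nopti=\sigma/\sqrt{cm}$, so $\bar Y_i-\bar Y_{-i}\sim\Ncal\bigl(0,\tfrac{m\sigma^2}{(m-1)\nopti}\bigr)$, where $\bar Y_i$ denotes the sample mean of $\subdatai$. Applying the Gaussian moment identity $\EE[Z^{2k}]=(2k-1)!!\,\mathrm{Var}(Z)^k$ and the definition of $\beta_\epsilon$ in line~\ref{lin:epsilonparams} (with $\sum_j|Y_j|=m\nopti$ at equilibrium), a direct calculation gives $\EE[\etaisq]=\frac{\sigma^2 m}{k_\epsilon(m-1)}$. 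The deployed estimator is unbiased for every $\mu$, and decomposing its variance into contributions from $\subdatai$, $\subdatami$, and the independent noise terms $\epsilon_{z,i}$ yields MSE $=\frac{\sigma^2}{m\nopti}(1+1/k_\epsilon)=\sigma\sqrt{c/m}(1+1/k_\epsilon)$. The per-agent penalty is therefore $\sigma\sqrt{c/m}(2+1/k_\epsilon)$ and the social penalty is $\sigma\sqrt{cm}(2+1/k_\epsilon)$; dividing by the global minimum $2\sigma\sqrt{cm}$ from~\S\ref{sec:setup} gives $\pos(M,\sopt)=1+\tfrac{1}{2k_\epsilon}\le 1+\epsilon$ using $k_\epsilon\ge 1/(2\epsilon)$, which is~(iii). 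For~(ii), the per-agent penalty above is at most $\pminir=2\sigma\sqrt{c}$ whenever $\sqrt{m}\ge 1+\tfrac{1}{2k_\epsilon}$, which is easily satisfied in the regime of interest.

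\textbf{Nash equilibrium, giving (i).}
Fix an arbitrary deviation $(n_i,\subfunci)$ for agent $i$ with all other agents following $\sopt$; write $n=|\subdatai|$, $N_n=n+(m-1)\nopti$, and let $\bar X_i$ denote the sample mean of $\initdatai$. Any non-shift-equivariant $\subfunci$ produces a bias that grows with $|\mu|$, so the sup-risk is infinite and these are discarded. For equivariant $\subfunci$, decompose $\bar Y_i=\bar X_i+\xi$ where $\xi$ depends only on the centered residuals $\initdatai-\bar X_i\mathbf{1}$, hence is independent of $\bar X_i$ and of $\mu$. Unbiasedness forces $\EE[\xi]=0$, and because both the variance of $\bar Y_i$ and the expectation of $(\bar Y_i-\bar Y_{-i})^{2k_\epsilon}$ are non-decreasing in $\mathrm{Var}(\xi)$, the minimum MSE is achieved at $\xi\equiv 0$, i.e.\ $\bar Y_i=\bar X_i$. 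Given $\bar Y_i=\bar X_i$, the variance $v:=\mathrm{Var}(\bar Y_i-\bar Y_{-i})$ is independent of $n$, and the key cancellation $\EE[\etaisq]=\beta_\epsilon^2(2k_\epsilon-1)!!\,v^{k_\epsilon}\propto N_n^2$ eliminates the noise contribution's dependence on $n$; differentiating the remaining $n$-dependent term $\bigl(n^2\sigma^2/n_i+(m-1)\nopti\sigma^2\bigr)/N_n^2$ shows a unique minimum at $n=n_i$, which $\subfunci=\identity$ realizes. Substituting $\subfunci=\identity$ gives the penalty as a function of $n_i$ alone,
\begin{equation*}
P(n_i)=\frac{\sigma^2}{N_{n_i}}+\frac{c\nopti N_{n_i}^{k_\epsilon}}{k_\epsilon m^{k_\epsilon}n_i^{k_\epsilon}}+cn_i,
\end{equation*}
and a direct computation gives $dP/dn_i\bigl|_{n_i=\nopti}=0$; convexity of each summand (for the middle one via a chain-rule computation using $k_\epsilon\ge 1$) identifies $n_i=\nopti$ as the global minimum.

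\textbf{Main obstacle.}
The most delicate point is the equivariance reduction combined with the proof that $\bar Y_i=\bar X_i$ is optimal, because the submission enters the mechanism in three coupled ways: as the estimator weight $n/N_n$, through the scaling $\beta_\epsilon\propto\sum_j|Y_j|$, and through the cross-check $(\bar Y_i-\bar Y_{-i})^{2k_\epsilon}$. The specific form of $\beta_\epsilon$ in line~\ref{lin:epsilonparams} is the crucial design feature—its quadratic dependence on $|\subdatai|$ exactly cancels the $1/N_n^2$ in the MSE once the submission is truthful, collapsing an otherwise tangled joint optimization into a clean convex problem. For any genuinely non-equivariant $\subfunci$ where the sup-risk is finite rather than trivially infinite, a Bayes'-risk argument with normal priors $\Lambda_\ell=\Ncal(0,\ell^2)$ mirroring Step~2.1 of the proof of Theorem~\ref{thm:main} can be invoked as a unified lower bound.
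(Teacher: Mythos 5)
Your overall architecture matches the paper's: compute the equilibrium penalty $\sigma\sqrt{c/m}\,(2+1/k_\epsilon)$ to get individual rationality and $\pos = 1+\tfrac{1}{2k_\epsilon}\le 1+\epsilon$, then establish the Nash property by (a) showing that for fixed $n_i$ the optimal submission has $\bar Y_i=\bar X_i$ and $|\subdatai|=n_i$, and (b) showing $P(n_i)$ is convex with $P'(\nopti)=0$. Your equilibrium-penalty computation, your expression for $P(n_i)$ (which is algebraically identical to the paper's after using $c\nopti=\sigma^2/(m\nopti)$), your observation that $\beta_\epsilon^2\propto N_n^2$ makes the corruption term independent of $|\subdatai|$, and your optimization over $|\subdatai|$ all agree with the paper (the paper closes the $|\subdatai|$ step with an explicit inequality whose equality case is $|\subdatai|=n_i$, where you differentiate; both work). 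Within the class of shift-equivariant submissions, your decomposition $\bar Y_i=\bar X_i+\xi$ with $\xi$ independent of $\bar X_i$ and the moment inequality $\EE[(Z+\xi)^{2k_\epsilon}]\ge\EE[Z^{2k_\epsilon}]$ is a clean, more elementary substitute for the paper's Hardy--Littlewood step.

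The genuine gap is the reduction to shift-equivariant $\subfunci$. Your claim that ``any non-shift-equivariant $\subfunci$ produces a bias that grows with $|\mu|$, so the sup-risk is infinite'' is false: a submission map such as $\subfunci(\initdatai)=\{x+\sin(\bar X_i):x\in\initdatai\}$ is not shift-equivariant, yet its bias is uniformly bounded and its maximum risk is finite, so it cannot be discarded this way. Since the entire Nash argument in your main text is conditioned on this dichotomy, the argument as written does not cover all of $\subfuncspace$. The paper avoids the dichotomy entirely: it observes that the maximum risk depends on $\subdatai$ only through $(|\subdatai|,\bar Y_i)$, writes the risk as $\sup_\mu\EE[\sum_j A_j(\bar Y_i-\mu)^{2j}\,|\,\mu]$ with nonnegative coefficients, and then lower-bounds it for \emph{arbitrary} $\subfunci$ by the Bayes risk under priors $\Ncal(0,\ell^2)$, using the Hardy--Littlewood corollary to show the posterior risk is minimized at $\bar Y_i=\mu_\ell$ and that the limiting Bayes risk equals the maximum risk of $\bar Y_i=\bar X_i$. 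You do name this exact patch in your closing remark, so the missing ingredient is identified rather than absent, but it needs to be the argument, not a footnote; as a smaller point, ``unbiasedness forces $\EE[\xi]=0$'' is backwards --- nothing forces unbiasedness, and a constant bias must instead be shown suboptimal (which your own moment inequality does). Finally, your IR condition $\sqrt m\ge 1+\tfrac{1}{2k_\epsilon}$ coincides with the paper's bound and shares the same harmless edge case at $m=2$, $k_\epsilon=1$.
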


The above theorem states that it is possible to obtain a social penalty that is
arbitrarily close to the global minimum under the given restriction of the
strategy space.
However, this mechanism is not NIC if agents are allowed to design their own estimator.
For instance, if
the mechanism returns $\alloci=\datai$ (line~\ref{lin:noestdatacorruption}),
then using a weighted average of the data in $\initdatai$ and $\datai$ yields
a lower estimation error than simple average used by the mechanism
(see Appendix~\ref{app:convexcomb}).
An agent can leverage this insight to collect and submit less data and
obtain a lower overall penalty at the expense of other agents.
~\citet{cai2015optimum} study a setting where agents are incentivized to collect data and submit it truthfully via payments.
Interestingly, their corruption method can be viewed as a special case of Algorithm~\ref{alg:noest} with $k_\epsilon=1$ and only achieves a $1.5\times$ factor of the global minimum social penalty.
Moreover, when applied to the more general strategy space, it shares the same shortcomings as the mechanism in Theorem~\ref{thm:noest}.



%


\section{Conclusion}
\label{sec:conclusion}

We studied collaborative normal mean estimation in the presence of strategic agents.
Naive mechanisms which only look at the quantity of the dataset submitted,
can be manipulated by agents who under-collect and/or fabricate data, leaving all agents worse off.
To address this issue, when sharing the others' data with an agent,
our mechanism $\mechany$ corrupts this dataset proportional to how much 
the data reported by the agent differs from the other agents.
We design minimax optimal estimators for this corrupted dataset
to achieve a socially desirable Nash equilibrium.

\textbf{Future directions:}
We believe that designing mechanisms for other collaborative learning settings may
require relaxing the \emph{exact} NIC guarantees to make the analysis tractable.
For many learning problems, it is difficult to design exactly optimal
estimators, and it is common to settle for rate-optimal (i.e up to constants)
estimators~\citep{lehmann2006theory}.
For instance, even simply relaxing to high dimensional distributions with bounded variance, 
$\mechany$ can only provide an approximate NIC guarantee.

\bibliographystyle{plainnat}
\bibliography{refs}

\appendix



\section{Proof of Theorem \ref{thm:main}}

In this section, we prove Theorem~\ref{thm:main}.
This section is organized as follows.
First, in~\S\ref{sec:mainmlessthanfourproof}, we consider the case where $m\leq 4$.
In the remainder of this section, we will assume $m\geq 5$.
First, in~\S\ref{sec:mainsolutionalpha}, we will show that~\eqref{eqn:mainalpha} can be solved for $\alpha$ and state some properties about the solution.
Then, in~\S\ref{sec:mainnashproof}, we will prove the Nash incentive compatibility result, in~\S\ref{sec:mainirproof} we will prove individual rationality, and in~\S\ref{sec:mainposproof}, we will prove the result on efficiency.

\subsection{When $m\leq 4$}
\label{sec:mainmlessthanfourproof}

First, consider the (easy) case $m\leq 4$.
At $\sopti$, the total amount of data collected
is ${\sigma}/{\sqrt{c}}$  as each agent will be collecting $\nopti=\frac{\sigma}{m\sqrt{c}}$ (see~\eqref{eqn:mainsopti}).
As there is no corrupted dataset,
 $\estimopti$ simply reduces to the sample mean of $\initdatai\cup\subdatami$.
 The individual rationality property follows from the following simple calculation:
 \begin{equation*}
\penali(\mechany, \sopt)
= \prn*{1+\frac{1}{m}}\sqrt{c}\sigma
<
2\sqrt{c}\sigma = \pminir.
\end{equation*}
Similarly, the bound on the ratio between the penalties can also be obtained via the following calculation:
\begin{equation*}
\pos
=
\frac{m\prn*{1+\frac{1}{m}}\sqrt{c}\sigma}{2\sigma\sqrt{cm}} < \sqrt{m}\le2.
\end{equation*}
Finally, to show NIC, consider agent $i$ and assume that all other agents have followed the recommended strategies, i.e collected $\sigma/(m\sqrt{c})$.
Then, the agent will have an \emph{uncorrupted} dataset $\subdatami=\bigcup_{j\neq i}\initdata_j$ of $\noptmi=(m-1)\sigma/(m\sqrt{c})$ points with no corruption.
Regardless of what she chooses to submit, the best estimator she could use with the union of this dataset $\subdatami$ and the data she collects $\initdatai$ and will be the sample mean as it is minimax optimal. The number of points that minimizes her penalty is,
\begin{align*}
\argmin_{n_i}  \Big(
\sup_\mu\EE\left[(\estimi(\initdatai,\subdatai,\subdatami) - \mu)^2 \,\, \Big| \,\mu\right] 
    \;+\; c n_i    \Big)
= \argmin_{n_i \in \RR} \Big(\frac{\sigma^2}{n_i + \noptmi} + cn_i \Big)
= \frac{\sigma}{m\sqrt{c}}
\end{align*}
Finally, as $\alloci$ does not depend on $\subfunci$
under these conditions, there is no incentive to  fabricate or falsify data, i.e choosing anything other than $\subfuncopt = \identity$ does not lower her utility.


In the remainder of this section, will study the harder case, $m\geq 4$.

\subsection{Existence of a solution to~\eqref{eqn:mainalpha} and some of its properties \label{sec:mainsolutionalpha}}

In this section, we show that $G\prn*{\frac{\sigma^{1/2}}{(cm)^{1/4}}}<0$ and $G\prn*{\prn*{1+\frac{C_m}{m}}\frac{\sigma^{1/2}}{(cm)^{1/4}}}>0$, where $C_m=20$ when $m\le 20$ and $C_m=5$ when $m>20$.
This means equation $G(\alpha)=0$ has solution in $\prn*{\frac{\sigma^{1/2}}{(cm)^{1/4}}, \; \prn*{1+\frac{C_m}{m}}\frac{\sigma^{1/2}}{(cm)^{1/4}}}$.


First, in Lemma~\ref{lem:erfc LB UB},
we derive an asymptotic expansion of the Gaussian complementary error function, and construct lower and upper bounds for $G(\alpha)$ that are easier to work with.
We have restated these lower ($\erfclb$) and  upper ($\erfcub$) bounds below.
\begin{align}
\erfcub(x)
:= &
\frac{1}{\sqrt{\pi}}
\prn*{
\frac{\exp(-x^2)}{x}-
\frac{\exp(-x^2)}{2x^3}
+
\frac{3\exp(-x^2)}{4x^5}
}\\
\erfclb(x)
:= &
\frac{1}{\sqrt{\pi}}\prn*{
\frac{\exp(-x^2)}{x}-
\frac{\exp(-x^2)}{2x^3}
}
\end{align}
We can now use this to derive the following lower ($G_{\operatorname{LB}}$) and upper
($G_{\operatorname{UB}}$) bounds on $G$.
Here, we have used the fact that $4(m+1)\frac{\alpha^2}{\sigma\sqrt{m/c}}-1>0$ when
$\alpha\ge(\sigma/\sqrt{cm})^{1/2}$. 
We have:
\begin{align*}
G_{\operatorname{LB}}(\alpha):=&
{\prn*{\frac{m-4}{m-2}\frac{4\alpha^2}{\sigma/\sqrt{cm}}-1}\frac{4\alpha}{\sqrt{\sigma}(m/c)^{1/4}}}
\\
&-
\prn*{{4(m+1)\frac{\alpha^2}{\sigma\sqrt{m/c}}-1}}\sqrt{2\pi}\exp\prn*{\frac{\sigma\sqrt{m/c}}{8\alpha^2}}
\erfcub\prn*{\frac{\sqrt{\sigma}(m/c)^{1/4}}{2\sqrt{2}\alpha}},
\\
G_{\operatorname{UB}}(\alpha):=&
{\prn*{\frac{m-4}{m-2}\frac{4\alpha^2}{\sigma/\sqrt{cm}}-1}\frac{4\alpha}{\sqrt{\sigma}(m/c)^{1/4}}}
\\
&-
\prn*{{4(m+1)\frac{\alpha^2}{\sigma\sqrt{m/c}}-1}}\sqrt{2\pi}\exp\prn*{\frac{\sigma\sqrt{m/c}}{8\alpha^2}}
\erfclb\prn*{\frac{\sqrt{\sigma}(m/c)^{1/4}}{2\sqrt{2}\alpha}}.
\end{align*}
By first, substituting $\sigma/\sqrt{cm}$ for $\alpha$ in the expressions for $G_{\operatorname{UB}}$ and $\erfcub$, and then via a sequence of algebraic manipulations, we can verify that
\begin{align*}
&G\prn*{\frac{\sigma^{1/2}}{(cm)^{1/4}}}    
\le 
G_{\operatorname{UB}}\prn*{\frac{\sigma^{1/2}}{(cm)^{1/4}}}\\
= & 
\frac{4 \left(\frac{4 (m-4)}{m-2}-1\right) \prn*{\frac{\sigma }{\sqrt{c m}}}^{1/2}}{\sqrt{\sigma } \prn*{\frac{m}{c}}^{1/4}}
-\sqrt{2} \left(\frac{4 (m+1)}{\sqrt{\frac{m}{c}} \sqrt{c m}}-1\right) 
\left(\frac{2 \sqrt{2} \prn*{\frac{\sigma }{\sqrt{c m}}}^{1/2}}{\sqrt{\sigma } \prn*{\frac{m}{c}}^{1/4}}-\frac{8 \sqrt{2} \left(\frac{\sigma }{\sqrt{c m}}\right)^{3/2}}{\sigma ^{3/2} \left(\frac{m}{c}\right)^{3/4}}\right)\\
= &
-\frac{128}{(m-2) m^{5/2}}
<0.
\end{align*}

Next, we will show that $G\prn*{\prn*{1+\frac{C_m}{m}}\frac{\sigma^{1/2}}{(cm)^{1/4}}}>0$ by studying the lower bound $\GLB$.
For $m\in[5, 500]$, we can verify individually that 
$G\prn*{\prn*{1+\frac{C_m}{m}}\frac{\sigma^{1/2}}{(cm)^{1/4}}}>0$ (See Figure~\ref{fig:G func plt}).
\begin{figure}[t]
    \centering
    \includegraphics[width=1\textwidth]{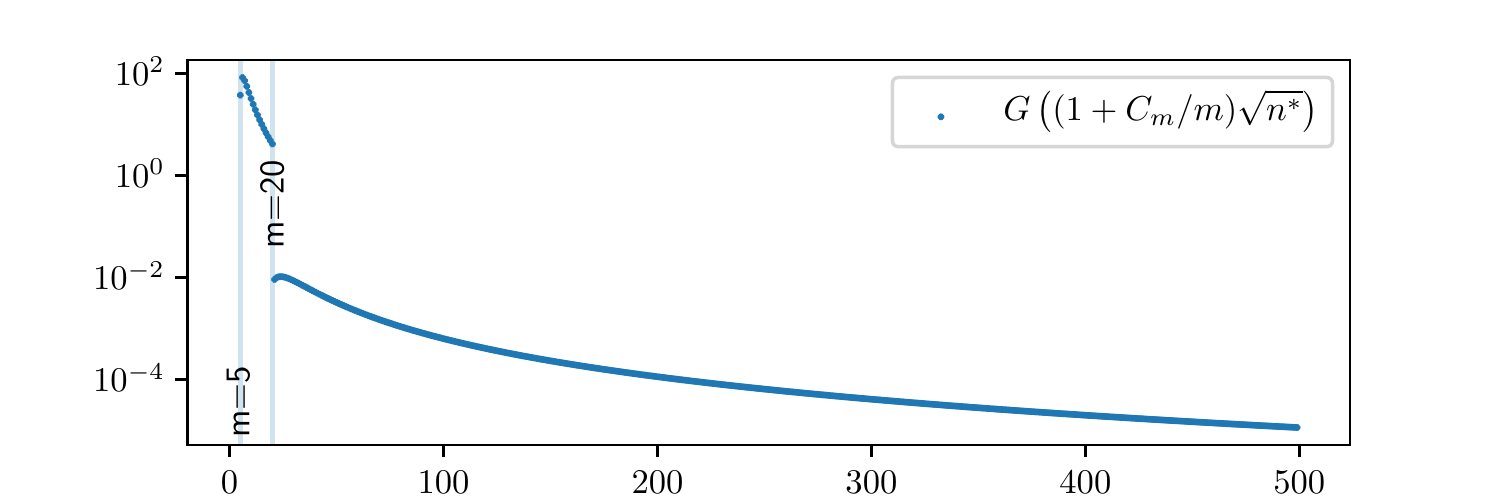}
    \caption{Plot for $G\prn*{\prn*{1+\frac{C_m}{m}}\frac{\sigma^{1/2}}{(cm)^{1/4}}}$. See \texttt{G\_em\_plot.py}.
    The discontinuity at $m=20$ is due to the different values for $C_m$ when $m\leq 20$ and when $m>20$.
    }
    \label{fig:G func plt}
\end{figure}
For $m>500$, we have:
\begin{align*}
&G\prn*{\prn*{1+\frac{C_m}{m}}\frac{\sigma^{1/2}}{(cm)^{1/4}}}
=
G\prn*{\prn*{1+\frac{5}{m}}\frac{\sigma^{1/2}}{(cm)^{1/4}}}
\ge 
G_{\operatorname{LB}}\prn*{\prn*{1+\frac{5}{m}}\frac{\sigma^{1/2}}{(cm)^{1/4}}}\\
= & 
\frac{4 \left(\frac{4 \left(\frac{5}{m}+1\right)^2 (m-4)}{m-2}-1\right) \left(\frac{5}{m}+1\right) \prn*{\frac{\sigma }{\sqrt{c m}}}^{1/2}}{\sqrt{\sigma } \prn*{\frac{m}{c}}^{1/4}} \\
&-\sqrt{2} \left(\frac{4 \left(\frac{5}{m}+1\right)^2 (m+1)}{\sqrt{\frac{m}{c}} \sqrt{c m}}-1\right) 
\Bigg(
\frac{2 \sqrt{2} \left(\frac{5}{m}+1\right) \prn*{\frac{\sigma }{\sqrt{c m}}}^{1/2}}{\sqrt{\sigma } \prn*{\frac{m}{c}}^{1/4}}
-\frac{8 \sqrt{2} \left(\frac{5}{m}+1\right)^3 \left(\frac{\sigma }{\sqrt{c m}}\right)^{3/2}}{\sigma ^{3/2} \left(\frac{m}{c}\right)^{3/4}}\\
&+\frac{96 \sqrt{2} \left(\frac{5}{m}+1\right)^5 \left(\frac{\sigma }{\sqrt{c m}}\right)^{5/2}}{\sigma ^{5/2} \left(\frac{m}{c}\right)^{5/4}}
\Bigg)\\
=&
\frac{64 (m+5)^3 \left(m^6-191 m^5-1566 m^4-3920 m^3+2100 m^2+19500 m+15000\right)}{(m-2) m^{21/2}}.
\end{align*}
When $m>500$,
\begin{align*}
&m^6-191 m^5-1566 m^4-3920 m^3
= m^3(m^3-191 m^2-1566 m-3920)\\
> &
m^3((200+200+100)m^2-191 m^2-1566 m-3920)\\
> &
m^3(200m^2 + 10^5m + 2.5\times10^7-191 m^2-1566 m-3920)>0.
\end{align*}
Combining the results from the two previous displays, we have,
$G\prn*{\prn*{1+\frac{C_m}{m}}\frac{\sigma^{1/2}}{(cm)^{1/4}}}>0$
which completes the proof for this section.

\subsection{Algorithm \ref{alg:main} is Nash incentive compatible\label{sec:mainnashproof}}

In this section, we will prove the following lemma which states that $\sopti$, as defined in~\eqref{eqn:mainsopti} is a Nash equilibrium in
$\mechany$.

\vspace{0.1in}
\begin{lemma}[NIC]\label{lem:v3 NIC}
	The recommended strategies $\sopt=\{(\nopti, \subfuncopti, \estimopti)\}_i$ as defined in~\eqref{eqn:mainsopti} in mechanism \emph{$\mechany$} (Algorithm \ref{alg:main}) satisfies:
        \emph{
	\begin{equation*}
		\penali(\mechany, \sopt) \leq \penali(\mechany, (s_i, \soptmi))
	\end{equation*}
        }
	for all $i \in [m]$ and $s_i \in \NN \times \subfuncspace \times \estspace$.
\end{lemma}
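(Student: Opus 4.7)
The plan follows the two-step decomposition sketched in~\S\ref{sec:proofsketch}. I will first prove~\eqref{eqn:nashproofone}, that for every fixed $n_i$ the recommended pair $(\subfuncopti,\estimopti)$ minimizes agent $i$'s maximum risk among all $(\subfunci,\estimi)\in\subfuncspace\times\estspace$. I will then prove~\eqref{eqn:nashprooftwo}, that when $(\subfuncopti,\estimopti)$ is fixed, the penalty is minimized at $n_i=\nopti$. Combining the two shows that no unilateral deviation in any coordinate of $(n_i,\subfunci,\estimi)$ can reduce $\penali$, which is the Nash condition.

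For~\eqref{eqn:nashproofone} I use the classical minimax recipe: lower-bound the maximum risk by a sequence of minimum Bayes' risks under priors $\Lambda_\ell=\Ncal(0,\ell^2)$ on $\mu$ and show convergence to the risk of the candidate estimator. Three issues distinguish this from a textbook argument. First, $\datai'$ is not independent of $(\initdatai,\datai)$ because $\etaisq$ is a deterministic function of them together with $\subfunci(\initdatai)$; I handle this by conditioning on $\etaisq$, under which the points in $\initdatai\cup\datai$ are i.i.d.\ $\Ncal(\mu,\sigma^2)$ and the points in $\datai'$ are i.i.d.\ $\Ncal(\mu,\sigma^2+\etaisq)$, yielding a Gaussian posterior for $\mu$ with precision $(n_i+|\datai|)/\sigma^2+|\datai'|/(\sigma^2+\etaisq)+1/\ell^2$. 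Second, since $\subfunci$ affects $\etaisq$, I must optimize $(\subfunci,\estimi)$ jointly; for each fixed $\subfunci$ the optimal $\estimi$ is the posterior mean, whose Bayes' risk equals the expected posterior variance
\[
R_\ell(\subfunci) \;=\; \EE\!\left[\Bigl(\tfrac{n_i+|\datai|}{\sigma^2}+\tfrac{|\datai'|}{\sigma^2+\alpha^2(\bar y-\bar z)^2}+\tfrac{1}{\ell^2}\Bigr)^{-1}\right]\!,
\]
where $\bar y,\bar z$ are the sample means of $\subfunci(\initdatai)$ and $\datai$. The integrand depends on $\subfunci$ only through the density of $\bar y$ and is monotone in $(\bar y-\bar z)^2$, so the Hardy--Littlewood rearrangement inequality~\citep{burchard2009short} reduces the minimization to a uniform shrinkage $\subfuncil(\initdatai)=\{(1+\sigma^2/(n_i\ell^2))^{-1}x:x\in\initdatai\}$, which converges to $\identity$ as $\ell\to\infty$. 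Third, dominated convergence gives $R_\ell\uparrow R_\infty$, the expression displayed in the sketch; a location-shift argument shows that the maximum risk of $(\subfuncopti,\estimopti)$ is constant in $\mu$ and equals $R_\infty$, so the Bayes' lower bounds are tight at $(\subfuncopti,\estimopti)$, proving minimax optimality.

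For~\eqref{eqn:nashprooftwo}, with $(\subfuncopti,\estimopti)$ fixed the penalty is $R_\infty(n_i)+cn_i$. For each realization of $x\sim\Ncal(0,1)$, the integrand of $R_\infty$ is the reciprocal of the sum of a linear-in-$n_i$ term and a function of the form $An_i/(Bn_i+C)$ with $A,B,C>0$; both are concave in $n_i$, so the reciprocal is convex in $n_i$, and convexity is preserved by expectation and by adding $cn_i$. It therefore suffices to verify the first-order condition $\partial_{n_i}(R_\infty+cn_i)|_{n_i=\nopti}=0$. Differentiating under the integral and substituting $u=\sqrt{\nopti}\,x/(\alpha\sqrt{2})$ to rewrite the Gaussian tail as an $\erfc$ integral, after straightforward algebra the condition reduces exactly to $G(\alpha)=0$ in~\eqref{eqn:mainalpha}, whose solvability was established in~\S\ref{sec:mainsolutionalpha}.

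The main obstacle is the elimination of $\subfunci$ in the Bayes' risk computation. Unlike a conventional minimax argument in which one optimizes only an estimator acting on a fixed data distribution, here the agent's submission reshapes the noise variance $\etaisq$ of her own allocation, so the minimum Bayes' risk is a joint infimum over $\subfuncspace\times\estspace$. Reducing this to the uniform shrinkage rescaling requires identifying $\bar y$ as the relevant scalar summary of $\subfunci(\initdatai)$ and applying a rearrangement inequality; together with the non-i.i.d.\ structure of the allocation, these are the novel technical components relative to the classical recipe.
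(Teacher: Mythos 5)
Your proposal is correct and follows essentially the same route as the paper: the same two-lemma decomposition, the same Gaussian-prior Bayes-risk limit with the Hardy--Littlewood rearrangement identifying the shrinkage submission $\subfuncil$, and the same convexity-plus-first-order-condition argument reducing to $G(\alpha)=0$. The only (harmless) differences are presentational: you argue convexity of the integrand via concavity of the positive denominator rather than the paper's explicit derivative computation, and your phrase ``conditioning on $\etaisq$'' should strictly be ``conditioning on $(\initdatai,\datai)$'' since $\etaisq$ is a deterministic function of these and the likelihood factorizes accordingly, though the posterior precision you write down is exactly the paper's.
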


\vspace{0.1in}
The Proof of Lemma \ref{lem:v3 NIC} relies on the following two lemmas:
\vspace{0.1in}
\begin{lemma}[Optimal Estimation and Submission]\label{lem:v3 opt est}
	For all $i \in [m]$ and $(n_i,\subfunci,\estimi) \in \NN \times \subfuncspace \times \estspace$. 
    \emph{
	\begin{equation*}
		\penali(\mechany, ((n_i,\subfuncopti,\estimopti),\soptmi)) \leq 
		\penali(\mechany, ((n_i,\subfunci,\estimi), \soptmi)).
	\end{equation*}  
    }
\end{lemma}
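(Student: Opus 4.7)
The plan is a minimax argument tailored to the data-dependent corruption in \mechany. Since the data-collection cost $cn_i$ is unchanged by the choice of $(\subfunci,\estimi)$, it suffices to show that $(\subfuncopti,\estimopti)$ minimizes the maximum risk $\sup_{\mu\in\RR}\EE[(\estimi(\initdatai,\subdatai,\alloci)-\mu)^2\mid\mu]$ over $(\subfunci,\estimi)\in\subfuncspace\times\estspace$, when the remaining agents play $\soptmi$, so that $\subdatami$ is a pool of $(m-1)\nopti$ clean i.i.d.\ samples from $\Ncal(\mu,\sigma^2)$. I will follow the classical Bayes-risk recipe: construct a sequence of priors $\Lambda_\ell=\Ncal(0,\ell^2)$ on $\mu$, compute the minimum Bayes' risk $R_\ell:=\inf_{\subfunci,\estimi}\EE_{\mu\sim\Lambda_\ell}\EE[(\estimi-\mu)^2\mid\mu]$, and show that $R_\ell$ converges from below to the maximum risk of $(\subfuncopti,\estimopti)$ as $\ell\to\infty$. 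Since any Bayes risk lower-bounds any maximum risk, this forces $(\subfuncopti,\estimopti)$ to attain the minimax value.

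First, I would fix $\subfunci$ and compute the Bayes-optimal $\estimil$. The subtlety is that $\datai'$ is neither i.i.d.\ nor exactly Gaussian, because its per-coordinate variance $\sigma^2+\etaisq$ is itself a function of $\initdatai$, $\subdatai=\subfunci(\initdatai)$, and $\datai$. The key observation is that, conditional on $(\initdatai,\subdatai,\datai)$, the scalar $\etaisq$ is deterministic, and each entry of $\datai'$ is then conditionally normal $\Ncal(\mu,\sigma^2+\etaisq)$ and independent across entries. Combined with the Gaussian prior $\Lambda_\ell$, this yields a Gaussian posterior for $\mu$ whose mean is precisely the inverse-variance-weighted average appearing in~\eqref{eqn:mainsopti}; in other words, $\estimil$ has the functional form of $\estimopti$. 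Plugging this in yields the closed-form expression for $R_\ell$ from the sketch, which depends on $\subfunci$ only through the squared sample-mean difference $\bigl(\tfrac{1}{|\subdatai|}\sum_{y\in\subdatai}y-\tfrac{1}{|\datai|}\sum_{z\in\datai}z\bigr)^2$.

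Next I would minimize $R_\ell$ over $\subfunci\in\subfuncspace$. The integrand inside the outer expectation is a decreasing, convex function of that squared difference, so a Hardy--Littlewood rearrangement argument implies that the optimum is the linear shrinkage $\subfuncil(\initdatai)=\{(1+\sigma^2/(|\initdatai|\ell^2))^{-1}x:x\in\initdatai\}$, which is exactly the conjugate-Gaussian posterior-mean shrinkage of $\initdatai$ toward the prior mean $0$. As $\ell\to\infty$ the shrinkage factor tends to $1$, so $\subfuncil\to\identity=\subfuncopti$ pointwise, and the length of $\subfuncil(\initdatai)$ stays at $|\initdatai|=n_i$.

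Finally I would pass to the limit $\ell\to\infty$ in the closed form of $R_\ell$. Under the recommended strategies, the risk of $(\subfuncopti,\estimopti)$ is independent of $\mu$ by the translation invariance of Gaussians combined with inverse-variance weighting, so its supremum equals a single constant, and a direct calculation matches it with $\lim_{\ell\to\infty}R_\ell$. I expect the main obstacle to be the Hardy--Littlewood step: $\subfuncspace$ is extremely rich (arbitrary measurable maps between $\bigcup_n\RR^n$, allowing $\subdatai$ to have arbitrary length and arbitrary dependence on $\initdatai$), and $\etaisq$ enters the risk through the nonlinear quantity $\sigma^2+\alpha^2(\bar y-\bar z)^2$. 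Establishing that shrinkage is jointly optimal in this function space, while keeping track of the Gaussian posterior through the nonlinear corruption rule, is the technical crux; the remaining steps are algebraic identities and a monotone passage to the limit.
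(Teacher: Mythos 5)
Your proposal follows essentially the same route as the paper's proof: the Gaussian-prior sequence $\Lambda_\ell=\Ncal(0,\ell^2)$, the observation that the posterior of $\mu$ given $(\initdatai,\subdatai,\alloci)$ remains Gaussian because $\etaisq$ is a deterministic function of $(\initdatai,\datai)$, the Hardy--Littlewood rearrangement step identifying the shrinkage map $\subfuncil$ as the Bayes-optimal submission, and the passage to the limit matching $R_\infty$ with the constant maximum risk of $(\subfuncopti,\estimopti)$. One small correction: the Bayes-risk integrand is an \emph{increasing} (not decreasing) function of the squared sample-mean difference --- more corruption means more risk --- which is exactly why the rearrangement inequality forces the optimal $\subfunci$ to center that difference at zero.
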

See the Proof of Lemma \ref{lem:v3 opt est} in \S\ref{sec:v3 opt est}

\vspace{0.1in}
\begin{lemma}[Optimal Sample Size]\label{lem:v3 opt nj}
	For all $i \in [m]$ and $n_i \in \NN$. 
    \emph{
	\begin{equation*}
		\penali(\mechany, ((\nopti,\subfuncopti,\estimopti),\soptmi)) \leq 
		\penali(\mechany, ((n_i,\subfuncopti,\estimopti), \soptmi)).
	\end{equation*}     
    }
\end{lemma}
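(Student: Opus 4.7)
The plan is to combine the explicit formula for the penalty derived in Step 2.1 of the proof sketch with a convexity argument, then verify the first-order condition using the defining equation $G(\alpha)=0$. Fixing all other agents at $\soptmi$ and writing $\penali(\mechany, ((n_i,\subfuncopti,\estimopti),\soptmi)) = R_\infty(n_i) + cn_i$, where
\[
R_\infty(n_i) = \EE_{x\sim\Ncal(0,1)}\left[\left(\frac{(m-2)\nopti}{\sigma^2+\alpha^2(\sigma^2/n_i+\sigma^2/\nopti)x^2}+\frac{n_i+\nopti}{\sigma^2}\right)^{-1}\right],
\]
reduces the problem to a one-dimensional minimization over $n_i\in\RR_+$ (extending from $\NN$ by convexity and then rounding at the end, since the claim is stated for $n_i\in\NN$ but the minimizer $\nopti=\sigma/\sqrt{cm}$ is taken as an integer by the convention established after~\eqref{eqn:mainsopti}).

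The first step is to establish convexity of the integrand in $n_i$ for each fixed $x$. Writing the denominator as $A n_i/(a n_i + b) + n_i/\sigma^2 + \nopti/\sigma^2$ with $A=(m-2)\nopti$, $a=\sigma^2+\alpha^2\sigma^2 x^2/\nopti$, and $b=\alpha^2\sigma^2 x^2$, the first term is concave in $n_i$ (a standard hyperbolic form) and the remaining terms are linear; hence the denominator is concave and positive. Since $1/g$ is convex whenever $g>0$ is concave (immediate from $(1/g)''=(2(g')^2-gg'')/g^3\ge 0$), the integrand is convex in $n_i$. Linearity of expectation and the addition of the linear term $cn_i$ preserve convexity, so $\penali$ is a convex function of $n_i$. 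It therefore suffices to verify that the derivative vanishes at $n_i=\nopti$.

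The second step is the first-order calculation. Differentiating under the expectation and evaluating at $n_i=\nopti$, the $n_i$-dependence of the denominator collapses so that the derivative takes the form
\[
\frac{d}{dn_i}R_\infty\bigg|_{n_i=\nopti} = -\EE_{x\sim\Ncal(0,1)}\!\left[\frac{(m-2)\nopti\cdot b\,/(a\nopti+b)^2 + 1/\sigma^2}{\bigl(A\nopti/(a\nopti+b)+2\nopti/\sigma^2\bigr)^2}\right],
\]
and the Nash condition to check is that this quantity plus $c$ equals zero. Substituting $\nopti=\sigma/\sqrt{cm}$, factoring out common powers of $\sigma$ and $c$, and writing the resulting Gaussian integrals in the form $\int_0^\infty (P+Qx^2)^{-k}e^{-x^2/2}dx$ for $k=1,2$, the standard identity $\int_0^\infty(P+Qx^2)^{-1}e^{-x^2/2}dx = \sqrt{\pi/(2Q)}\,\exp(P/(2Q))\,\erfc(\sqrt{P/(2Q)})$ (and its derivative in $P$ for the $k=2$ case) yields precisely the combination of polynomial and $\exp(\cdot)\erfc(\cdot)$ terms that appear in $G(\alpha)$ in~\eqref{eqn:mainalpha}. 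After clearing a common positive factor, the derivative-plus-$c$ condition reduces to $G(\alpha)=0$, which holds by the choice of $\alpha$ established in~\S\ref{sec:mainsolutionalpha}.

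The main obstacle is the second step: carefully unpacking the derivative so that the Gaussian integrals match the exact algebraic form of $G$, with all the constants $\sigma$, $c$, $m$, $\alpha$ appearing with the correct exponents. This is essentially a bookkeeping task, but the mismatch between the natural parameterization of the derivative (in terms of $A,a,b$) and the form chosen for $G$ (which uses $\sqrt{\sigma}(m/c)^{1/4}$ and $\sigma\sqrt{m/c}$ as its fundamental units) means that care is needed to rewrite everything consistently. Once this identification is complete, convexity of the penalty in $n_i$ upgrades the first-order condition to a global minimum, which is exactly the statement of the lemma.
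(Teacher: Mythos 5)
Your proposal is correct and follows essentially the same route as the paper: reduce to minimizing $R_\infty(n_i)+cn_i$ over $n_i$, establish convexity of the integrand pointwise in $x$ (the paper does this by showing $\partial l/\partial n_i$ is increasing, while your concave-denominator argument is a slightly cleaner way to reach the same conclusion), and then verify that the first-order condition at $\nopti$, after evaluating the Gaussian integrals $\EE[(L+x^2)^{-1}]$ and $\EE[(L+x^2)^{-2}]$ in terms of $\exp(\cdot)\erfc(\cdot)$, collapses exactly to $G(\alpha)=0$. The only caveat is that your quoted integral identity is off by a normalization constant as written, but since you flag the constant bookkeeping as the remaining task and the paper's Lemmas~\ref{lem:penalty function} and~\ref{lem:expvaluebound} carry out precisely this computation, this is not a substantive gap.
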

See the Proof of Lemma \ref{lem:v3 opt nj} in \S\ref{sec:v3 opt nj}

\begin{proof}[Proof of Lemma \ref{lem:v3 NIC}]
	By Lemma \ref{lem:v3 opt est} and \ref{lem:v3 opt nj}, we have, 
	for all $i \in [m]$ and $\spi = (n_i,\subfunci,\estimi)\in \NN \times \subfuncspace \times \estspace$,
	\begin{align*}
		\penali(\mechany, \sopt) =\,& \penali(\mechany, ((\nopti,\subfuncopti,\estimopti),\soptmi)) \leq 
		\penali(\mechany, ((n_i,\subfuncopti,\estimopti), \soptmi))\\
		\leq\, &  
		\penali(\mechany, ((n_i,\subfunci,\estimi), \soptmi))
		=\penali(\mechany, (\spi, \soptmi))
	\end{align*}
\end{proof}

\subsubsection{Proof of Lemma \ref{lem:v3 opt est}\label{sec:v3 opt est}}

In this section, we will prove Lemma~\ref{lem:v3 opt est}, which, intuitively states that, regardless of the amount of data collected, agent $i$ should 
submit the data as is ($\subfuncopti=\identity$) and 
use the weighted average estimator in~\eqref{eqn:mainsopti} to estimate $\mu$.
We will do so via the following three step procedure,
inspired by well--known techniques for proving minimax optimality of estimators (e.g see Theorem 1.12, Chapter 5 of~\citet{lehmann2006theory}).
\begin{enumerate}
	\item First, we construct a sequence of prior distributions $\crl*{\Lambda_\ell}_{\ell\ge1}$ for $\mu$ and calculate the sequence of Bayesian risks under the prior distributions:
	\begin{equation*}
		R_{\ell}:=
		\inf_{\subfunci\in\Acal, \estimi\in\estimspace}
	\EE_{\mu\sim\Lambda_{\ell}}\brk*{
			\EE\brk*{
				\prn*{
					\estimi(\initdatai, \subfunci(\initdatai), \alloci)-\mu}^2 \big|\,\mu
			}
		},
		\quad \ell\ge1.
	\end{equation*}
	\item Then, we will show that
		$\lim_{\ell\to\infty}R_\ell
		=
		\sup_{\mu}\EE\brk*{
			(\estimopti(\initdatai,
                    \subfuncopti(\initdatai), \alloci)-\mu)^2 \,\big|\,\mu
		}$.
	\item Finally, as the Bayesian risk is a lower bound on maximum risk, we will conclude that $(\subfuncopti,\estimopti)$ is minimax optimal.
\end{enumerate}


Without loss of generality, we focus only on the deterministic $\subfunci$ and $\estimi$. 
If either of them are stochastic, we can condition on the external source of randomness and treat them as deterministic functions.
Our proof holds for any realization of this external source of randomness, and hence it will hold in expectation as well.
Similarly, $\datai$ is randomly chosen in Algorithm~\ref{alg:main}.
In the following, we condition on this randomness and the entire proof will carry through.

Note that $\subdatai=\subfunci(\initdatai)$. We will use both of them interchangeably in the subsequent proof.

\paragraph{Step 1 (Bounding the Bayes' risk under the sequence of priors):}
We will use a sequence of normal priors  $\Lambda_\ell := \Ncal(0,\ell^2)$ for all $\ell \ge 1$. To bound the Bayes' risk under these priors, we will first note that 
for a fixed $\subfunci\in\subfuncspace$,
\begin{align}
	x|\mu \sim \Ncal(\mu,\sigma^2)& \quad \forall x \in \initdatai\cup \datai; \\
	x|\mu,\etaisq \sim \Ncal(\mu, \sigma^2 + \eta_i^2)& \quad \forall x \in \datai'.
\end{align}
Here, recall that $\eta_i^2$ is a function of $\subdatai$ and $\datai$. 
Because both $\subdatai=\subfunci(\initdatai)$ and $\etaisq$ are deterministic functions of $\initdatai,\datai$ when $\subfunci$ is fixed,
the posterior distribution for $\mu$ conditioned on $(\initdatai, \subdatai, \alloci)$ can be calculated as follows:
\begin{align*}
	&p\prn*{
		\mu|
		\initdatai,\subdatai,\alloci
	} = p\prn*{
		\mu|\initdatai,\subdatai, \datai,\datai',\etaisq
	} = p\prn*{
		\mu|\initdatai,\datai,\datai'
	}
        \\    
	&\hspace{0.2in}\propto  p\prn*{
		\mu,
		\initdatai,\datai,\datai'
	} 
        = p\prn*{
		\datai'|
		\initdatai,\datai,\mu
	} 
	p\prn*{
		\initdatai,\datai | \mu
	} 
	p(\mu)
	= p\prn*{
		\datai'|
		\initdatai,\datai,\mu
	} 
	p\prn*{
		\initdatai| \mu
	}
	p\prn*{
		\datai | \mu
	}
	p(\mu)\\
	&\hspace{0.2in}\propto
	\exp\prn*{
		-\frac{1}{2(\sigma^2+\eta_i^2)}\sum_{x\in \datai'}(x-\mu)^2
	}
	\exp\prn*{
		-\frac{1}{2\sigma^2}\sum_{x\in\initdatai\cup \datai}(x-\mu)^2
	}
	\exp\prn*{
		-\frac{\mu^2}{2\ell^2}
	}\\
	&\hspace{0.2in} \propto 
	\exp\prn*{
		-\frac{1}{2}\prn*{
			\frac{\abr{\datai'}}{\sigma^2+\eta_i^2}
			+
			\frac{\abr{\initdatai}+\abr{\datai}}{\sigma^2}
			+
			\frac{1}{\ell^2}
		}\mu^2
	}
	\exp\prn*{
		\frac{1}{2}2\prn*{
			\frac{\sum_{x\in \datai'}x}{\sigma^2+\eta_i^2}
			+
			\frac{\sum_{x\in\initdatai\cup \datai}x}{\sigma^2}
		}\mu
	}\\
	&\hspace{0.2in} = \exp\prn*{-\frac{1}{2}\prn*{
			\frac{1}{\sigma^2_\ell}\mu^2
			-2\frac{\mu_\ell}{\sigma^2_\ell}\mu
		}
	}
	\propto\exp\prn*{-\frac{1}{2\sigma^2_\ell}(\mu-\mu_\ell)^2},
\end{align*}
where 
\begin{align*}
	\mu_\ell = & 
	\frac{\frac{\sum_{x\in \datai'}x}{\sigma^2+\eta_i^2}
		+
		\frac{\sum_{x\in\initdatai\cup \datai}x}{\sigma^2}}{\frac{\abr{\datai'}}{\sigma^2+\eta_i^2}
		+
		\frac{\abr{\initdatai}+\abr{\datai}}{\sigma^2}
		+
		\frac{1}{\ell^2}},
  \hspace{0.1in}\text{and}\hspace{0.1in}
	\sigma^2_\ell = 
	\frac{1}{\frac{\abr{\datai'}}{\sigma^2+\eta_i^2}
		+
		\frac{\abr{\initdatai}+\abr{\datai}}{\sigma^2}
		+
		\frac{1}{\ell^2}}.
 \numberthis \label{eqn:posteriormeanvarl}
\end{align*}

We can therefore conclude that (despite the non i.i.d nature of the data), the posterior for $\mu$ is Gaussian with mean and variance as shown above.
We have:
\begin{equation*}
	\mu|
	\initdatai,\subdatai,\alloci
	\sim
	\Ncal(\mu_\ell,\sigma^2_\ell).
\end{equation*}
Next, following standard steps (See Corollary 1.2 in Chapter 4 of~\cite{lehmann2006theory}),
we know that
$
\EE_{\mu}\brk*{
	\prn*{
		\estimi(\initdatai, \subdatai,\alloci)-\mu}^2
	|
	\initdatai, \subdatai,\alloci
}
$ 
is minimized when 
$
\estimi(\initdatai, \subdatai,\alloci) = 
\EE_{\mu}\brk*{
	\mu
	|
	\initdatai, \subdatai,\alloci
}
=\mu_\ell
$.
This shows that for any $\subfunci\in\estimi$, 
the optimal $\estimi$ is simply the posterior mean of $\mu$ under the prior $\Lambda_\ell$ conditioned on $(\initdatai,\subfunci(\initdatai),\alloci)$.
We can rewrite the minimum averaged risk over $\estimspace$ by switching the order of expectation: 
\begin{align*}
	&\inf_{\estimi\in\estimspace}\EE_{\mu\sim\Lambda_{\ell}}\brk*{
		\EE\brk*{
			\prn*{
				\estimi(\initdatai,\subdatai,\alloci)-\mu}^2|\mu
		}
	}\\
	& \hspace{0.1in}= 
	\inf_{\estimi\in\estimspace}\EE_{\initdatai,\datai,\datai'}\brk*{
		\EE_{\mu}\brk*{
			\prn*{
				\estimi(\initdatai,\subdatai,\alloci)-\mu}^2
			|
			\initdatai,\datai,\datai'
		}
	}
 \\
	& \hspace{0.1in}=
	\EE_{\initdatai,\datai,\datai'}\brk*{
		\EE_{\mu}\brk*{
			\prn*{
				\mu_\ell-\mu}^2
			|
			\initdatai,\datai,\datai'
		}
	}
	=  
	\EE_{\initdatai,\datai,\datai'}\brk*{
		\sigma_\ell^2
	} \\
        &\hspace{0.1in}
	=
	\EE_{\initdatai,\datai}\brk*
	{
		\frac{1}{\frac{\abr{\datai'}}{\sigma^2+\eta_i^2}
			+
			\frac{\abr{\initdatai}+\abr{\datai}}{\sigma^2}
			+
			\frac{1}{\ell^2}}
	} ,
 \numberthis
	\label{eq:value opt sub}
\end{align*}
the expectation in the last step involves only $\initdatai,\datai$
because $\sigma_\ell^2$ depends only on $\initdatai,\datai$ and $\abr{\datai'}$, but not the instantiation of $\datai'$.
Next, we will show that \eqref{eq:value opt sub} is minimized for the following  choice of $\subfunci$
which shrinks each points in $X_i$ by an amount that depends on the prior $\Lambda_\ell$'s variance $\ell^2$:
\begin{align}
    \subfunci(\initdatai) =\left\{\frac{\abr{\initdatai}/\sigma^2}{\abr{\initdatai}/\sigma^2+1/\ell^2}\,x \;
        \text{, for each $x\in\initdatai$}\right\}.
    \label{eqn:subfunclopt}
\end{align}

\begin{remark}
An interesting observation (albeit not critical to the proof) here is that $\subfunci$ in~\eqref{eqn:subfunclopt} converges pointwise to $\subfuncopti$, i.e. $\identity$, as $\ell\to\infty$. This shows that the optimal submission function under the prior converges to $\subfuncopti$.
We can make a similar observation about the posterior mean in~\eqref{eqn:posteriormeanvarl}, where
$\mu_\ell$ converges to $\estimopti$ as $\ell\to\infty$.
\end{remark}

To prove~\eqref{eqn:subfunclopt}, we first define the following quantities.
\begin{align*}
	\estiminitdatai
	:= 
	\frac{1}{\abr{\initdatai}}\sum_{x\in\initdatai}x,\quad
	\estimsubdatai
	:= 
	\frac{1}{\abr{\subdatai}}\sum_{x\in\subdatai}x,\quad
	\estimmatdatai
	:= 
	\frac{1}{\abr{\datai}}\sum_{s\in \datai}x.
\end{align*}
We will also find it useful to express $\eta_i^2$ as follows. Here $\alpha$ is as defined in~\eqref{eqn:mainalpha}. We have:
\begin{equation*}
	\eta^2_i 
	=
	\alpha^2\prn*{
		\estimsubdatai-\estimmatdatai
	}^2
\end{equation*}
The following calculations show that,
conditioned on $\initdatai$, $\estimmatdatai-\mu$ and $\mu-\frac{\abr{\initdatai}/\sigma^2}{\abr{\initdatai}/\sigma^2+1/\ell^2}\estiminitdatai$ are independent Gaussian random variables\footnote{%
This is akin to the observation that given $u,v\sim\Ncal(0,1)$, then $u-v$ and $u+v$ are independent. 
}:
\begin{align*}
	& p\prn*{\estimmatdatai-\mu,\mu|\initdatai}
	\propto 
	p\prn*{\estimmatdatai-\mu,\mu,\initdatai} \\
	= & 
	p\prn*{\estimmatdatai-\mu,\initdatai|\mu}p(\mu)
	=
	p\prn*{\estimmatdatai-\mu|\mu}
	p\prn*{\initdatai|\mu}p(\mu)
	\\
	\propto&
	\exp\prn*{
		-\frac{1}{2}\frac{\abr{\datai}}{\sigma^2}\prn*{\estimmatdatai-\mu}^2
	}
	\exp\prn*{
		-\frac{1}{2\sigma^2}\sum_{x\in\initdatai}(x-\mu)^2
	}
	\exp\prn*{
		-\frac{1}{2\ell^2}\mu^2
	}\\
	\propto&
	\underbrace{\exp\prn*{
		-\frac{1}{2}\frac{\abr{\datai}}{\sigma^2}\prn*{\estimmatdatai-\mu}^2
	}}_{\propto p(\estimmatdatai-\mu|\initdatai)}
	\underbrace{\exp\prn*{
		-\frac{1}{2}\prn*{
			\frac{\abr{\initdatai}}{\sigma^2}
			+
			\frac{1}{\ell^2}
		}
		\prn*{
			\mu-\frac{\abr{\initdatai}/\sigma^2}{\abr{\initdatai}/\sigma^2+1/\ell^2}\estiminitdatai
		}^2
	}}_{\propto p\prn*{\mu-\frac{\abr{\initdatai}/\sigma^2}{\abr{\initdatai}/\sigma^2+1/\ell^2}\estiminitdatai|\initdatai}}
\end{align*}
Thus conditioning on $\initdatai$, we can write
\begin{equation*}
	\begin{pmatrix}
		\estimmatdatai-\mu\\
		\mu-\frac{\abr{\initdatai}/\sigma^2}{\abr{\initdatai}/\sigma^2+1/\ell^2}\estiminitdatai
	\end{pmatrix}
	\sim
	\Ncal\prn*{
		\begin{pmatrix}
			0\\0    
		\end{pmatrix},
		\begin{pmatrix}
			\frac{\sigma^2}{\abr{\datai}} & 0\\
			0 & \frac{1}{\abr{\initdatai}/\sigma^2+1/\ell^2}
		\end{pmatrix}
	}.
\end{equation*}
which leads us to 
\begin{equation*}
	\estimmatdatai-    
	\left.\frac{\abr{\initdatai}/\sigma^2}{\abr{\initdatai}/\sigma^2+1/\ell^2}\estiminitdatai
	\right|\initdatai
	\sim\Ncal\prn*{
		\, 0, \;
		\underbrace{
			\frac{\sigma^2}{\abr{\datai}}+\frac{1}{\abr{\initdatai}/\sigma^2+1/\ell^2}
		}_{=:\tilde\sigma^2_\ell}
	}
 \numberthis\label{eqn:sigmatildel}
\end{equation*}
Next, we will rewrite the squared difference in $\eta_i^2$ as follows:
\begin{align*}
	\frac{\eta^2_i}{\alpha^2}
	= &
	\prn*{
		\estimsubdatai-\estimmatdatai
	}^2  \\
	= &
	\prn*{
		\underbrace{
			\estimmatdatai-    
			\frac{\abr{\initdatai}/\sigma^2}{\abr{\initdatai}/\sigma^2+1/\ell^2}\estiminitdatai}
		_{=\tilde\sigma_\ell e}
		+\prn*{
			\underbrace{
				\frac{\abr{\initdatai}/\sigma^2}{\abr{\initdatai}/\sigma^2+1/\ell^2}\estiminitdatai
				-
				\estimsubdatai}_{=:\phi(\initdatai,\subfunci)}
		}
	}^2.
\end{align*}
Here, we observe that the first part of the RHS above
is equal to $\tilde\sigma_\ell$, where
 $e$ is a normal noise $e|\initdatai \sim \Ncal(0,1)$
 and
$\tilde\sigma_\ell$ is as defined in~\eqref{eqn:sigmatildel}.
For brevity, we will denote the second part of the RHS
as
$
\phi(\initdatai,\subfunci)
$, which intuitively
characterizes the difference between $\initdatai$ and $\subdatai$.
Importantly, $\phi(\initdatai,\subfunci)=0$ when $\subfunci$ is chosen to be~\eqref{eqn:subfunclopt}.

Using $e$ and $\phi$, we can rewrite~\eqref{eq:value opt sub} using conditional expectation: 
\begin{align*}
&\EE_{\initdatai,\datai}\brk*
	{
		\frac{1}{\frac{\abr{\datai'}}{\sigma^2+\eta_i^2}
			+
			\frac{\abr{\initdatai}+\abr{\datai}}{\sigma^2}
			+
			\frac{1}{\ell^2}}
	}
= 
\EE_{\initdatai}\brk*
	{\EE_{\datai|\initdatai}\brk*{
		\frac{1}{\frac{\abr{\datai'}}{\sigma^2+\eta_i^2}
			+
			\frac{\abr{\initdatai}+\abr{\datai}}{\sigma^2}
			+
			\frac{1}{\ell^2}}
	}}
\\
&\hspace{0.1in}
	= 
\EE_{\initdatai}\brk*
	{\EE_{e|\initdatai}\brk*{
		\frac{1}{\frac{\abr{\datai'}}{\sigma^2+\alpha^2\prn*{\tilde\sigma_\ell e+\phi(\initdatai,\subfunci)}^2}
			+
			\frac{\abr{\initdatai}+\abr{\datai}}{\sigma^2}
			+
			\frac{1}{\ell^2}}
	}}\\
&\hspace{0.1in}
	=
 \EE_{\initdatai}\brk*{
	\int_{-\infty}^\infty
 \underbrace{
		\frac{1}{\frac{\abr{\datai'}}{\sigma^2+\alpha^2\tilde\sigma_\ell^2\prn*{ e+\phi(\initdatai,\subfunci)/\tilde\sigma_\ell}^2}
			+
			\frac{\abr{\initdatai}+\abr{\datai}}{\sigma^2}
			+
			\frac{1}{\ell^2}}}_{=:F_1(e+\phi(\initdatai,\subfunci)/\tilde\sigma_\ell)}
   \underbrace{\frac{1}{\sqrt{2\pi}}\exp\prn*{-\frac{e^2}{2}}}_{=:F_2(e)}de
	},
 \numberthis\label{eq:value opt sub rewrite}
\end{align*}
where we use the fact that $e|\initdatai\sim\Ncal(0,1)$ in the last step.
To proceed, we will consider the inner expectation in the RHS above. For any fixed $\initdatai$, $F_1(\cdot)$ (as marked on the RHS) is an even function that monotonically increases on $[0,\infty)$ bounded by $\frac{\sigma}{\abr{\initdatai}+\abr{\datai}}$ and  
$F_2(\cdot)$ (as marked on the RHS) is an even function that monotonically decreases on $[0,\infty)$.
That means,  
for any $a\in\RR$,
\begin{equation*}
\int_{-\infty}^{\infty}
F_1(e-a)F_2(e)de
\le
\int_{-\infty}^{\infty}
\frac{\sigma}{\abr{\initdatai}+\abr{\datai}}F_2(e)de
=\frac{\sigma}{\abr{\initdatai}+\abr{\datai}}
<\infty.
\end{equation*}
By a corollary of the Hardy-Littlewood inequality in Lemma~\ref{lem:re hardy-L ineq}, we have
\begin{equation}\label{eqn:use hardy thm1}
\int_{-\infty}^{\infty}
F_1(e+\phi(\initdatai,\subfunci)/\tilde\sigma_\ell)F_2(e)de
\ge
\int_{-\infty}^{\infty}
F_1(e)F_2(e)de,
\end{equation}
the equality is achieved when $\phi(\initdatai,\subfunci)/\tilde\sigma_\ell=0$. 
In particular, the equality holds when $\subfunci$ is chosen as specified in~\eqref{eqn:subfunclopt}.

Now, to complete Step 1, we combine~\eqref{eq:value opt sub},~\eqref{eq:value opt sub rewrite}
and~\eqref{eqn:use hardy thm1} to obtain
\begin{align*}
&\inf_{\estimi\in\estimspace}\EE_{\mu\sim\Lambda_{\ell}}\brk*{
		\EE\brk*{
			\prn*{
				\estimi(\initdatai,\subdatai,\alloci)-\mu}^2|\mu
		}
	}= 
 \EE_{\initdatai}\brk*{
 \int_{-\infty}^{\infty}
F_1(e+\phi(\initdatai,\subfunci)/\tilde\sigma_\ell)F_2(e)de
 }\\
& \hspace{0.1in}\ge
\EE_{\initdatai}\brk*{
\int_{-\infty}^{\infty}
F_1(e)F_2(e)de
}
=
\int_{-\infty}^{\infty}
F_1(e)F_2(e)de,
\numberthis\label{eq:value opt sub LB}
\end{align*}
where the last step is because conditioning on each realization of $\initdatai$, the term inside the expectation is a constant.
Using~\eqref{eq:value opt sub LB}, we can write the Bayes risk $R_\ell$ under any prior $\Lambda_\ell$ as:
\begin{align*}
R_{\ell}:= &
		\inf_{\subfunci\in\Acal, \estimi\in\estimspace}
	\EE_{\mu\sim\Lambda_{\ell}}\brk*{
			\EE\brk*{
				\prn*{
					\estimi(\initdatai, \subdatai, \alloci)-\mu}^2 \big|\,\mu
			}
		} 
  = \int_{-\infty}^{\infty}
F_1(e)F_2(e)de\\
= & \EE_{e\sim\Ncal(0,1)}\brk*{
\frac{1}{\frac{\abr{\datai'}}{\sigma^2+\alpha^2\tilde\sigma_\ell^2e^2}
			+
			\frac{\abr{\initdatai}+\abr{\datai}}{\sigma^2}
			+
			\frac{1}{\ell^2}}
}
\end{align*}
Because the term inside the expectation is bounded by $\frac{\sigma^2}{\abr{\initdatai}+\abr{\datai}}$
and
$\lim_{\ell\to\infty}\tilde\sigma_\ell^2=
\frac{\sigma^2}{\abr{\datai}}+\frac{\sigma^2}{\abr{\initdatai}}$, we can use dominated convergence theorem to show that:
\begin{align*}
R_\infty
:=
\lim_{\ell\to\infty}R_\ell
=
\EE_{e\sim\Ncal(0,1)}\brk*{
\frac{1}{\frac{\abr{\datai'}}{\sigma^2+\alpha^2\prn*{
\frac{\sigma^2}{\abr{\datai}}+\frac{\sigma^2}{\abr{\initdatai}}
}e^2}
			+			\frac{\abr{\initdatai}+\abr{\datai}}{\sigma^2}}
}
\numberthis\label{eqn:bayeslimitexpression}
\end{align*}

\paragraph{Step 2: Maximum risk of $(\subfuncopti,\estimopti)$:}
Next, we will compute the maximum risk of the $(\subfuncopti,\estimopti)$ (see~\eqref{eqn:mainsopti}) and show that it is equal to the RHS of~\eqref{eqn:bayeslimitexpression}.
First note that we can write,
\begin{equation*}
	\begin{pmatrix}
		\estiminitdatai-\mu\\
		\estimmatdatai-\mu
	\end{pmatrix} 
	\sim
	\Ncal\prn*{
		\begin{pmatrix}
			0\\0    
		\end{pmatrix},
		\begin{pmatrix}
			\frac{\sigma^2}{\abr{\initdatai}} & 0\\
			0 & \frac{\sigma^2}{\abr{\datai}}
		\end{pmatrix}
	}.
\end{equation*}
By a linear transformation of this Gaussian vector, we obtain
\begin{align*}
	&\begin{pmatrix}
		\frac{\abr{\initdatai}}{\sigma^2}\prn*{\estiminitdatai-\mu}+
		\frac{\abr{\datai}}{\sigma^2}\prn*{
			\estimmatdatai-\mu
		}\\
		\estiminitdatai-\estimmatdatai
	\end{pmatrix}
	=
	\begin{pmatrix}
		\frac{\abr{\initdatai}}{\sigma^2} & \frac{\abr{\datai}}{\sigma^2}\\
		1 & -1
	\end{pmatrix}
	\begin{pmatrix}
		\estiminitdatai-\mu\\
		\estimmatdatai-\mu
	\end{pmatrix} \\
	\sim & 
	\Ncal\prn*{
		\begin{pmatrix}
			0\\0    
		\end{pmatrix},
		\begin{pmatrix}
			\frac{\abr{\initdatai}+\abr{\datai}}{\sigma^2} & 0\\
			0 & \frac{\sigma^2}{\abr{\initdatai}} + \frac{\sigma^2}{\abr{\datai}}
		\end{pmatrix}
	},
\end{align*}
which means $\frac{\abr{\initdatai}}{\sigma^2}\prn*{\estiminitdatai-\mu}+
\frac{\abr{\datai}}{\sigma^2}\prn*{
	\estimmatdatai-\mu
}$ 
and
$
\frac{\eta_i}{\alpha}=\estiminitdatai-\estimmatdatai$
are independent Gaussian random variables.
Therefore, the the maximum risk of $(\subfuncopti,\estimopti)$ is:
\begin{align*}
	&\sup_{\mu}\EE\brk*{
		(\estimopti(\initdatai, \subdatai, \alloci)-\mu)^2 | \mu
	}
	= 
	\sup_{\mu}\EE_{\eta_i}\brk*{
		\EE\brk*{
			\left.
			\prn*{
				\frac{\frac{\sum_{x\in \datai'}x}{\sigma^2+\eta_i^2}
					+
					\frac{\abr{\initdatai}}{\sigma^2}\estiminitdatai
					+
					\frac{\abr{\datai}}{\sigma^2}\estimmatdatai}{\frac{\abr{\datai'}}{\sigma^2+\eta_i^2}
					+
					\frac{\abr{\initdatai}+\abr{\datai}}{\sigma^2}
				}
				-\mu}^2
			\right|
			\eta_i
		}
	}\\
	= & 
	\sup_{\mu}\EE_{\eta_i}\brk*{
		\EE\brk*{
			\left.
			\prn*{
				\frac{\frac{\sum_{x\in \datai'}(x-\mu)}{\sigma^2+\eta_i^2}
					+
					\frac{\abr{\initdatai}}{\sigma^2}(\estiminitdatai-\mu)
					+
					\frac{\abr{\datai}}{\sigma^2}(\estimmatdatai-\mu)}{\frac{\abr{\datai'}}{\sigma^2+\eta_i^2}
					+
					\frac{\abr{\initdatai}+\abr{\datai}}{\sigma^2}
			}}^2
			\right|
			\eta_i
		}
	}\\
	= & 
	\sup_{\mu}\EE_{\eta_i}\brk*{
		\frac{
			\EE\brk*{
				\left.
				\prn*{
					{\frac{\sum_{x\in \datai'}(x-\mu)}{\sigma^2+\eta_i^2}
						+
						\frac{\abr{\initdatai}}{\sigma^2}(\estiminitdatai-\mu)
						+
						\frac{\abr{\datai}}{\sigma^2}(\estimmatdatai-\mu)}}^2
				\right|
				\eta_i
			}
		}{\prn*{\frac{\abr{\datai'}}{\sigma^2+\eta_i^2}
				+
				\frac{\abr{\initdatai}+\abr{\datai}}{\sigma^2}
			}^2}
	}\\
	= & 
	\sup_{\mu}\EE_{\eta_i}\brk*{
		\frac{1}{\prn*{\frac{\abr{\datai'}}{\sigma^2+\eta_i^2}
				+
				\frac{\abr{\initdatai}+\abr{\datai}}{\sigma^2}
			}^2}
		\prn*{
			\frac{\abr{\datai'}(\sigma^2+\eta_i^2)}{(\sigma^2+\eta_i^2)^2}
			+
			\frac{\abr{\initdatai}+\abr{\datai}}{\sigma^2}
		}
	}\\
	= & 
	\EE_{\eta_i}\brk*{
		\frac{1}{\frac{\abr{\datai'}}{\sigma^2+\eta_i^2}
			+
			\frac{\abr{\initdatai}+\abr{\datai}}{\sigma^2}
		}
	}
	=
	\EE\brk*
	{
		\frac{1}{\frac{\abr{\datai'}}{{\sigma^2+\alpha^2\prn*{\estimmatdatai-    
						\estiminitdatai}^2}}
			+
			\frac{\abr{\initdatai}+\abr{\datai}}{\sigma^2}
	}}
\end{align*}
Because 
$\estimmatdatai-\estiminitdatai
\sim
\Ncal\prn*{0,\frac{\sigma^2}{\abr{\initdatai}} + \frac{\sigma^2}{\abr{\datai}}}$, we can further write the maximum risk as:
\begin{equation*}
\sup_{\mu}\EE\brk*{
		(\estimopti(\initdatai, \subdatai, \alloci)-\mu)^2 | \mu
	}
=
\EE_{e\sim\Ncal(0,1)}\brk*{
\frac{1}{\frac{\abr{\datai'}}{\sigma^2+\alpha^2\prn*{
\frac{\sigma^2}{\abr{\datai}}+\frac{\sigma^2}{\abr{\initdatai}}
}e^2}
			+			\frac{\abr{\initdatai}+\abr{\datai}}{\sigma^2}}
} = R_\infty
\end{equation*}
Here, we have observed that the final expression in the above equation is exactly the same as the Bayes' risk in the limit in~\eqref{eqn:bayeslimitexpression} from Step 1.


\paragraph{Step 3: Minimax optimality of $(\subfuncopti,\estimopti)$:}
As the maximum is larger than the average, we can write, for any prior $\Lambda_\ell$, and
any $(\subfunci,\estimi)\in\subfuncspace\times\estimspace$,
\begin{align*}
\sup_{\mu}\EE\brk*{
		(\estimi(\initdatai, \subfunci(\initdatai), \alloci)-\mu)^2 | \mu
	}
 \geq \EE_{\Lambda_\ell}\brk*{\EE\brk*{
		(\estimi(\initdatai, \subfunci(\initdatai), \alloci)-\mu)^2 | \mu
	}}
 \geq R_\ell.
\end{align*}
As this is true for all $\ell$, by taking the limit we have, for all $(\subfunci,\estimi)\in\subfuncspace\times\estimspace$,
\begin{align*}
\sup_{\mu}\EE\brk*{
		(\estimi(\initdatai, \subfunci(\initdatai), \alloci)-\mu)^2 | \mu
	}
 \geq R_\infty =
 \sup_{\mu}\EE\brk*{
		(\estimopti(\initdatai, \subfuncopti(\initdatai), \alloci)-\mu)^2 | \mu
	}.
\end{align*}
That is, the recommended $(\subfuncopti, \estimopti)$ has a smaller maximum risk than all other $(\subfunci,\estimi)\in\subfuncspace\times\estimspace$.
This establishes that for any $n_i$,
\begin{equation*}
	\penali(\mechany, ((n_i,\subfuncopti,\estimopti),\soptmi)) =
	\inf_{\subfunci\in\Acal}\inf_{\estimi\in\estimspace}
	\penali(\mechany, ((n_i,\subfunci,\estimi), \soptmi)).
\end{equation*} 

\subsubsection{Proof of Lemma \ref{lem:v3 opt nj}\label{sec:v3 opt nj}}

In the previous section, we showed that for any $n_i$, the optimal $(\subfunci,\estimi)$ were
$(\subfuncopti,\estimopti)$ as given in~\eqref{eqn:mainsopti}.
Now, we show that for the given $(\subfuncopti,\estimopti)$, the optimal
number of samples is $\nopti=\sigma/\sqrt{cm}$.
For this, we will show that $\penali$ is a convex function of $n_i$ and then show that its gradient is $0$ at $\nopti$.

First, noting that
\begin{equation*}
	\estimmatdatai-    
	\estiminitdatai
	\sim
	\Ncal\prn*{
		0,
		\frac{\sigma^2}{\abr{\initdatai}}
		+
		\frac{\sigma^2}{\abr{\datai}}
	},
\end{equation*}
we can rewrite the penalty term as:
\begin{align*}
	p(n_i):= & \penali\prn*{\mechany, ((n_i,\subfuncopti,\estimopti),\soptmi)}
	= 
	\EE\brk*
	{
		\frac{1}{\frac{\abr{\datai'}}{{\sigma^2+\alpha^2\prn*{\estimmatdatai-    
						\estiminitdatai}^2}}
			+
			\frac{\abr{\initdatai}+\abr{\datai}}{\sigma^2}
	}} + cn_i\\
	= & 
	\EE_{x\sim\Ncal(0,1)}\brk*
	{
		\frac{1}{\frac{\abr{\datai'}}{{\sigma^2+\alpha^2\prn*{\frac{\sigma^2}{\abr{\initdatai}}
						+
						\frac{\sigma^2}{\abr{\datai}}}x^2}}
			+
			\frac{\abr{\initdatai}+\abr{\datai}}{\sigma^2}
	}} + cn_i\\
	= & 
	\EE_{x\sim\Ncal(0,1)}\brk*
	{\underbrace{\frac{1}{\frac{(m-2)\nopti}{{\sigma^2+\alpha^2\prn*{\frac{\sigma^2}{n_i}
						+
						\frac{\sigma^2}{\nopti}} x^2}}
			+
			\frac{n_i+\nopti}{\sigma^2}
	}}_{=:l(n_i,x;\alpha)}} + cn_i
 \numberthis\label{eqn:pniexpression}
\end{align*}

\emph{Convexity of penalty function:}
To show that $p(n_i)$ is convex in $n_i$, let us
consider $l(n_i,x;\alpha)$.
Fixing $\alpha$ and $x$, we have 
\begin{align*}
\frac{\partial}{\partial n_i}l(n_i,x;\alpha)
= 
-\sigma^2
\frac{1+
\frac{(m-2)\nopti}{
\prn*{
1+\alpha^2
\prn*{
\frac{1}{n_i}
+
\frac{1}{\nopti}
}x^2
}^2
}
\frac{\alpha^2 x^2}{n_i^2}
}{
\prn*{
\frac{(m-2)\nopti}{{1+\alpha^2\prn*{\frac{1}{n_i}
				+
				\frac{1}{\nopti}} x^2}}
	+{n_i+\nopti}
}^2
}
= 
-\sigma^2
\frac{1+
	\frac{(m-2)\nopti \alpha^2 x^2}{
		\prn*{
			n_i+\alpha^2
			\prn*{
				1
				+
				\frac{n_i}{\nopti}
			}x^2
		}^2
	}
}{
	\prn*{
		\frac{(m-2)\nopti}{{1+\alpha^2\prn*{\frac{1}{n_i}
					+
					\frac{1}{\nopti}} x^2}}
		+{n_i+\nopti}
	}^2
}
\numberthis\label{eqn:p inside der}
\end{align*}
As $\frac{\partial}{\partial n_i}l(n_i,x;\alpha)$ is an increasing function of $n_i$, we have that $l(n_i,x;\alpha)$ is a convex function in $n_i$.
As expectation preserves convexity (see Lemma \ref{lem:convex exp}), $p(n_i)$ is a convex function.

\emph{Penalty is minimized when $n_i=\nopti$.}
Lemma~\ref{lem:penalty function} provides an expression for the derivative of $p(n_i)$ (obtained purely via algebraic manipulations).
Using this,  we have
\begin{align*}
p'(\nopti)
=&
-\frac{\sigma^2}{64\frac{\alpha^2}{m-2}\frac{\alpha}{\sqrt{m \nopti}}m\nopti}
\Bigg(
\frac{4\alpha}{\sqrt{m \nopti}}\prn*{\frac{4\alpha^2 m}{(m-2)\nopti}-1}\\
&-\exp\prn*{\frac{m\nopti}{8\alpha^2}}
\prn*{
\frac{4\alpha^2}{m\nopti}(m+1)-1
}
\sqrt{2\pi}\erfc\prn*{
\frac{1}{2\sqrt{2}\sqrt{\frac{\alpha^2}{m \nopti}}}
}
\Bigg)\\
&+c  \hspace{3in} \prn*{\mbox{By Lemma~\ref{lem:penalty function}}}\\
=&
-\frac{\sigma^2}{64\frac{\alpha^2}{m-2}\frac{\alpha}{\sqrt{m \nopti}}m\nopti}
\Bigg(
\frac{4\alpha}{\sqrt{m \nopti}}\prn*{\frac{4\alpha^2 (m-4)}{(m-2)\nopti}-1}\\
&-\exp\prn*{\frac{m\nopti}{8\alpha^2}}
\prn*{
\frac{4\alpha^2}{m\nopti}(m+1)-1
}
\sqrt{2\pi}\erfc\prn*{
\frac{1}{2\sqrt{2}\sqrt{\frac{\alpha^2}{m \nopti}}}
}
\Bigg)\\
= &G(\alpha) = 0.
\end{align*}
Here, the second step uses the fact that $\nopti = \frac{\sigma}{\sqrt{cm}}$.
Finally, we have observed that the expression is equal to $G(\alpha)$ as defined in~\eqref{eqn:mainalpha} which is $0$ by our choice of $\alpha$.
Since $p'(\nopti)=0$ and $p(\cdot)$ is convex, we can conclude that $p(n_i)$ is minimized when $n_i=\nopti$. Therefore,
	\begin{equation*}
		\penali(\mechany, ((\nopti,\subfuncopti,\estimopti),\soptmi)) \leq 
		\penali(\mechany, ((n_i,\subfuncopti,\estimopti), \soptmi)).
	\end{equation*}

\subsection{Algorithm \ref{alg:main} is individually rational\label{sec:mainirproof}}
As outlined in the main text, the NIC property implies IR since `working on her own' is a valid strategy in the mechanism.
Precisely, if an agent collects any number of points $n_i$, chooses not to submit anything $\subfunc_i(\cdot) = \emptyset$, and then uses the sample average of the points she collected $\estimi(\initdatai, \emptyset, \alloci) = |\initdatai|^{-1}\sum_{x\in\initdatai}x$,
then $(n_i, \subfunci, \estimi)\in\Scal$.

Below, we will prove this more formally and also show that the agent's penalty is strictly smaller when participating.
For any fixed $n_i$,
without participating in the mechanism, the smallest penalty the agent can achieve is by using empirical mean estimation and the penalty is:
\begin{equation*}
\frac{\sigma^2}{n_i}+cn_i    
\end{equation*}
When participating, the agent gets an additional $\nopti$ number of clean data along with some noisy data, provided that all other agents are following $\soptmi$. By using the empirical mean over the clean data, the penalty is:
\begin{equation*}
\frac{\sigma^2}{n_i+\nopti}+cn_i  
<
\frac{\sigma^2}{n_i}+cn_i 
\end{equation*}
Now, since the weighted average estimator in $\sopti$ is minimax optimal, the agent gets even smaller maximum risk and hence smaller penalty. In other words, for any $n_i$,
\begin{align*}
\penali(\mechany, \sopt)
\le 
\penali(\mechany, ((n_i,\subfuncopti,\estimopti), \soptmi))
\le 
\frac{\sigma^2}{n_i+\nopti}+cn_i  
<
\frac{\sigma^2}{n_i}+cn_i 
\end{align*}
By minimizing the RHS with respect to $n_i$, we get $\penali(\mechany, \sopt) < \pminir$.
Thus Algorithm \ref{alg:main} is IR.

\subsection{Algorithm \ref{alg:main} is approximately efficient\label{sec:mainposproof}}
In this section, we will bound the penalty ratio $\pos$ for $\mechany$ at the strategy profiles $\sopti$.

First, noting that $G(\alpha)=0$ (see~\eqref{eqn:mainalpha}), we can rearrange the terms in the equation to obtain:
\begin{equation}
\label{eq:erfc exp}   
\exp\prn*{\frac{m\nopti}{8\alpha^2}}
\erfc\prn*{
\frac{1}{2\sqrt{2}\sqrt{\frac{\alpha^2}{m \nopti}}}
} 
=
\frac{1}{\sqrt{2\pi}}
\frac{\frac{4\alpha}{\sqrt{m \nopti}}\prn*{\frac{4\alpha^2 (m-4)}{(m-2)\nopti}-1}}
{
\frac{4\alpha^2}{m\nopti}(m+1)-1
}   
\end{equation}
Next, we will use the expression for $p(n_i)=\penali(\mechany, (\soptmi, (n_i, \subfuncopti, \estimopti)))$ in Lemma~\ref{lem:penalty function} 
and the equation in~\eqref{eq:erfc exp} to simplify
$p(\nopti)$ as follows:
\begin{align*}
p(\nopti)
= & 
\frac{\sqrt{\frac{\alpha^2}{m\nopti}}\sigma^2\prn*{
2m\sqrt{2\pi}\sqrt{\frac{\alpha^2}{m\nopti}}
-\exp\prn*{\frac{m\nopti}{8\alpha^2}}(m-2)\pi\erfc\prn*{\frac{1}{2\sqrt{2}\sqrt{\frac{\alpha^2}{m\nopti}}}}
}}{4\sqrt{2\pi}\alpha^2}
+c\nopti\\
&\hspace{3in}\prn*{\mbox{By Lemma~\ref{lem:penalty function}}}\\
= & 
\frac{\sqrt{\frac{\alpha^2}{m\nopti}}\sigma^2\prn*{
2m\sqrt{2\pi}\sqrt{\frac{\alpha^2}{m\nopti}}
-(m-2)\pi\frac{1}{\sqrt{2\pi}}
\frac{\frac{4\alpha}{\sqrt{m \nopti}}\prn*{\frac{4\alpha^2 (m-4)}{(m-2)\nopti}-1}}
{
\frac{4\alpha^2}{m\nopti}(m+1)-1
} 
}}{4\sqrt{2\pi}\alpha^2}
+c\nopti \quad \prn*{\mbox{By \eqref{eq:erfc exp}}}\\
=&
\frac{\sigma^2\prn*{
m
-(m-2)
\frac{\frac{4\alpha^2 (m-4)}{(m-2)\nopti}-1}
{
\frac{4\alpha^2}{m\nopti}(m+1)-1
} 
}}{2m\nopti}
+c\nopti\\
=&
\frac{\sigma^2}{2m\nopti}
\frac{
\frac{4\alpha^2}{\nopti}(m+1)-m
-
\frac{4\alpha^2 }{\nopti}(m-4)+(m-2)}
{
\frac{4\alpha^2}{\nopti}\frac{m+1}{m}-1
} 
+c\nopti\\
=&
\frac{\sigma^2}{2m\nopti}
\frac{
\frac{20\alpha^2}{\nopti}-2}
{
\frac{4\alpha^2}{\nopti}\frac{m+1}{m}-1
} 
+c\nopti
=
\frac{\sigma^2}{m\nopti}
\frac{
\frac{10\alpha^2}{\nopti}-1}
{
\frac{4\alpha^2}{\nopti}\frac{m+1}{m}-1
} 
+c\nopti\\
= & 
\sigma\sqrt{\frac{c}{m}}\prn*{
\frac{
\frac{10\alpha^2}{\nopti}-1}
{
\frac{4\alpha^2}{\nopti}\frac{m+1}{m}-1
} +1
}
\end{align*}
From our conclusion in~\S\ref{sec:mainsolutionalpha}, 
we have
$\alpha^2>\frac{\sigma}{\sqrt{cm}}=\nopti$, i.e. 
$\frac{\alpha^2}{\nopti}>1$.
Therefore, we have:
\begin{align*}
\pos(\mechany, \sopt)
= & \frac{m p(\nopti)}{2\sigma\sqrt{cm}}
=
\frac{1}{2}\prn*{
\frac{
\frac{10\alpha^2}{\nopti}-1}
{
\frac{4\alpha^2}{\nopti}\frac{m+1}{m}-1
} +1
}\\
< &
\frac{1}{2}\prn*{
\frac{
\frac{10\alpha^2}{\nopti}-1
+ 
\frac{10\alpha^2}{\nopti}\frac{1}{m}
+
\prn*{
\frac{2\alpha^2}{\nopti}\frac{m+1}{m}-2
}}
{
\frac{4\alpha^2}{\nopti}\frac{m+1}{m}-1
} +1
}
=2.
\end{align*}

\section{Proof of Theorem~\ref{thm:nofalse}}

We will use $\mechind$ to denote the mechanism in~\S\ref{sec:onednofalse}, as it \emph{pools} the datsets, but \emph{checks} for the \emph{size} of the dataset submitted by each agent.
For clarity, we have stated $\mechind$ algorithmically in Algorithm~\ref{alg:nofalse}.
We will also re-state the recommended strategies 
$\sopti=\{(\nopti, \subfuncopti,\estimopti)\}_i$ below:
\begin{align*}
\nopti = \frac{\sigma}{\sqrt{cm}}, \hspace{0.21in}
\subfuncopti = \identity, \hspace{0.21in}
\estimopti(\initdatai, \subdatai, \alloci) = \frac{1}{|\initdatai \cup \alloci|} \sum_{u \in \initdatai \cup \alloci} u
\label{eqn:nofalsesopti}\numberthis
\end{align*}
Throughout this section, $\sopti$ will refer to~\eqref{eqn:nofalsesopti} (and not~\eqref{eqn:mainsopti}).

We will first prove that $\sopti$ is a Nash equilibrium.
Because the sample mean achieves minimax error for Normal mean estimation \cite{lehmann2006theory}, we immediately have, for all $(n_i,\subfunci,\estimi) \in \Scal$. 
\begin{equation*}
\penali(\mechind, ((n_i,\subfunci,\estimopti),\soptmi)) \leq 
\penali(\mechind, ((n_i,\subfunci,\estimi), \soptmi)).
\end{equation*}
Because the agent can only submit the raw dataset or a subset, and the agent's allocation only depends on the size of the dataset, the size of the dataset she receives
can always be maximized by submittng the whole data set she collects, i.e. chooses $\subfunci=\identity$. 
Therefore, we have for all $(n_i,\subfunci,\estimi) \in \Scal$,
\begin{equation*}
\penali(\mechind, ((n_i,\subfuncopti,\estimopti),\soptmi)) 
\leq 
\penali(\mechind, ((n_i,\subfunci,\estimopti),\soptmi)) 
\leq
\penali(\mechind, ((n_i,\subfunci,\estimi), \soptmi)).
\end{equation*}
\insertAlgoNoFalse
Finally, we can use the fact that the maximum risk of the  sample mean estimator using $n$ points is $\sigma^2/n$
to show that the penalty is minimized when $n_i=\nopti=\sigma/\sqrt{cm}$.
In particular, we have that
if $n_i < \sigma/\sqrt{cm}$, 
\begin{align*}
\penali(\mechind, ((n_i,\subfuncopti,\estimopti),\soptmi)) 
=
\frac{\sigma^2}{n_i}+cn_i > 2\sigma\sqrt{c}.
\end{align*}
And if $n_i \ge \sigma/\sqrt{cm}$,
\begin{align*}
\penali(\mechind, ((n_i,\subfuncopti,\estimopti),\soptmi)) 
= &
\frac{\sigma^2}{n_i+(m-1)\sigma/\sqrt{cm}}
+cn_i
\ge
2\sigma\sqrt{\frac{c}{m}}
\end{align*}
Because $2\sigma\sqrt{c} \ge 2\sigma\sqrt{{c}/{m}}$,
$\penali(\mechind, ((n_i,\subfuncopti,\estimopti),\soptmi)) $
is minimized when $n_i=\sigma/\sqrt{cm}$.
We thus conclude that $\sopt$ is a Nash equilibrium. That is, for all $(n_i,\subfunci,\estimi) \in \NN \times \subfuncspace \times \estspace$ 
\begin{equation*}
\penali(\mechind, \sopt) \leq 
\penali(\mechind, ((n_i,\subfunci,\estimi), \soptmi)).
\end{equation*}
Next, the IR and efficiency properties follow trivially from the fact that $\penali(\mechind, \sopt)=2\sigma\sqrt{{c}/{m}}$ for each agent $i$.
In particular, $\penali(\mechind, \sopt) < \pminir$ and
$\socpenal(\mechind,\sopt) = 2\sigma\sqrt{cm}$.
\section{Proof of Theorem~\ref{thm:noest} \label{app:pf thm3}}

We will use $\mechpk$ to denote our mechanism in~\S\ref{sec:onednoest}, as it \emph{corrupts} the \emph{deployed estimate} based on the \emph{difference}.
We have stated this mechanism formally in Algorithm~\ref{alg:noest}.
We will also re-state the recommended strategies $\sopti=\{(\nopti,\subfuncopti)\}_i$ below:
\begin{align*}
\label{eqn:noestsopti}
\numberthis
\nopti = \frac{\sigma}{\sqrt{cm}}, \hspace{0.4in}
\subfuncopti = \identity.\hspace{0.3in}
\end{align*}
Throughout this section, $\sopti$ will refer to~\eqref{eqn:noestsopti} (and not~\eqref{eqn:mainsopti} or~\eqref{eqn:nofalsesopti}).

We will now present the proof of Theorem~\ref{thm:noest}.
First, in \S\ref{sec:noestnashproof}, we show that $\sopt$ is a Nash equilibrium of $\mechpk$ as the Nash incentive compatibility result.
Then, in \S\ref{sec:noestirproof}, we show individual rationality at $\sopti$.
In \S\ref{sec:noestposproof}, we conclude by showing that
$\mechpk$ is approximately efficient by showing that its social penalty at most a $(1+\epsilon)$ factor of the global minimum.

\subsection{Algorithm~\ref{alg:noest} is Nash incentive compatible \label{sec:noestnashproof}}
\textbf{Step 1.} {We will first show that fixing any $n_i$, the best strategy is to submit the raw data,} i.e. 
for all
$(n_i,\subfunci) \in \NN \times \subfuncspace$. 
\begin{equation*}
\penali(\mechpk, ((n_i,\subfuncopti),\soptmi)) \leq 
\penali(\mechpk, ((n_i,\subfunci), \soptmi)).
\numberthis
\label{eq:mech pk sub id}
\end{equation*}
Let $e_{z,i} = \epsilon_{z,i}/\eta_i$,
where $\eta_i$, and $\epsilon_{z,i}$ are as given in
lines~\ref{lin:noestetaisq} and~\ref{lin:noestdatacorruption} respectively.
We have that  $e_{z,i}$'s are i.i.d. standard Normal samples.
Because the cost term $c n_i$ is fixed when $n_i$ is fixed, 
we only need to consider the risk term.
We will first define,
\begin{align}
\estiminitdatai
:=
\frac{1}{\abr{\initdatai}}\sum_{x\in\initdatai}x,\quad
\estimsubdatai
:= 
\frac{1}{\abr{\subdatai}}\sum_{x\in\subdatai}x,\quad
\estimsubdatami
:= 
\frac{1}{\abr{\subdatami}}\sum_{x\in \subdatami}x.
\end{align}
Via some algebraic manipulations, we can
express the maximum risk as:
\begin{align*}
&\sup_\mu\EE\brk*{
\left.\prn*{
\frac{1}{\abr{Y_i}+(m-1)\nopti}\prn*{
\sum_{y\in\subdatai}\prn*{y-\mu}
+
\sum_{z\in\subdatami}\prn*{z+e_{z,i}\eta_i-\mu}
}}^2
\right|
\mu
}\\
= & 
\frac{1}{\prn*{\abr{Y_i}+(m-1)\nopti}^2}\sup_\mu\EE\brk*{
\left.
\prn*{
\sum_{y\in\subdatai}\prn*{y-\mu}
}^2
+
\prn*{
\sum_{z\in\subdatami}\prn*{z+e_{z,i}\eta_i-\mu}
}^2
\right|
\mu
} \\
= & 
\frac{1}{\prn*{\abr{Y_i}+(m-1)\nopti}^2}\sup_\mu\EE\brk*{
\left.
\prn*{
\abr{Y_i}\prn*{\estimsubdatai-\mu}
}^2
+
\prn*{
\sum_{z\in\subdatami}\prn*{z-\mu}
}^2
+
\prn*{
\sum_{z\in\subdatami}e_{z,i}\eta_i
}^2
\right|
\mu
} \\
= & 
\frac{1}{\prn*{\abr{Y_i}+(m-1)\nopti}^2}\sup_\mu\EE\brk*{
\left.
\prn*{
\abr{Y_i}\prn*{\estimsubdatai-\mu}
}^2
+
(m-1)\nopti\beta_\epsilon^2
\prn*{\estimsubdatai-\estimsubdatami}^{2k_\epsilon}
\right|
\mu
} \\
&+
\frac{(m-1)\nopti\sigma^2}{\prn*{\abr{Y_i}+(m-1)\nopti}^2}
\end{align*}
Recall that $\beta_\epsilon$ also involves $\abr{Y_i}$. 
Note that as we have fixed $n_i$ and $\smi=\soptmi$, the maximum risk depends only on $\abr{Y_i}$ and $\estimsubdatai$, that is, the agent's maximum risk and hence penalty only depends on the number of points she submitted, and their average value.
Hence, to find the optimal submission $\subdatai$, we will first fix the size of the agent's submission $|\subdatai|$ and optimize for the sample mean $\estimsubdatai$ (step 1.1), and then we will optimize for $|\subdatai|$ (step 1.2).

\textbf{Step 1.1.}
Since the other agents have each collected $\sigma/\sqrt{cm}=\nopti$ points and submitted it truthfully, we have
$\estimsubdatami \sim \Ncal\prn*{\mu, \frac{\sigma^2}{(m-1)\nopti}}$.
Via a binomial expansion
, we can write,
\begin{align*}
\EE\brk*{\prn*{\estimsubdatai-\estimsubdatami}^{2k_\epsilon}}
= & 
\EE\brk*{\prn*{(\estimsubdatai-\mu)-(\estimsubdatami-\mu)}^{2k_\epsilon}}\\
= & \sum_{j=0}^{2k_\epsilon}(-1)^j\binom{2k_\epsilon}{j}
\EE\brk*{\prn*{\estimsubdatai-\mu}^j}\EE\brk*{\prn*{\estimsubdatami-\mu}^{2k_\epsilon-j}}\\
= & \sum_{j=0}^{k_\epsilon}\binom{2k_\epsilon}{2j}
\EE\brk*{\prn*{\estimsubdatai-\mu}^{2j}}\EE\brk*{\prn*{\estimsubdatami-\mu}^{2k_\epsilon-2j}}
\end{align*}
Thus the maximum risk can be written as:
\begin{align*}
\sup_\mu\EE\brk*{\left.
\sum_{j=0}^{k_\epsilon}
A_j (\estimsubdatai-\mu)^{2j}\right|
\mu
}    
\numberthis
\label{eqn:max risk}
\end{align*}
where $A_0,\ldots, A_{k_\epsilon}$ is a sequence of positive coefficients.

\insertAlgoVTwo

Similar to the proof of Theorem~\ref{thm:main}, we construct a lower bound on the maximum risk using a sequence of Bayesian risks. Let $\Lambda_\ell:=\Ncal(0,\ell^2)$, $\ell=1,2,\ldots$ be a sequence of prior for $\mu$.
For fixed $\ell$, the posterior distribution is:
\begin{align*}
p(\mu|\initdatai)
\propto & 
p(\initdatai|\mu)p(\mu)
\propto
\exp\prn*{-\frac{1}{2\sigma^2}
\sum_{x\in\initdatai}(x-\mu)^2
}
\exp\prn*{-\frac{1}{2\ell^2}\mu^2}\\
\propto&
\exp\prn*{
-\frac{1}{2}\prn*{\frac{n_i}{\sigma^2}+\frac{1}{\ell^2}}\mu^2
+\frac{1}{2}2
\frac{\sum_{x\in\initdatai}x}{\sigma^2}
\mu
}.
\end{align*}
This means the posterior of $\mu$ given $\initdatai$ is Gaussian with:
\begin{equation*}
\mu|\initdatai
\sim
\Ncal\prn*{
\frac{n_i\estiminitdatai/\sigma^2}{n_i/\sigma^2+1/\ell^2},
\frac{1}{n_i/\sigma^2+1/\ell^2}
}
=:
\Ncal\prn*{\mu_\ell,\sigma_\ell^2}.
\end{equation*}
Therefore, the posterior risk is:
\begin{align*}
\EE\brk*{\left.
\sum_{j=0}^{k_\epsilon}
A_j (\estimsubdatai-\mu)^{2j}
\right|
\initdatai
}
= & 
\EE\brk*{\left.
\sum_{j=0}^{k_\epsilon}
A_j \prn*{(\estimsubdatai-\mu_\ell)-(\mu-\mu_\ell)}^{2j}
\right|
\initdatai
}\\
= & 
\int_{-\infty}^\infty
\underbrace{
\sum_{j=0}^{k_\epsilon}
A_j \prn*{e-(\estimsubdatai-\mu_\ell)}^{2j}
}_{
=:F_1(e-(\estimsubdatai-\mu_\ell))
}
\underbrace{
\frac{1}{\sigma_\ell\sqrt{2\pi}}
\exp\prn*{
-\frac{e^2}{2\sigma_\ell^2}
}
}_{=:F_2(e)}de
\end{align*}
Because:
\begin{itemize}
    \item $F_1(\cdot)$ is even function and increases on $[0,\infty)$;
    \item $F_2(\cdot)$ is even function and decreases on $[0,\infty$, and $\int_\RR F_2(e)de<\infty$
    \item For any $a\in\RR$,
    $\int_\RR F_1(e-a)F_2(e)de < \infty$,
\end{itemize}
By the corollary of Hardy-Littlewood inequality in Lemma~\ref{lem:re hardy-L ineq}, 
\begin{equation*}
\int_\RR F_1(e-a)F_2(e)de
\ge
\int_\RR F_1(e)F_2(e)de,
\end{equation*}
which means
the posterior risk is minimized when $\estimsubdatai=\mu_\ell$. 
In Lemma~\ref{lem:normalpowers}, we have stated expressions for the expected value of the power of a normal random variable. 
Using this, we can write the Bayes risk as:
\begin{align*}
R_\ell:=
\EE\brk*{
\sum_{j=0}^{k_\epsilon}
A_j\EE\brk*{\left.
 \prn*{\mu-\mu_\ell}^{2j}
\right|
\initdatai
}
}
=
\sum_{j=0}^{k_\epsilon}
A_j
(2j-1)!!\sigma_\ell^{2j}
\end{align*}
and the limit of Bayesian risk as $\ell\to\infty$ is
\begin{equation*}
R_\infty
:=
\lim_{\ell\to\infty}
\sum_{j=0}^{k_\epsilon}
A_j
(2j-1)!!\frac{\sigma^{2j}}{n_i^j}
\end{equation*}
When $\estimsubdatai=\estiminitdatai$, the maximum risk is:
\begin{align*}
&\sup_\mu\EE\brk*{\left.
\sum_{j=0}^{k_\epsilon}
A_j (\estimsubdatai-\mu)^{2j}\right|
\mu
} 
=
\sup_\mu\EE\brk*{\left.
\sum_{j=0}^{k_\epsilon}
A_j (\estiminitdatai-\mu)^{2j}\right|
\mu
}\\
= & 
\sum_{j=0}^{k_\epsilon}
A_j (2j-1)!!\sigma^{2j}n_i^{-j}
=R_\infty.
\end{align*}
This means, fixing $n_i$ and $\abr{Y_i}$, agent $i$ achieves minimax risk when choosing $\estimsubdatai=\estiminitdatai$;
as the maximum is larger than the average, this follows using a similar argument to Step 3 in~\S\ref{sec:mainnashproof}.

\textbf{Step 1.2.}
Next, we will show that the best size of the submission is $\abr{Y_i}=\abr{X_i}=n_i$, assuming $\estimsubdatai=\estiminitdatai$.
For this, we will first use $\nopti$ to rewrite $\beta_\epsilon^2$ as
\begin{equation*}
\beta_\epsilon^2
=
\frac{\nopti^{k_\epsilon-2}(m-1)^{k_\epsilon-1}\prn*{\abr{Y_i}+(m-1)\nopti}^2}{k_\epsilon(2k_\epsilon-1)!!m^{k_\epsilon+1}\sigma^{2k_\epsilon-2}}.
\end{equation*}
Because 
\begin{equation*}
\estiminitdatai-\estimsubdatami
\sim
\Ncal\prn*{
0,
\prn*{
\frac{1}{n_i}+\frac{1}{(m-1)\nopti}
}\sigma^2
}, 
\end{equation*}
the risk term in the penalty can be rewritten and lower bounded as follows:
\begin{align*}
&\frac{1}{\prn*{\abr{Y_i}+(m-1)\nopti}^2}\prn*{
\abr{Y_i}^2\sigma^2/n_i
+
(m-1)\nopti\beta_\epsilon^2
(2k_\epsilon-1)!!
\prn*{\frac{1}{n_i}+\frac{1}{(m-1)\nopti}}^{k_\epsilon}\sigma^{2k_\epsilon}
}\\
&+
\frac{(m-1)\nopti\sigma^2}{\prn*{\abr{Y_i}+(m-1)\nopti}^2} \\
= & 
\frac{\abr{Y_i}^2\frac{\sigma^2}{n_i}+(m-1)\nopti\sigma^2}{\prn*{\abr{Y_i}+(m-1)\nopti}^2}
+
\frac{\nopti^{k_\epsilon-1}(m-1)^{k_\epsilon}}{k_\epsilon m^{k_\epsilon+1}}
\prn*{\frac{1}{n_i}+\frac{1}{(m-1)\nopti}}^{k_\epsilon}\sigma^{2}\\
\ge &
\frac{\sigma^2}{n_i+(m-1)\nopti}
+
\frac{\nopti^{k_\epsilon-1}(m-1)^{k_\epsilon}}{k_\epsilon m^{k_\epsilon+1}}
\prn*{\frac{1}{n_i}+\frac{1}{(m-1)\nopti}}^{k_\epsilon}\sigma^{2}.
\end{align*}
Here, the last step follows from the fact that 
\begin{align*}
&\frac{\abr{Y_i}^2\frac{\sigma^2}{n_i}+(m-1)\nopti\sigma^2}{\prn*{\abr{Y_i}+(m-1)\nopti}^2}
= 
\frac{\abr{Y_i}^2\frac{\sigma^2}{n_i}+(m-1)\nopti\sigma^2}{n_i\frac{\abr{Y_i}^2}{n_i}+2\abr{Y_i}(m-1)\nopti +(m-1)^2\nopti^2}\\
\ge &
\frac{\abr{Y_i}^2\frac{\sigma^2}{n_i}+(m-1)\nopti\sigma^2}{n_i\frac{\abr{Y_i}^2}{n_i}+\prn*{n_i+\frac{\abr{Y_i}^2}{n_i}}(m-1)\nopti +(m-1)^2\nopti^2}
=
\frac{\abr{Y_i}^2\frac{\sigma^2}{n_i}+(m-1)\nopti\sigma^2}{\prn*{
n_i+(m-1)\nopti
}
\prn*{
\frac{\abr{Y_i}^2}{n_i}+(m-1)\nopti
}}\\
= & \frac{\sigma^2}{n_i+(m-1)\nopti}.
\end{align*}
Equality holds in this inequality if and only if $\abr{Y_i}=n_i$.

In conclusion, fixing $n_i$, the agent can minimize her penalty by submitting $n_i$ points with the same sample mean as the dataset $\initdatai$ she collected. 
One way to achieve this is set $
\subfunci=\identity$.
This completes the proof of~\eqref{eq:mech pk sub id}.

\textbf{Step 2:}
Our next step is to show that
the agent's best strategy is to collect $\nopti$ data points. That is, we will show for all
$n_i \in \NN $. 
\begin{equation*}
\penali(\mechpk, ((\nopti,\subfuncopti),\soptmi)) \leq 
\penali(\mechpk, ((n_i,\subfuncopti), \soptmi)).
\numberthis
\label{eq:mech pk n opt}
\end{equation*}
In the following, we will use $p(n_i)$ as a shorthand for $\penali(\mechpk, ((n_i,\subfuncopti), \soptmi))$.
The penalty can be rewritten as:
\begin{align*}
p(n_i) 
=&\frac{\sigma^2}{n_i+(m-1)\nopti}
+
\frac{\nopti^{k_\epsilon-1}(m-1)^{k_\epsilon}}{k_\epsilon m^{k_\epsilon+1}}
\prn*{\frac{1}{n_i}+\frac{1}{(m-1)\nopti}}^{k_\epsilon}\sigma^{2}
+cn_i
\end{align*}
We need to show that $p_i(n_i)$ achieves minimum at $n_i=\nopti$.
The derivative of $p_i(\cdot)$ is:
\begin{align*}
p'(n_i) 
=&-\frac{\sigma^2}{(n_i+(m-1)\nopti)^2}
+
\frac{\nopti^{k_\epsilon-1}(m-1)^{k_\epsilon}}{ m^{k_\epsilon+1}}
\prn*{\frac{1}{n_i}+\frac{1}{(m-1)\nopti}}^{k_\epsilon-1}\sigma^{2}
\prn*{-\frac{1}{n_i^2}}
+c
\end{align*}
Because $p'(n_i)$ increase in $n_i$, $p(n_i)$ is convex.
Moreover, because
\begin{align*}
p'(\nopti) 
=&-\frac{\sigma^2}{m^2\nopti^2}
+
\frac{\nopti^{k_\epsilon-1}(m-1)^{k_\epsilon}}{ m^{k_\epsilon+1}}
\prn*{\frac{1}{\nopti}+\frac{1}{(m-1)\nopti}}^{k_\epsilon-1}\sigma^{2}
\prn*{-\frac{1}{\nopti^2}}
+c\\
=&-\frac{\sigma^2}{m^2\nopti^2}
-
\frac{(m-1)\sigma^2}{m^2\nopti^2}
+c
=
-\frac{\sigma^2}{m\nopti^2}
+c
=0,
\end{align*}
we know $p(n_i)$ reaches minimum at $n_i=\nopti$.
This concludes the proof for~\eqref{eq:mech pk n opt}.

\subsection{Algorithm~\ref{alg:noest} is individually rational
\label{sec:noestirproof}}

The penalty of an agent at the recommended strategies can be expressed as:
\begin{align*}
\penali(\mechpk, \sopti)
=&
\,p(\nopti) 
=\frac{\sigma^2}{m\nopti}
+
\frac{\nopti^{k_\epsilon-1}(m-1)^{k_\epsilon}}{k_\epsilon m^{k_\epsilon+1}}
\prn*{\frac{1}{\nopti}+\frac{1}{(m-1)\nopti}}^{k_\epsilon}\sigma^{2}
+c\nopti\\
=&\frac{\sigma^2}{m\nopti}
+
\frac{\nopti^{k_\epsilon-1}(m-1)^{k_\epsilon}}{k_\epsilon m^{k_\epsilon+1}}
\frac{m^{k_\epsilon}}{\nopti^{k_\epsilon}(m-1)^{k_\epsilon}}\sigma^{2}
+c\nopti\\
=&\frac{\sigma^2}{m\nopti}
+
\frac{1}{k_\epsilon}\frac{\sigma^2}{m\nopti}
+c\nopti
= \prn*{2+\frac{1}{k_\epsilon}}\frac{\sigma\sqrt{c}}{\sqrt{m}}.
\numberthis\label{eqn:noestpenaliexpression}
\end{align*}
We have that $\mechpk$ is IR when $m\ge 2$, via the following simple calculation:
\begin{equation*}
\prn*{2+\frac{1}{k_\epsilon}}\frac{\sigma\sqrt{c}}{\sqrt{m}}  
\le 
\prn*{2+\frac{1}{2}}\frac{\sigma\sqrt{c}}{\sqrt{2}}
< 2\sigma\sqrt{c}=\pminir
\end{equation*}

\subsection{Algorithm~\ref{alg:noest} is approximately efficient\label{sec:noestposproof}}

Using the expression for $\penali(\mechpk,\sopti)$ in~\eqref{eqn:noestpenaliexpression},
the penalty ratio can be bounded by:
\begin{equation*}
\pos(\mechpk, \sopt)  
=
\frac{\prn*{2+\frac{1}{k_\epsilon}}{\sigma\sqrt{cm}}}{2\sigma\sqrt{cm}}
=
1+\frac{1}{2k_\epsilon}
\le 
1+\epsilon.
\end{equation*}

\section{Additional Materials for Section \ref{sec:onednoest} \label{app:convexcomb}}
\subsection{Mechanism detail}
See Algorithm~\ref{alg:noest}.

\subsection{Using a weighted average under the original strategy space from~\S\ref{sec:setup}}


In this section, we will consider a variation of $\mechpk$
when applied to our original strategy space $\NN\times\subfuncspace\times\estimspace$.
 For this, we will assume that $\mechpk$ will return $\alloci=\datai$ as the agent's allocation, and then an agent can use $\initdatai, \subdatai, \datai$ to estimate $\mu$.
 In this situation,  below we show that the agent
 can achieve a smaller penalty using a weighted average over $\initdatai \cup \datai$ instead of the sample mean used by the mechanism. Here, the weights are proportional to the inverse of the variance of each data point.
 (Our mechanism purposefully uses the sub-optimal sample mean in the restricted strategy space $\NN\times\subfuncspace$ as a way to shape the agent's penalty and incentivize good behavior.)
 
 This shows that $\mechpk$ (with the above modification) is not NIC in this more general strategy space. The agent can obtain a lower penalty using a better estimator (such as the weighted average we show over here) and achieve a lower penalty. More importantly, as the agent knows that she can achieve a lower estimation error via a better estimator instead of more data, she can leverage this insight to collect less data and reduce her penalty even further.  

 We should emphasize that it is unclear if this weighted average is minimax optimal.
 It is also unclear if there exists a Nash equilibrium for $\mechpk$ (or any straightforward modification of $\mechpk$) in the expanded strategy space.

\paragraph{The weighted average estimator:}
We will now present the weighted average estimator that achieves a lower maximum risk.
To show this,
first note that for all $x \in \initdatai$, $\VV\brk*{x}=\sigma^2$;
when $(n_i,\subfunci)=(\nopti,\subfuncopti)$,
for all $x \in \datai$,
\begin{align*}
\VV\brk*{x}
=&\EE\brk*{\prn*{z+\epsilon_{z,i}-\mu}^2}
= \sigma^2 + 
\beta_\epsilon^2\EE\brk*{\prn*{\estiminitdatai - \estimsubdatami}^{2k_\epsilon}}\\
= & \sigma^2
+
\frac{\nopti^{k_\epsilon-2}(m-1)^{k_\epsilon-1}\prn*{m\nopti}^2}{k_\epsilon(2k_\epsilon-1)!!m^{k_\epsilon+1}\sigma^{2k_\epsilon-2}}
(2k_\epsilon-1)!!
\prn*{\frac{1}{\nopti}+\frac{1}{(m-1)\nopti}}^{k_\epsilon}\sigma^{2k_\epsilon}\\
= & \sigma^2
+
\frac{\nopti^{k_\epsilon}(m-1)^{k_\epsilon-1}}{k_\epsilon m^{k_\epsilon-1}}
\frac{m^{k_\epsilon}}{(m-1)^{k_\epsilon}\nopti^{k_\epsilon}}\sigma^{2}\\
= & \sigma^2
+
\frac{1}{k_\epsilon }
\frac{m}{m-1}\sigma^{2}
\end{align*}
Consider the following weighted-average estimator:
\begin{align*}
\estimi(\initdatai, \subdatai, (\datai, \etaisq)) = 
\frac{ \frac{1}{\sigma^2}\sum_{x\in \initdatai}x + \frac{1}{\sigma^2+
\frac{1}{k_\epsilon }
\frac{m}{m-1}\sigma^{2}}\sum_{x\in \datai}x}
{ \frac{\nopti}{\sigma^2}+ \frac{(m-1)\nopti}{\sigma^2+
\frac{1}{k_\epsilon }
\frac{m}{m-1}\sigma^{2}}} 
\end{align*}
The maximum risk of $\estimi$ is
\begin{align*}
\EE\brk*{\prn*{\estimi(\initdatai, \subdatai, (\datai, \etaisq))-\mu}^2}
= & 
\frac{1}{\frac{\nopti}{\sigma^2}+ \frac{(m-1)\nopti}{\sigma^2+
\frac{1}{k_\epsilon }
\frac{m}{m-1}\sigma^{2}}}
=
\frac{1}{1+ \frac{m-1}{1+
\frac{1}{k_\epsilon }
\frac{m}{m-1}}}\frac{\sigma^2}{\nopti}
=
\frac{{1+
\frac{1}{k_\epsilon }
\frac{m}{m-1}}}{m+\frac{1}{k_\epsilon }
\frac{m}{m-1}}\frac{\sigma^2}{\nopti}\\
< &
\frac{\prn*{1+\frac{1}{k_\epsilon}}\prn*{1+
\frac{1}{k_\epsilon }
\frac{1}{m-1}}}{m+\frac{1}{k_\epsilon }
\frac{m}{m-1}}\frac{\sigma^2}{\nopti}
=
\prn*{1+\frac{1}{k_\epsilon}}\frac{\sigma^2}{m\nopti}
\numberthis\label{eq:alg2 risk}
\end{align*}
Note that the RHS of~\eqref{eq:alg2 risk}
is the risk of the sample average deployed by $\mechpk$.
This means, suppose all other agents choose $\sopt$, then agent $i$ can choose a weighted average to reduce her penalty without collecting more data.

\section{High dimensional mean estimation with bounded variance}
\label{sec:general}


In this section, we will study estimating a 
$d$--dimensional mean $\mu(\theta) \in\RR^d$ for distributions $\theta$ with bounded variance.
We will focus on our original setting in~\S\ref{sec:setup},
but will outline the modifications to the formalism to accommodate the generality.
For $x\in\RR^d$, let $x^{(i)}$ denote the $i$\ssth dimension.

\vspace{0.1in}\noindent
\textbf{Modifications to the setting in~\S\ref{sec:setup}:}
First, we should change the definitions of $\subfuncspace, \estimspace$
and $\mechspace$ in equations~\ref{eqn:mechdefn} and~\eqref{eqn:subestspaces}
to account for the fact that the data is $d$ dimensional.
For instance, the space of functions mapping the dataset collected  to the dataset
submitted should be defined as
$\subfuncspace = \{\subfunc:\bigcup_{n\geq 0} \RR^{d\times n}
\rightarrow \bigcup_{n\geq 0} \RR^{d\times n}\}$.
Next, let $
\Theta = \{ \theta;
\;\supp(\theta)\subset\RR^d,\; \EE_{x\sim \theta}\left[(x^{(i)} - \mu(\theta)^{(i)})^2\right]
\leq \sigma^2,\, \forall\, i\in[d]
\}$
be the class of all $d$--dimensional distributions where the variance along each dimension
is bounded by $\sigma^2$.
Here, the maximum variance $\sigma^2$ is known and is public information.
Note that we do not assume that the individual dimensions are independent.
An agent's penalty $\penali$ is defined similar to~\eqref{eqn:penalty}
but considers the maximum risk over $\Theta$, i.e
\begin{align*}\penali(M, s) =
\sup_{\theta\in\Theta} \EE\big[\|\estimi(\initdatai,\subdatai,\alloci) -
\mu(\theta)\|_2^2\,\,\big|\,
\theta \big]  \;+\; c n_i.
\numberthis\label{eqn:penaltyhighd}
\end{align*}
Finally, the social penalty and ratio $\pos$ are as defined in~\eqref{eqn:pos},
but with the
above definition for $\penali$.


\vspace{0.1in}\noindent
\textbf{Mechanism:}
Our mechanism for this problem is the same as the one outlined in
Algorithm~\ref{alg:main}, with the following cosmetic modifications.
First, the allocation space should now be
$\allocspace=\bigcup_{n\geq 0}\RR^{d\times n} \times \bigcup_{n\geq 0}\RR^{d\times n}
\times \RR^d_+$.
The noise modulating parameter $\alpha$ is determined by a similar equation as in~\eqref{eqn:mainalpha}, but with $c$ replaced with $c/d$.
In line~\ref{lin:maindatai} of Algorithm~\ref{alg:main}, we should set the size of the dataset $\datai$
to be $\min\{|\subdatami|, \,\sigma\sqrt{d/(cm)}\}$.
Finally, the operations in lines~\ref{lin:mainetaisq} and~\ref{lin:maindatacorruption}
should be interpreted as $d$--dimensional operations that are performed elementwise.
The recommended strategy $\sopti=(\nopti, \subfuncopti, \estimopti)$
for agent $i$ is as follows:
\begingroup
\allowdisplaybreaks
\begin{align*}
\hspace{-0.3in}
&\nopti = 
\begin{cases}
\frac{\sigma}{m}\sqrt{\frac{d}{c}} & \text{ if } m\le4,    \\
\sigma \sqrt{\frac{d}{cm}} & \text{ if } m\ge5   
\end{cases}, \hspace{1.1in}
\subfuncopti = \identity,
\numberthis \label{eqn:generalsopti}
\\
&\estimopti(\initdatai, \subdatai, (\datai, \datai', \etaisq)) = 
\frac{ \frac{1}{\sigma^2}\sum_{u\in \initdatai\cup
\datai}u + \frac{1}{\sigma^2+\tauisq}\sum_{u\in \datai'}u}
{ \frac{1}{\sigma^2}{\abr{\initdatai\cup \datai'}}+
\frac{1}{\sigma^2+\tauisq}{\abr{\datai'}}}, 
\hspace{0.3in}\text{where, } \tauisq = \frac{2\alpha^2\sigma^2}{\nopti} \in \RR_+.
\end{align*}
\endgroup
Above, one difference worth highlighting is the change in the recommended
estimator $\estimopti$.
Previously, the weighting used the $\etaisq$ term returned by the mechanism,
which is a function of $\subdatai$ and $\datai$.
This data-dependent weighting
was necessary to obtain an \emph{exactly} (i.e including constants)
minimax optimal estimator for the corrupted dataset,
which in turn was necessary to achieve an exact Nash equilibrium.
However, bounding the risk when using a data-dependent weighting is challenging
when the Gaussian assumption does not hold.
Instead, here we use a deterministic weighting via the quantity $\tauisq$.
While this is not exactly minimax optimal,
we can show that its maximum risk is very close
to a lower bound, which helps us obtain an approximate Nash equilibrium.
It is worth pointing out that
designing exactly minimax optimal estimators, even under i.i.d assumptions, is
challenging for general classes of distributions~\citep{lehmann2006theory}.

The following theorem states the main properties of this mechanism.

\begin{theorem}
\label{thm:highdimensions}
The following statements are true about the mechanism \emph{$\mechany$}
in Algorithm~\ref{alg:main} with
the above modifications.
{(i)} The strategy profile $\sopt$ as defined in~\eqref{eqn:generalsopti} is an
approximate
Nash equilibrium, i.e if all agents except $i$ are following $\sopt$, then
for any alternative strategy $s_i$ for agent $i$, we have
\emph{$ \penali(\mechany, \sopt) \leq \penali(\mechany, (\sopt_{-i}, s_i))(1 +5/m)$}
{(ii)} The mechanism is individually rational at $\sopt$.
{(iii)} The mechanism is approximately efficient at $\sopti$, with
\emph{$\pos(\mechany, \sopt)< 2 + 10/m$}.
\end{theorem}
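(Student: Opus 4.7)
The plan is to extend the three-step structure of Theorem~\ref{thm:main} by handling the $d$ coordinates separately via the mechanism's elementwise structure, and by relaxing \emph{exact} minimax optimality of the recommended estimator to an \emph{approximate} bound whose slack is $O(1/m)$. The case $m\le 4$ transfers directly from \S\ref{sec:mainmlessthanfourproof}: the mechanism simply pools, $\estimopti$ reduces to the coordinatewise sample mean, unbiasedness and the per-coordinate variance bound $\sigma^2/\sum_j n_j$ follow from the bounded-variance assumption alone, and the cost argument is unchanged. Henceforth I focus on $m\ge 5$, and plan to derive an upper bound $U$ on $\penali(\mechany,\sopt)$, a lower bound $L$ on $\penali(\mechany,(s_i,\soptmi))$ valid for every deviation $s_i$, and then verify $U/L\le 1+5/m$; IR and the $\pos\le 2+10/m$ bound will follow.

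For the upper bound I exploit that $\tauisq$ in~\eqref{eqn:generalsopti} is \emph{deterministic}, so $\estimopti$ is a fixed linear combination of data entries. Unbiasedness is immediate, and the per-coordinate MSE equals a weighted sum of per-point variances. Each point of $\initdatai\cup\datai$ contributes variance $\sigma^2$; for each point of $\datai'$, the per-coordinate variance (under the joint randomness of the data and the mechanism's noise $\epsilon_{z,i}$) evaluates at $\sopt$ to $\sigma^2+\tauisq$, since $\EE[\etaisq]=\alpha^2\EE[(\estimsubdatai-\estimmatdatai)^2]=2\alpha^2\sigma^2/\nopti=\tauisq$. The per-coordinate MSE therefore telescopes cleanly to $1/B$, where $B=(|\initdatai|+|\datai|)/\sigma^2+|\datai'|/(\sigma^2+\tauisq)$, giving $U=d/B+c\nopti$. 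The algebraic manipulations of \S\ref{sec:mainposproof} (with $c\mapsto c/d$ throughout, to absorb the modified scaling of $\alpha$ and $\nopti$) then express $U$ in the same closed form as $p(\nopti)$ from the Gaussian analysis.

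For the lower bound I use that $\Theta$ contains the product Gaussian family $\{\bigotimes_{j=1}^d \Ncal(\mu^{(j)},\sigma^2):\mu\in\RR^d\}$, so the max risk restricted to this subfamily is still a valid lower bound. Under this restriction the data factorizes across coordinates and the mechanism acts componentwise. Applying the sequence of product priors $\Ncal(0,\ell^2 I_d)$ and the Hardy--Littlewood step from the proof of Lemma~\ref{lem:v3 opt est} coordinatewise then yields a per-coordinate minimum Bayes risk that converges to the one-dimensional $R_\infty$ from~\eqref{eqn:bayeslimitexpression}, so the max risk of any $(n_i,\subfunci,\estimi)$ is bounded below by $dR_\infty(n_i)$. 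The convexity-in-$n_i$ argument of Lemma~\ref{lem:v3 opt nj} then yields $L\ge dR_\infty(\nopti)+c\nopti$.

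Combining, $U/L$ at $n_i=\nopti$ reduces to the ratio $(1/B)/R_\infty$, where $R_\infty=\EE[1/D(\etaisq)]$ and $D(\eta^2)=|\datai'|/(\sigma^2+\eta^2)+(|\initdatai|+|\datai|)/\sigma^2$. Since $D$ is concave in $\eta^2$, Jensen gives $R_\infty\le 1/B$ immediately; the \emph{main technical obstacle} is the matching upper bound $1/B\le (1+5/m)R_\infty$, which I plan to establish by a second-order Taylor expansion of $1/D$ around $\EE[\etaisq]=\tauisq$. The gap is $O(\VV(\etaisq)/\tauisq^2)$ (a constant independent of $m$, since at $\sopt$ the quantity $\etaisq/\alpha^2$ is the square of a Gaussian with variance $2\sigma^2/\nopti$), multiplied by the relative weight of the $\datai'$ term in $B$; the latter is $\Theta(1/m)$ smaller than the $\initdatai\cup\datai$ term, and this attenuation is what produces the $5/m$ factor. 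Tracking the constants through the formulas of \S\ref{sec:mainposproof} with $c\to c/d$ should yield the claim. Approximate NIC in hand, IR follows as in \S\ref{sec:mainirproof} since working alone remains a valid deviation, and approximate efficiency follows by combining $U\le(1+5/m)L$ with the Gaussian-case bound $\pos\le 2$ to give $\pos\le 2(1+5/m)\le 2+10/m$.
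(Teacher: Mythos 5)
Your overall architecture matches the paper's: you upper-bound $\penali(\mechany,\sopt)$ using the fact that $\tauisq$ is deterministic so $\estimopti$ is a fixed linear combination whose per-coordinate MSE is controlled by pairwise uncorrelatedness and the variance bounds ($\EE[\etaisq]\le 2\alpha^2\sigma^2/\nopti=\tauisq$), you lower-bound every deviation by restricting the supremum to the product-Gaussian subfamily $\normfam$ and invoking the one-dimensional Nash analysis coordinatewise with $c\mapsto c/d$, and you then bound the ratio. The paper does exactly this, and IR and $\pos<2+10/m$ follow in both treatments the same way. The divergence is only in the last step, and that is where your plan has a genuine gap.

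First, your stated mechanism for why the Jensen gap is $O(1/m)$ is backwards: in $B=(|\initdatai|+|\datai|)/\sigma^2+|\datai'|/(\sigma^2+\tauisq)$ the corrupted term has $|\datai'|=(m-2)\nopti$ against $|\initdatai|+|\datai|=2\nopti$, so it is the \emph{dominant} term, roughly $\Theta(m)$ times larger, not ``$\Theta(1/m)$ smaller.'' The $1/m$ attenuation does emerge, but from the second derivative of $\eta^2\mapsto 1/D(\eta^2)$ being proportional to $|\datai'|A/D^3$ with $D\sim|\datai'|$ (here $A=(|\initdatai|+|\datai|)/\sigma^2$), which relative to $1/D$ is $\Theta(A/|\datai'|)=\Theta(1/m)$ --- a different and subtler cancellation than the one you describe. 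Second, the constant $5$ is delicate and your Taylor plan is unlikely to reach it: a crude uniform bound on the Jensen gap (using $\VV(\etaisq)=2\tauisq^2$ and $\alpha^2\approx\nopti$) yields roughly $16/(3m)\approx 5.33/m$, which overshoots, and a second-order expansion of a concave function around the mean does not by itself give a one-sided bound without controlling the remainder over the full range of $\etaisq\in[0,\infty)$. The paper instead computes both $\penali(\mechany,\sopt)$ and $\penalnormi(\mechany,\soptprim)$ in closed form (via the $G(\alpha)=0$ identity and the $\erfc$ integrals of Lemma~\ref{lem:penalty function}), reduces the ratio to an explicit rational function $E(m)$ of $m$ and $A_m=\alpha/\sqrt{\nopti}$, and then verifies $E(m)<5/m$ \emph{numerically} for $m\in[5,500]$ and analytically only for $m>500$ using the sharp localization $A_m\le 1+5/m$. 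Your proposal gives no treatment of small and moderate $m$, and it implicitly relies on the same localization of the root $\alpha$ of $G$, which itself required case analysis. So the skeleton is sound and matches the paper, but the quantitative endgame --- proving the ratio is at most $1+5/m$ for all $m\ge 5$ --- is not yet established by your argument.
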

We see that even under this more general setting, our mechanism retains
its main properties with only a slight weakening of the results.
We now have approximate, instead of exact, NIC, 
with the benefit of deviation 
diminishing as there are more agents.
Similarly, the bound on the efficiency is only slightly weaker than the one in
Theorem~\ref{thm:main}.

\subsection{Proof of Theorem~\ref{thm:highdimensions}
\label{sec:highD}}
When $m\le4$, the claims follow using the exact  steps in \S\ref{sec:mainmlessthanfourproof}.
Therefore, we focus on the case $m\ge5$.
Moreover, some of the key steps of this proof follows along similar lines to Theorem~\ref{thm:main}, so we will provide an outline and focus on the differences.

\textbf{Approximate Nash incentive compatibility.}
We will first prove the statement \emph{(i)} of Theorem~\ref{thm:highdimensions}, which states that $\sopti$, as defined in~\eqref{eqn:generalsopti}, is an approximate Nash equilibrium for $\mechany$.
That is, we will show that the maximum possible reduction in penalty for an agent $i$ when deviating from $\sopti$ is small, provided that all other agents are following $\soptmi$.

For this, we will first lower bound the penalty $\penali$~\eqref{eqn:penaltyhighd} using the family of independent Gaussian distributions.
Let $\normfam=\crl*{
\Ncal(\mu,\sigma^2 I_d)
:
\mu\in\RR^d
}$ denote the space of $d$--dimensional normal distributions with identity covariance matrix.
For any mechanism $M$ and strategy profile $s\in\Scal^m$,
we define the penalty of agent $i$ restricted to $\normfam$ as:
\begin{equation*}
\penalnormi(M, s) =
\sup_{\theta\in\normfam} \EE\big[\|\estimi(\initdatai,\subdatai,\alloci) -
\mu(\theta)\|_2^2\,\,\big|\,
\theta \big]  \;+\; c n_i.    
\end{equation*}
Since $\normfam\subset\Theta$, it is straightforward to see that for all $M\in\Mcal$ and $s\in\Scal^m$,
\begin{equation}\label{eq:penalty LB}
\penalnormi(M, s)
\le 
\penali(M, s).
\end{equation}
We will now use this result to lower bound the penalty of an agent for any other alternative strategy.
First note that, by independence, the mean estimation problem on $\normfam$ can be viewed as $d$ independent copies of the univariate normal mean estimation problem considered in Theorem~\ref{thm:main} but with $c$ replaced with $c/d$.
Let $\estimoptiprim$ be the weighted average that applies the estimator in~\eqref{eqn:mainsopti} along each dimension.
And let $\soptiprim=(\nopti,\subfuncopti,\estimoptiprim)$.
We can now lower bound the penalty of agent $i$ when following any (alternative) strategy $\si\in\Scal$, provided that other agents are following $\soptmi$.
We have:
\begin{align*}
\penali(\mechany,(\si,\soptmi))
=\; &
\penali\prn*{
\mechany,
\prn*{
\si,
\prn*{\noptmi,\subfuncoptmi,\estimoptmi}
}
}\\
\ge\; &
\penalnormi\prn*{
\mechany,
\prn*{
\si,
\prn*{\noptmi,\subfuncoptmi,\estimoptmi}
}
} \hspace{1in} \prn*{\mbox{By~\eqref{eq:penalty LB}}}\\
=\; &
\penalnormi\prn*{
\mechany,
\prn*{
\si,
\prn*{\noptmi,\subfuncoptmi,\estimoptmiprim}
}
}\\
&\hspace{0.4in}\prn*{\mbox{As agent $i$'s penalty will not be affected by other agents' estimators}}\\
\ge&\,
\penalnormi\prn*{
\mechany,
\prn*{
\prn*{\nopti,\subfuncopti,\estimoptiprim},
\prn*{\noptmi,\subfuncoptmi,\estimoptmiprim}
}
}\\
&\hspace{0.4in}\prn*{\mbox{
By adapting the analysis in \S\ref{sec:mainnashproof}.
}}\\
= &\; \penalnormi\prn*{
\mechany,\soptprim}
\numberthis\label{eq:penalty other s LB}
\end{align*}
Above, the second step uses~\eqref{eq:penalty LB} and
the third step uses the fact that other agent's \emph{estimator} will not affect agent $i$'s penalty.
The fourth step uses the fact that for estimation problems in $\normfam$, the strategy profile
$\soptprim=\{(\nopti,\subfuncopti,\estimoptiprim)\}_i$ is a Nash equilibrium; in~\S\ref{sec:mainnashproof}, we showed this for the one dimensional case, but this proof can be easily adapted to $d$ dimensions since we are assuming an identity covariance matrix in $\normfam$.
Finally, by adapting the analysis in~\S\ref{sec:mainposproof}, we can obtain the
following  expression for agent $i$'s penalty $\penalnormi(\mechany, \soptprim)$ in $\normfam$:
\begin{align*}
\penalnormi(\mechany,\soptprim) = 
\; d\sigma\sqrt{\frac{c/d}{m}}\prn*{
\frac{
\frac{10\alpha^2}{\nopti}-1}
{
\frac{4\alpha^2}{\nopti}\frac{m+1}{m}-1
} +1
}
\numberthis\label{eq:penaltyhighdexpression}
\end{align*}

To state the approximate NIC result, we will now upper bound the penalty of the agent when following $\sopti$.
Using the bounded variance assumption, we have:
\begin{align*}
&\penali(M, \sopt)
= \sup_{\theta\in\Theta}\EE\brk*{
\left.\nbr{
\frac{ \frac{1}{\sigma^2}\sum_{u\in \initdatai\cup
\datai}u + \frac{1}{\sigma^2+\tauisq}\sum_{u\in \datai'}u}
{ \frac{1}{\sigma^2}{\abr{\initdatai\cup \datai'}}+
\frac{1}{\sigma^2+\tauisq}{\abr{\datai'}}}
-
\mu(\theta)
}_2^2
\right|\theta
}+
c\nopti\\
= &
\sup_{\theta\in\Theta}\sum_{k=1}^d\EE\brk*{
\left.\prn*{
\frac{ \frac{1}{\sigma^2}\sum_{u\in \initdatai\cup
\datai}\prn*{u^{(k)}-\mu(\theta)^{(k)}} + \frac{1}{\sigma^2+\tauisq}\sum_{u\in \datai'}\prn*{u^{(k)}-\mu(\theta)^{(k)}}}
{ \frac{1}{\sigma^2}{\abr{\initdatai\cup \datai'}}+
\frac{1}{\sigma^2+\tauisq}{\abr{\datai'}}}
}^2
\right|\theta
}
+
c\nopti\\
= &
\sup_{\theta\in\Theta}\sum_{k=1}^d
\frac{ \frac{1}{\sigma^2}\sum_{u\in \initdatai\cup
\datai}\EE\brk*{\prn*{u^{(k)}-\mu(\theta)^{(k)}}} + \frac{1}{\sigma^2+\tauisq}\sum_{u\in \datai'}\EE\brk*{\prn*{u^{(k)}-\mu(\theta)^{(k)}}}}
{ \frac{1}{\sigma^2}{\abr{\initdatai\cup \datai'}}+
\frac{1}{\sigma^2+\tauisq}{\abr{\datai'}}}
+
c\nopti\numberthis\label{eq:use correlation}
\\
\le & 
\frac{d}{\frac{2\nopti}{\sigma^2}+
\frac{(m-2)\nopti}{\sigma^2+\frac{2\alpha^2\sigma^2}
{\nopti}}}
+
c\nopti
=
\frac{\sigma^2}{\nopti}\frac{d}{2+
\frac{m-2}{1+\frac{2\alpha^2}
{\nopti}}}
+
c\nopti
= 
\sigma\sqrt{\frac{cd}{m}}\prn*{\frac{m}{2+\frac{m-2}{1+\frac{2\alpha^2}
{\nopti}}}+1},
\numberthis\label{eq:penaligeneralfirstbound}
\end{align*}
where~\eqref{eq:use correlation} is because: for all $k\in[d]$,
$\forall x_1^{(k)},x_2^{(k)}\in\initdatai\cup\datai$, 
$\forall z_1^{(k)},z_2^{(k)}\in\datai'$,
$x_1^{(k)}-\mu^{(k)},x_2^{(k)}-\mu^{(k)},z_1^{(k)}-\mu^{(k)},z_2^{(k)}-\mu^{(k)}$ are uncorrelated pairwise.
The final inequality is due to the bounded variance assumption.

Next, for brevity, let us write
$A_m :=\frac{\alpha}{\sqrt{\nopti}}$ where $\alpha$ is as defined in~\eqref{eqn:mainalpha}.
By adapting the analysis in \S\ref{sec:mainsolutionalpha}, we can show that 
\begin{equation}\label{eq:alpha int}
A_m :=\frac{\alpha}{\sqrt{\nopti}}    
\in\prn*{
1,1+\frac{C_m}{m}
},
\hspace{0.3in}
\text{where,}\quad
C_m = \begin{cases}
20, &\text{ if $m\leq 20$} \\
5,  &\text{ if $m>20$}
\end{cases}.
\end{equation}


By combining the results in~\eqref{eq:penalty other s LB},~\eqref{eq:penaligeneralfirstbound}, and~\eqref{eq:alpha int}, we obtain the following bound:
\begin{align*}
\frac{\penali(\mechany, \sopt)}{\inf_{\si}\penali(\mechany, (\si, \soptmi))}    -1
\le & \frac{\penali(\mechany, \sopt)}{\penalnormi\prn*{
\mechany,\soptprim}}    -1\\
\le &
\frac{\sigma\sqrt{\frac{cd}{m}}\prn*{\frac{m}{2+\frac{m-2}{1+2A_m^2}}+1}}{d\sigma\sqrt{\frac{c/d}{m}}\prn*{
\frac{
10A_m^2-1}
{
4A_m^2\frac{m+1}{m}-1
} +1
}}-1
=
\frac{\frac{m}{2+\frac{m-2}{1+\frac{2\alpha^2}
{\nopti}}}+1}{
\frac{
\frac{10\alpha^2}{\nopti}-1}
{
\frac{4\alpha^2}{\nopti}\frac{m+1}{m}-1
} +1
}-1
\\
= & 
\frac{4A_m^2\prn*{(A_m^2-1)m+1-4A_m^2}m}
{(4A_m^2+m)\prn*{(7A_m^2-1)m+2A_m^2}}
=: E(m).
\numberthis
\label{eqn:generalpenaltyratiobound}
\end{align*}
Let $E(m)$ denote the final upper bound obtained above.
Next, we will prove $E(m)<5/m$.
When $m\in[5,500]$, this can be individually verified for each value of $E(m)$ (see Figure~\ref{fig:em plot}).
When $m\ge500$, we have $A_m\le 1.01$ (see~\eqref{eq:alpha int}). 
From this we can conclude,
\begin{align}
E(m)\le &
\frac{4\times 1.01^2 \times (2.01\times \frac{5}{m}m-3)m}{6m^2}
< \frac{5}{m}.
\label{eqn:Embound}
\end{align}
Combining the results in~\eqref{eqn:generalpenaltyratiobound} and~\eqref{eqn:Embound}, we obtain the following approximate NIC result:
\[
\forall\; i\in[m], \,s_i\in\Scal,
\hspace{0.2in}
\penali(\mechany, \sopt) 
\leq \penali(\mechany, (\si, \soptmi))\left(1 + \frac{5}{m}\right).
\]

\begin{figure}[t]
    \centering
    \includegraphics[width=1\textwidth]{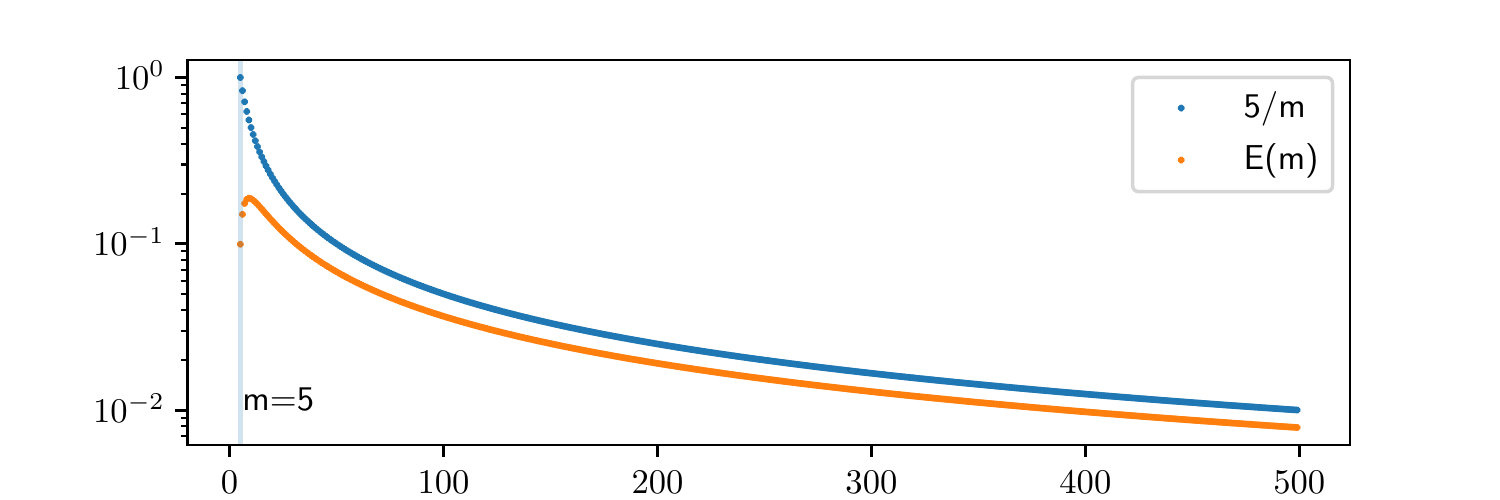}
    \caption{$E(m)$ plot. 
    See \texttt{G\_em\_plot.py}.
    }
    \label{fig:em plot}
\end{figure}

\textbf{Individual rationality:}
This proof is very similar to  the proof in~\S\ref{sec:mainirproof}. In particular, using calculations similar to~\eqref{eq:penaligeneralfirstbound}, we can show that regardless of the choice of $n_i$, the agent's penalty is strictly smaller when using the uncorrupted ($\datai$)
and corrupted ($\datai'$) datasets along with the weighted average in~\eqref{eqn:generalsopti}.

\textbf{Approximate efficiency:}
To bound the penalty ratio, first note that by~\eqref{eq:penaltyhighdexpression} and using the same reasoning as~\S\ref{sec:mainposproof}, we have that
\begin{align}
\frac{\sum_i\penalnormi(\mechany,\soptprim)}{
    \inf_{M\in\Mcal, s\in\Scal^m} \sum_i\penalnormi(M, s)
    }
    = 
    \frac{m \penalnormi\prn*{
\mechany,\soptprim}}{
    \inf_{M\in\Mcal, s\in\Scal^m} \sum_i\penalnormi(M, s)
    }
    = \frac{m \penalnormi\prn*{
\mechany,\soptprim}}{2\sigma\sqrt{cmd}}
\leq 2.
\label{eqn:generalnormalposbound}
\end{align}
Next, as $\normfam\subset\Theta$, and noting that
$\socpenal(M,s) = \sum_i\penali(M,s)$ for all $M, s$,
we can also write,
\begin{align}
    \inf_{M\in\Mcal, s\in\Scal^m} \sum_i\penalnormi(M, s)
    \leq
    \inf_{M\in\Mcal, s\in\Scal^m} \socpenal(M, s).
    \label{eqn:normsocpenalvssocpenal}
\end{align}
We can combine the above results to obtain the following upper bound on $\pos$:
\begin{align*}
\pos(\mechany, \sopt)
= \;&
\frac{\socpenal(\mechany, \sopt)}{\inf_{M\in\Mcal, s\in\Scal^m} \socpenal(M, s)}
\leq
\frac{m\penali(\mechany, \sopt)}{\inf_{M\in\Mcal, s\in\Scal^m} \sum_i\penalnormi(M, s)} 
\hspace{0.3in} \prn*{\mbox{By~\eqref{eqn:normsocpenalvssocpenal}}}
\\ =\;&
\frac{m \penalnormi\prn*{
\mechany,\soptprim}}{
    \inf_{M\in\Mcal, s\in\Scal^m} \sum_i\penalnormi(M, s)}
    \,
    \frac{\penali(\mechany, \sopt)}{\penalnormi\prn*{
\mechany,\soptprim}}\\
\le\; & 2\,\frac{\penali(\mechany, \sopt)}{\penalnormi\prn*{
\mechany,\soptprim}} \quad \hspace{2.5in}\prn*{\mbox{By~\eqref{eqn:generalnormalposbound}}}\\
=\; & 2(1+E(m)))\quad\hspace{1.25in}\prn*{\mbox{By definition of $E(m)$, see~\eqref{eqn:generalpenaltyratiobound}}}&\\
<\; & 2+\frac{10}{m}.\hspace{3.05in}\prn*{\mbox{By~\eqref{eqn:Embound}}}&
\end{align*}
This establishes approximate efficiency for $\mechany$ for
the high dimensional setting.

\section{Application to Bayesian Settings\label{sec:bayes setting}}
While our results study the Normal mean estimation in frequentist statistics, the main ideas can also be applied to the Bayesian setting. When the Normal mean admits a zero-mean normal prior, the major proof steps remain the same.
Specifically,
our current analysis constructs  a sequence of Gaussian priors and takes the limit to prove the minimax optimality. 
In the Bayesian setting, one can simply skip the step in~\eqref{eqn:bayeslimitexpression}, which takes the limit w.r.t. the prior sequence. The other steps remain the same.

\section{Useful Results}

In this section, we will state some useful results that we have used throughout this proof.

\begin{lemma}[Hardy-Littlewood inequality, Lemma 1.6 in~\citet{burchard2009short}]
\label{lem:hardy-L ineq}
Let $f$ and $g$ be non-negative measurable functions that vanish at infinity. 
Let $f^*$ and $g^*$ to denote the symmetric decreasing rearrangement of $f$ and $g$.
If $\int f^* g^* < \infty$,
then,
\begin{equation*}
\int fg \le \int f^* g^*.
\end{equation*}	
\end{lemma}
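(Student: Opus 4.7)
The plan is to reduce the integral inequality to a set-theoretic statement via the layer-cake decomposition and then exploit the fact that, by definition, super-level sets of symmetric decreasing rearrangements are concentric balls. Concretely, I would begin by writing, for each non-negative measurable function vanishing at infinity,
\[
f(x) = \int_0^\infty \mathbf{1}_{\{f > s\}}(x)\, ds, \qquad g(x) = \int_0^\infty \mathbf{1}_{\{g > t\}}(x)\, dt,
\]
and substituting these into $\int f(x) g(x)\, dx$. Multiplying out and applying Tonelli (all integrands are non-negative, so there is no issue with signs) yields
\[
\int f g\, dx = \int_0^\infty \!\int_0^\infty \bigl|\{f > s\} \cap \{g > t\}\bigr|\, ds\, dt.
\]
The same representation holds verbatim for $f^*$ and $g^*$.

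The main step is then the pointwise-in-$(s,t)$ set inequality
\[
\bigl|\{f > s\} \cap \{g > t\}\bigr| \;\leq\; \bigl|\{f^* > s\} \cap \{g^* > t\}\bigr|.
\]
For this I would use two observations. First, since $f$ vanishes at infinity, each super-level set $\{f > s\}$ has finite Lebesgue measure, and by definition $\{f^* > s\}$ is the open ball centered at the origin with this same measure (and likewise for $g^*$). Second, any two concentric balls are nested, so their intersection is whichever is smaller, giving $|\{f^* > s\} \cap \{g^* > t\}| = \min\bigl(|\{f > s\}|, |\{g > t\}|\bigr)$. For arbitrary measurable sets $A, B$ the trivial bound $|A \cap B| \leq \min(|A|, |B|)$ holds, which establishes the desired pointwise inequality.

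Integrating this inequality over $(s,t) \in [0,\infty)^2$ and applying Tonelli once more in reverse transfers the estimate back to $\int f g \leq \int f^* g^*$. The hypothesis $\int f^* g^* < \infty$ is what guarantees the iterated integral on the right is finite, so Tonelli/Fubini may be used without qualification and the final inequality is meaningful (not the trivial $\leq \infty$).

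The only real subtlety, and the step I would highlight as the main obstacle, is justifying that $\{f^* > s\}$ is genuinely a centered ball of the prescribed measure. This is essentially the defining property of the symmetric decreasing rearrangement for functions vanishing at infinity, but it requires care when super-level sets have boundary of positive measure (so that one chooses open balls consistently, equivalently defines $f^*(x) = \int_0^\infty \mathbf{1}_{\{|B_r| \leq |\{f>s\}|\}}\, ds$ with $r = |x|$). Once this definitional step is in place, every remaining ingredient is either Tonelli or the one-line nested-ball computation above.
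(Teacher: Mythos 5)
Your proof is correct: the layer-cake decomposition, Tonelli, the trivial bound $|A\cap B|\le\min(|A|,|B|)$, and the observation that superlevel sets of symmetric decreasing rearrangements are nested concentric balls of the prescribed measures together give the inequality, and you correctly flag the one definitional subtlety about identifying $\{f^*>s\}$ with a centered ball. The paper itself does not prove this lemma --- it is quoted verbatim as Lemma 1.6 of the cited reference \citep{burchard2009short} --- and your argument is precisely the standard proof given there, so there is nothing to compare beyond noting that you have supplied the proof the paper delegates to its citation.
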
	


Next, we will use the above result to derive a corollary that will be useful in our proofs.

\begin{lemma}[A corollary of Hardy-Littlewood]
\label{lem:re hardy-L ineq}	
Let $f$, $g$ be nonnegative even functions such that,
\begin{itemize}
\item $f$ is monotonically increasing on $[0,\infty)$.
\item $g$ is monotonically decreasing on $[0,\infty)$,
and has a finite integral $\int_\RR g(x) dx < \infty$.
\item $\forall a$, $\int_\RR f(x-a)g(x) dx < \infty$.
\end{itemize}
Then for all $a$,
\begin{equation*}
\int_\RR f(x)g(x) dx
\le 
\int_\RR f(x-a)g(x) dx
\end{equation*}
\end{lemma}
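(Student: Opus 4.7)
The plan is to reduce the claim to the standard Hardy--Littlewood inequality (Lemma~\ref{lem:hardy-L ineq}) via the layer--cake representation of $f$, together with the observation that the super-level sets of $f$ are complements of intervals centered at the origin. The key structural facts are (a) $g$ is its own symmetric decreasing rearrangement, and (b) the symmetric decreasing rearrangement of a translate of $g$ is again $g$.

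First I would write $f(y) = \int_0^\infty \mathbbm{1}[f(y) > t]\,dt$, substitute $y = x - a$, multiply through by $g(x)$, and invoke Tonelli's theorem (all integrands are nonnegative) to exchange the order of integration:
\[
\int_\RR f(x-a)\,g(x)\,dx = \int_0^\infty \Bigl( \int_\RR \mathbbm{1}[f(x-a) > t]\, g(x)\,dx \Bigr) dt.
\]
Because $f$ is even and monotonically increasing on $[0,\infty)$, for each $t \ge 0$ there exists $r(t) := \inf\{s \ge 0 : f(s) > t\} \in [0,\infty]$ such that $\{y \in \RR : f(y) > t\} = \{y : |y| > r(t)\}$ up to a null set. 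Hence $\mathbbm{1}[f(x-a) > t] = \mathbbm{1}[|x-a| > r(t)]$, and the inner integral equals $\int_\RR g(x)\,dx - \int_{a - r(t)}^{a + r(t)} g(x)\,dx$, where the first term is a finite constant independent of $a$ by the assumed integrability of $g$.

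It therefore suffices to prove the ``one-level'' inequality
\[
\int_{a-r}^{a+r} g(x)\,dx \;\le\; \int_{-r}^{r} g(x)\,dx \qquad \text{for every } r \ge 0 \text{ and } a \in \RR.
\]
For this I would apply Lemma~\ref{lem:hardy-L ineq} with $u(x) := \mathbbm{1}_{[-r,r]}(x)$ and $v(x) := g(x+a)$. Both are nonnegative and vanish at infinity (the latter since $g$ is nonnegative, decreasing on $[0,\infty)$, and integrable, hence $g(x) \to 0$ as $|x| \to \infty$). Since $u$ is already symmetric and decreasing in $|x|$, $u^* = u$; and since the symmetric decreasing rearrangement depends only on the distribution function of the absolute value and is therefore translation-invariant, $v^* = g^* = g$. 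The required finiteness $\int u^* v^* = \int_{-r}^{r} g < \infty$ holds, so Hardy--Littlewood yields $\int u(x) v(x)\,dx \le \int u^*(x) v^*(x)\,dx$; a change of variables $y = x+a$ on the left recovers precisely the displayed inequality.

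Feeding this back, the inner integral in the layer-cake representation is minimized at $a = 0$ pointwise in $t$, and integrating over $t \ge 0$ gives $\int f(x) g(x)\,dx \le \int f(x-a) g(x)\,dx$. The only mildly delicate step will be justifying that the super-level sets of $f$ are exactly complements of centered intervals (using evenness plus monotonicity on $[0,\infty)$, modulo a boundary set of measure zero that does not affect the integral) and checking the vanishing-at-infinity hypothesis needed to invoke Lemma~\ref{lem:hardy-L ineq}; the finiteness assumption $\int f(x-a) g(x)\,dx < \infty$ is used only to exclude the degenerate case $\infty \le \infty$ in the final inequality.
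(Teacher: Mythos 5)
Your proof is correct, but it takes a different route from the paper's. The paper handles the bounded case first: setting $M:=\lim_{x\to\infty}f(x)$, it applies Lemma~\ref{lem:hardy-L ineq} directly to the pair $\bigl(M-f(\cdot-a),\,g\bigr)$ (whose rearrangements are $M-f$ and $g$), and then rearranges; the unbounded case is handled by truncating to $f_n=\min\{f,n\}$ and passing to the limit via dominated convergence. You instead decompose $f$ by the layer-cake formula, observe that its super-level sets are complements of centered intervals, and reduce the claim to the one-level inequality $\int_{a-r}^{a+r}g\le\int_{-r}^{r}g$, which you obtain from Lemma~\ref{lem:hardy-L ineq} applied to $\bigl(\mathbbm{1}_{[-r,r]},\,g(\cdot+a)\bigr)$. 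Both arguments ultimately rest on the same structural facts (the level sets of $f$ are symmetric complements of intervals, and $g$ equals its own rearrangement up to translation), but your decomposition buys a uniform treatment with no case split on $\sup f$ and no limiting argument — Tonelli does all the work — and your one-level inequality is elementary enough that it could even be proved without invoking Hardy--Littlewood at all, making the argument nearly self-contained. The paper's version is shorter to write given the rearrangement inequality as a black box, at the cost of the truncation step. Your attention to the measure-zero boundary of the level sets and to the vanishing-at-infinity hypothesis (which for $g$ follows from integrability plus monotonicity) is exactly the right care to take; no gaps remain.
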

\begin{proof}

We will break this proof into two cases.
The first is when $\sup f < \infty$ and the second is when $\sup f=\infty$.
First consider the case $\sup f < \infty$. Let 
\[
M:=\lim_{x\to\infty} f(x).
\]
By using Lemma \ref{lem:hardy-L ineq}, $\forall a$,
\[
\int_\RR (M-f(x))g(x)dx
\ge
\int_\RR (M-f(x-a))g(x)dx.
\]
The result follows after rearrangement. 

If $\sup f = \infty$, let  
$
f_n(x) := \min\crl*{f(x),n}
$.
For all $n$ and $a$, by Lemma \ref{lem:hardy-L ineq},
\[
\int_\RR (n-f_n(x))g(x)dx
\ge
\int_\RR (n-f_n(x-a))g(x)dx,
\]
thus
\begin{equation*}
\int_\RR f_n(x)g(x) dx
\le 
\int_\RR f_n(x-a)g(x) dx.
\end{equation*}
Note that $\abr{f_n(x) g(x)} \le f(x)g(x)$, the result follows by letting $n\to\infty$ on both sides and using dominated convergence theorem.
\end{proof}
Below, we provide a brief example on using Lemma~\ref{lem:re hardy-L ineq} to calculate the Bayes risk in a normal mean estimation problem with i.i.d data. While it is not necessary to use Hardy-Littlewood for this problem, this example will illustrate how we have used it in our proofs.

\begin{example}
Consider the Normal mean estimation problem given samples $X_{[n]}\sim\Ncal(\mu,\sigma^2)$, where $\mu$ admits a prior distribution $\Ncal(0,\ell^2)$.
The goal is to minimize the average risk:
\begin{equation*}
\EE_{\mu\sim\Ncal(0,\ell^2)}\brk*{
\EE_{X_{[n]}\sim\Ncal(\mu,\sigma^2)}\brk*{
L(\hat\mu-\mu)|\mu
}
},    
\end{equation*}
where the loss function, $L(\cdot)$, is an even function that increases on $[0,\infty)$.
By a standard argument, one can show that the posterior distribution of $\mu$ conditioned on $X_{[n]}$ is Gaussian with data-dependent parameters $\bar\mu,\bar\sigma^2$:
\begin{equation*}
\mu|X_{[n]}\sim\Ncal(\overline{\mu},\overline{\sigma}^2).    
\end{equation*}
The posterior risk is:
\begin{equation*}
\EE_{\mu|X_{[n]}}\brk*{
L(\hat\mu-\mu)
}
=
\EE_{\mu|X_{[n]}}\brk*{
L((\mu-\overline{\mu}) + (\overline{\mu}-\hat\mu))
}
=
\int_\RR \underbrace{L(x+(\overline{\mu}-\hat\mu))}_{=:f(x+(\overline{\mu}-\hat\mu))}
\underbrace{\frac{\exp\prn*{-\frac{x^2}{2\overline{\sigma}^2}}}{\overline{\sigma}\sqrt{2\pi}}}_{=:g(x)}
dx
\end{equation*}
By applying Lemma~\ref{lem:re hardy-L ineq} with $f$ and $g$, the posterior risk above is minimized when $\hat\mu=\overline{\mu}$. So is the average risk.   
\end{example}

The next Lemma shows that convexity is preserved under expectation under certain conditions.

\begin{lemma}\label{lem:convex exp}
Let $y$	be a random variable and $f(x,y)$ be a function s.t.
\begin{itemize}
\item $f(x,y)$ is convex in $x$;
\item $\EE_y\brk*{\abr{f(x,y)}} < \infty$ for all $x$.
\end{itemize}
Then $\EE_y\brk*{f(x,y)}$ is also convex in $x$.
\end{lemma}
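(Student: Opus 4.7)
\textbf{Proof plan for Lemma \ref{lem:convex exp}.} The plan is to reduce the claim directly to the pointwise convexity hypothesis by applying linearity and monotonicity of expectation. Fix arbitrary $x_1, x_2$ in the domain and any $\lambda \in [0,1]$, and set $x_\lambda := \lambda x_1 + (1-\lambda) x_2$. The first step will be to invoke the convexity of $f(\cdot, y)$ in its first argument, which holds for every fixed realization of $y$: this gives the pointwise inequality
\[
f(x_\lambda, y) \;\le\; \lambda f(x_1, y) + (1-\lambda) f(x_2, y).
\]

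The second step is to take $\EE_y$ on both sides. The integrability hypothesis $\EE_y[|f(x,y)|] < \infty$ for all $x$ ensures that $\EE_y[f(x_1,y)]$, $\EE_y[f(x_2,y)]$, and $\EE_y[f(x_\lambda,y)]$ are all well-defined and finite, so there is no issue with indeterminate forms. Monotonicity of expectation then preserves the inequality, and linearity of expectation lets us split the right-hand side, yielding
\[
\EE_y[f(x_\lambda, y)] \;\le\; \lambda\, \EE_y[f(x_1, y)] + (1-\lambda)\,\EE_y[f(x_2, y)].
\]
Since this holds for arbitrary $x_1, x_2$ and $\lambda \in [0,1]$, the function $x \mapsto \EE_y[f(x,y)]$ is convex, which is the claim.

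There is no real obstacle here: the only subtlety is making sure that the expectations on both sides exist as finite numbers before applying monotonicity, and this is exactly what the second hypothesis of the lemma provides. (If one wanted to be pedantic, one could note that the pointwise inequality holds for every $y$, not merely almost surely, so measurability of the relevant maps follows from the standing assumption that $f(x, \cdot)$ is measurable for each $x$, implicit in writing $\EE_y[f(x,y)]$.) No truncation, dominated convergence, or Jensen-type machinery is needed for this direction; the proof is a two-line application of the definition of convexity combined with the elementary properties of expectation.
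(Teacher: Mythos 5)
Your proof is correct and follows essentially the same route as the paper: apply the pointwise convexity of $f(\cdot,y)$ and then use linearity and monotonicity of expectation, with the integrability hypothesis guaranteeing all expectations are finite. The only (minor) difference is that the paper verifies just the midpoint inequality, whereas you handle an arbitrary convex combination $\lambda x_1 + (1-\lambda)x_2$, which is the slightly more complete version of the same argument.
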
	
\begin{proof}
For any $x_1,x_2$, we have 
\begin{align*}
\frac{\EE_y\brk*{f(x_1,y)} + \EE_y\brk*{f(x_2,y)}}{2}
= & 
\EE_y\brk*{
\frac{f(x_1,y) + f(x_2,y)}{2}
}
\ge 
\EE_y\brk*{
f\prn*{
\frac{x_1+x_2}{2},y
}
}
\end{align*}

\end{proof}

\begin{lemma}[Centered moments of normal random variable]\label{lem:normalpowers}
Let $X\sim\Ncal(\mu,\sigma^2)$ be a normal random variable and $p\in\ZZ_+$, then
\begin{equation*}
\EE\brk*{
(X-\mu)^p
}    
=
\begin{cases}
0 & \mbox{ if $p$ is odd}    \\
\sigma^p(p-1)!! & \mbox{ if $p$ is even}
\end{cases}.
\end{equation*}
\end{lemma}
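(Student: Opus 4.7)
The plan is to reduce to the standard normal case via a shift-and-scale, and then handle the odd and even cases separately: odd moments vanish by symmetry, and even moments follow by a short induction using integration by parts. First, I would set $Z = (X-\mu)/\sigma$, so that $Z\sim\Ncal(0,1)$ and $\EE[(X-\mu)^p] = \sigma^p\,\EE[Z^p]$. It therefore suffices to show $\EE[Z^p]=0$ for odd $p$ and $\EE[Z^{2k}] = (2k-1)!!$ for even $p=2k$.

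For odd $p$, I would note that the standard normal density $\phi(z) = (2\pi)^{-1/2} e^{-z^2/2}$ is an even function of $z$, so the integrand $z^p\phi(z)$ is odd. Since $\int_{\RR}|z|^p \phi(z)\, dz < \infty$ for every $p\in\ZZ_+$ (the Gaussian tail dominates any polynomial), the integral $\int_{\RR} z^p \phi(z)\, dz$ converges absolutely and, by the odd symmetry, equals $0$.

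For even $p = 2k$, I would set up the recurrence $\EE[Z^{2k}] = (2k-1)\,\EE[Z^{2k-2}]$ via integration by parts. The key identity is $\phi'(z) = -z\,\phi(z)$, so
\begin{align*}
\EE[Z^{2k}] \;=\; \int_{\RR} z^{2k}\phi(z)\,dz \;=\; -\int_{\RR} z^{2k-1}\phi'(z)\,dz \;=\; (2k-1)\int_{\RR} z^{2k-2}\phi(z)\,dz,
\end{align*}
where the boundary terms $[-z^{2k-1}\phi(z)]_{-\infty}^{\infty}$ vanish because the Gaussian tail decays faster than any polynomial. With base case $\EE[Z^0] = 1$, iterating the recurrence gives $\EE[Z^{2k}] = (2k-1)(2k-3)\cdots 3\cdot 1 = (2k-1)!!$, and multiplying by $\sigma^{2k}$ yields the claim.

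There is no real obstacle here; the only care needed is to justify that the boundary terms in the integration by parts vanish and that the integrals converge, both of which follow immediately from the super-polynomial decay of the Gaussian density. An equally short alternative would be to read off the moments from the moment generating function $\EE[e^{t(X-\mu)}] = e^{\sigma^2 t^2/2}$ by matching Taylor coefficients, using the identity $(2k)!/(2^k k!) = (2k-1)!!$; I would mention this as a one-line alternative proof.
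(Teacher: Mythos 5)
Your proof is correct. The paper states Lemma~\ref{lem:normalpowers} without proof, treating it as a standard fact about Gaussian moments, and your argument (reduction to the standard normal, odd moments vanishing by symmetry, and the integration-by-parts recurrence $\EE[Z^{2k}]=(2k-1)\EE[Z^{2k-2}]$ yielding $(2k-1)!!$) is exactly the standard derivation one would supply; the moment-generating-function alternative you mention is equally valid.
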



\subsection{Some technical results}

Next, we will state some technical results that were obtained purely using algebraic manipulations and are not central to the main proof ideas.
The first result states upper and lower bounds on the Gaussian complementary error function using an asymptotic expansion.

\begin{lemma}[$\erfc$ bound]\label{lem:erfc LB UB}
For all $x>0$, 
\begin{align}
\erfc(x)
\le &
\frac{1}{\sqrt{\pi}}
\prn*{
\frac{\exp(-x^2)}{x}-
\frac{\exp(-x^2)}{2x^3}
+
\frac{3\exp(-x^2)}{4x^5}
}\\
\erfc(x)
\ge &
\frac{1}{\sqrt{\pi}}\prn*{
\frac{\exp(-x^2)}{x}-
\frac{\exp(-x^2)}{2x^3}
}
\end{align}
\end{lemma}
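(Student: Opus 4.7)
The plan is to prove both bounds by applying integration by parts repeatedly to the integral representation $\erfc(x) = \frac{2}{\sqrt{\pi}}\int_x^{\infty} e^{-t^2}\,dt$. The driving observation is that $2t e^{-t^2} = -\tfrac{d}{dt} e^{-t^2}$, so if we rewrite an integrand of the form $t^{-2k} e^{-t^2}$ as $\tfrac{1}{(2k+1)\,t^{2k+1}} \cdot 2t e^{-t^2}$ (multiplying and dividing by $2t$), then integration by parts with $v = -e^{-t^2}$ produces a clean boundary term at $t=x$ (the boundary at infinity vanishes because $e^{-t^2}$ decays faster than any polynomial) plus a new remainder integral of the same form but with an even higher denominator power. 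Iterating this procedure gives an asymptotic expansion in powers of $1/x$.

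Concretely, I would carry out three iterations. The first, with $u = 1/(2t)$ and $dv = 2t e^{-t^2}\,dt$, gives
\begin{equation*}
\int_x^{\infty} e^{-t^2}\,dt \;=\; \frac{e^{-x^2}}{2x} \;-\; \int_x^{\infty} \frac{e^{-t^2}}{2 t^2}\,dt.
\end{equation*}
A second application (with $u = 1/(4t^3)$, $dv = 2t e^{-t^2}\,dt$) to the remainder yields
$\int_x^\infty \tfrac{e^{-t^2}}{2t^2}\,dt = \tfrac{e^{-x^2}}{4x^3} - \int_x^\infty \tfrac{3 e^{-t^2}}{4 t^4}\,dt$,
and a third (with $u = 3/(8t^5)$, $dv = 2te^{-t^2}\,dt$) gives
$\int_x^\infty \tfrac{3 e^{-t^2}}{4t^4}\,dt = \tfrac{3 e^{-x^2}}{8x^5} - \int_x^\infty \tfrac{15 e^{-t^2}}{8 t^6}\,dt$.
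Telescoping these three identities produces the exact expansion
\begin{equation*}
\int_x^{\infty} e^{-t^2}\,dt \;=\; \frac{e^{-x^2}}{2x} - \frac{e^{-x^2}}{4x^3} + \frac{3 e^{-x^2}}{8x^5} - \int_x^{\infty} \frac{15\, e^{-t^2}}{8 t^6}\,dt.
\end{equation*}

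Both inequalities now follow by truncating the expansion and exploiting the sign of the remainder. For the lower bound, I truncate after two terms: since the (omitted) remainder $\int_x^\infty \tfrac{3 e^{-t^2}}{4 t^4}\,dt$ is nonnegative for $x>0$, one obtains $\int_x^\infty e^{-t^2}\,dt \geq \tfrac{e^{-x^2}}{2x} - \tfrac{e^{-x^2}}{4x^3}$. For the upper bound, I keep three terms: the omitted remainder $\int_x^\infty \tfrac{15 e^{-t^2}}{8t^6}\,dt$ is again nonnegative, giving $\int_x^\infty e^{-t^2}\,dt \leq \tfrac{e^{-x^2}}{2x} - \tfrac{e^{-x^2}}{4x^3} + \tfrac{3 e^{-x^2}}{8x^5}$. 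Multiplying both inequalities by $2/\sqrt{\pi}$ yields exactly the lemma.

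There is no substantive obstacle here: the only care needed is bookkeeping in each integration by parts so that the constants $1/2$, $3/4$, $15/8$ in the successive remainders come out correctly (these are the double factorials $(2k-1)!!/2^k$ one expects from this standard expansion). Positivity of each remainder integral is immediate from the positivity of the integrand on $(x,\infty)$, which is what produces the alternating inequality directions at successive truncation orders.
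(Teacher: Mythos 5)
Your proposal is correct and matches the paper's proof essentially exactly: both perform three successive integrations by parts on $\int_x^\infty e^{-t^2}\,dt$ and obtain the two bounds from the nonnegativity of the remainder integrals $\int_x^\infty \frac{3e^{-t^2}}{4t^4}\,dt$ and $\int_x^\infty \frac{15e^{-t^2}}{8t^6}\,dt$ at the respective truncation orders. The constants and the final multiplication by $2/\sqrt{\pi}$ all check out.
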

\begin{proof}
By integration by parts:
\begin{align*}
&\frac{\sqrt{\pi}}{2}\erfc(x)=\int_x^\infty\exp\prn*{-t^2}dt
= 
\left.\prn*{-\frac{\exp\prn*{-t^2}}{2t}}\right|_x^\infty
- 
\int_x^\infty\frac{\exp\prn{-t^2}}{2t^2}dt\\
= & \frac{\exp(-x^2)}{2x}-\prn*{
\left.\prn*{
-\frac{\exp(-t^2)}{4t^3}
}\right|_x^\infty
-
\int_x^\infty
\frac{3\exp(-t^2)}{4t^4}dt
}\\
= & \frac{\exp(-x^2)}{2x}-
\frac{\exp(-x^2)}{4x^3}
+
\underbrace{
\int_x^\infty
\frac{3\exp(-t^2)}{4t^4}dt
}_{\ge0}
\numberthis\label{eq:erfc lb pf}\\
= & \frac{\exp(-x^2)}{2x}-
\frac{\exp(-x^2)}{4x^3}
+
\left.\prn*{
-\frac{3\exp(-t^2)}{8t^5}
}\right|_x^\infty
-\int_x^\infty\frac{15\exp(-t^2)}{8t^6}dt\\
= & \frac{\exp(-x^2)}{2x}-
\frac{\exp(-x^2)}{4x^3}
+
\frac{3\exp(-x^2)}{8x^5}
\underbrace{
-\int_x^\infty\frac{15\exp(-t^2)}{8t^6}dt
}_{\le 0}
\numberthis\label{eq:erfc ub pf}
\end{align*}    
The results follow by~\eqref{eq:erfc lb pf} and~\eqref{eq:erfc ub pf}.
\end{proof}



Our next result,
states an expression for the function $p(n_i)$ and its derivative as defined in~\eqref{eqn:pniexpression}.

\begin{lemma}[Value and derivative of penalty function at $\sopt$]\label{lem:penalty function}
Let \emph{$p(n_i) = \penali(\mechany, ((n_i,\subfuncopti,\estimopti),\soptmi))$} (see~\eqref{eqn:pniexpression}) and $\sopti,\subfuncopti,\estimopti$ be as specified in~\eqref{eqn:mainsopti}. 
The penalty of agent $i$ in Algorithm~\ref{alg:main} satisfies:
\begin{align*}
p(\nopti)
= & 
\frac{\sqrt{\frac{\alpha^2}{m\nopti}}\sigma^2\prn*{
2m\sqrt{2\pi}\sqrt{\frac{\alpha^2}{m\nopti}}
-\exp\prn*{\frac{m\nopti}{8\alpha^2}}(m-2)\pi\erfc\prn*{\frac{1}{2\sqrt{2}\sqrt{\frac{\alpha^2}{m\nopti}}}}
}}{4\sqrt{2\pi}\alpha^2}
+c\nopti
\numberthis\label{eqn:pniexprvalue}
\\
p'(\nopti)
=
&-\frac{\sigma^2}{64\frac{\alpha^2}{m-2}\frac{\alpha}{\sqrt{m \nopti}}m\nopti}
\Bigg(
\frac{4\alpha}{\sqrt{m \nopti}}\prn*{\frac{4\alpha^2 m}{(m-2)\nopti}-1}\\
&-\exp\prn*{\frac{m\nopti}{8\alpha^2}}
\prn*{
\frac{4\alpha^2}{m\nopti}(m+1)-1
}
\sqrt{2\pi}\erfc\prn*{
\frac{1}{2\sqrt{2}\sqrt{\frac{\alpha^2}{m \nopti}}}
}
\Bigg)+c.
\numberthis\label{eqn:pniexprderivative}
\end{align*}
\end{lemma}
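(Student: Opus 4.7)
My starting point will be the closed-form expression for the penalty as an expectation over a standard normal that was already derived in the proof of Lemma~\ref{lem:v3 opt nj}, namely
\begin{equation*}
p(n_i) \;=\; \EE_{x\sim\Ncal(0,1)}\!\left[\frac{1}{\dfrac{(m-2)\nopti}{\sigma^2+\alpha^2\!\left(\tfrac{\sigma^2}{n_i}+\tfrac{\sigma^2}{\nopti}\right)x^2}+\dfrac{n_i+\nopti}{\sigma^2}}\right] + cn_i.
\end{equation*}
Both parts of the lemma then reduce to evaluating one-dimensional Gaussian integrals of rational functions of $x^2$, and the only real content of the proof is (i) rearranging those rational functions into a ``constant $+\ \frac{1}{A+Bx^2}$'' form, and (ii) applying the standard identity
\begin{equation*}
\EE_{x\sim\Ncal(0,1)}\!\left[\frac{1}{A+Bx^2}\right] \;=\; \sqrt{\frac{\pi}{2AB}}\,\exp\!\left(\frac{A}{2B}\right)\,\erfc\!\left(\sqrt{\frac{A}{2B}}\right),
\end{equation*}
which follows from the classical $\int_0^\infty e^{-x^2}/(a^2+x^2)\,dx = \tfrac{\pi}{2a}e^{a^2}\erfc(a)$ after the substitution $x\mapsto x/\sqrt{2}$.

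For equation~\eqref{eqn:pniexprvalue} I would first set $n_i=\nopti$ inside the integrand. Writing $a:=2\alpha^2/\nopti$ and factoring, the rational part of the integrand simplifies to $\tfrac{\sigma^2(1+ax^2)}{\nopti(m+2ax^2)}$, which I split via the identity $\tfrac{1+ax^2}{m+2ax^2} = \tfrac{1}{2} - \tfrac{m-2}{2(m+2ax^2)}$. The first term is deterministic and contributes $\sigma^2/(2\nopti)$; the second term reduces by the $\erfc$-identity above (with $A=m$, $B=2a=4\alpha^2/\nopti$, so $A/(2B) = m\nopti/(8\alpha^2)$) to an expression with precisely the $\exp(m\nopti/(8\alpha^2))\erfc(\cdot)$ factor that appears in the statement. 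After restoring the cost term $c\nopti$ and collecting factors of $\alpha^2/(m\nopti)$, the announced formula for $p(\nopti)$ drops out.

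For the derivative in~\eqref{eqn:pniexprderivative} I will differentiate under the expectation, which is justified because the integrand and its $n_i$-derivative are uniformly bounded by $\sigma^2/(n_i+\nopti)$ and by an integrable dominating function, respectively. The formula for $\partial_{n_i} l(n_i,x;\alpha)$ has already been computed in~\eqref{eqn:p inside der}; evaluating it at $n_i=\nopti$ yields an integrand that is a rational function of $x^2$ with a quadratic in the denominator. The cleanest route is to recognize that this integrand can be written as a linear combination of $\tfrac{1}{m+2ax^2}$ and $\tfrac{1}{(m+2ax^2)^2}$; the latter is obtained by differentiating the first identity above with respect to $A$, which produces another $\exp\cdot\erfc$ term together with a boundary contribution of the form $1/\sqrt{AB}$. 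Collecting all the $\erfc$ factors and the residual polynomial term gives exactly the bracketed expression in~\eqref{eqn:pniexprderivative}, after which adding $c$ completes the formula.

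\textbf{Main obstacle.} The proof is conceptually routine (two Gaussian integrals against a rational weight) but the bookkeeping is the delicate part: the statement involves the precise coefficients $\tfrac{4\alpha^2 m}{(m-2)\nopti}-1$ and $\tfrac{4\alpha^2}{m\nopti}(m+1)-1$, which must emerge \emph{exactly} from the algebraic rearrangement of the derivative formula~\eqref{eqn:p inside der}. I expect most of the work to be in partial-fraction-decomposing the expression $\frac{1+\beta(m-2)\nopti\alpha^2 x^2/(n_i+\alpha^2(1+n_i/\nopti)x^2)^2}{\bigl((m-2)\nopti/(1+\alpha^2(1/n_i+1/\nopti)x^2)+n_i+\nopti\bigr)^2}$ at $n_i=\nopti$ and verifying that the two coefficients in front of the $\erfc$-term and the polynomial-in-$\alpha$ term line up with the claim. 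Keeping track of the common prefactor $-\sigma^2/(64(\alpha^2/(m{-}2))(\alpha/\sqrt{m\nopti})m\nopti)$ requires care but no new ideas.
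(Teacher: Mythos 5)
Your proposal is correct and follows essentially the same route as the paper: both start from the expectation form of $p(n_i)$ in~\eqref{eqn:pniexpression}, use the decomposition $\tfrac{1+ax^2}{m+2ax^2}=\tfrac12-\tfrac{m-2}{2(m+2ax^2)}$ for the value, differentiate under the expectation via~\eqref{eqn:p inside der} for the derivative, and reduce everything to the Gaussian integrals $\EE[1/(L+x^2)]$ and $\EE[1/(L+x^2)^2]$. The only (minor) divergence is in how those two identities are established --- you invoke the classical integral $\int_0^\infty e^{-x^2}/(a^2+x^2)\,dx=\tfrac{\pi}{2a}e^{a^2}\erfc(a)$ and differentiate with respect to the denominator parameter, whereas the paper derives both as solutions of ODEs in the Gaussian temperature parameter (Lemma~\ref{lem:expvaluebound}); both yield the same formulas, so this is a difference in a technical sub-lemma rather than in the argument.
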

This proof involves several algebraic manipulations, so we will provide an outline of our proof strategy.
First, we will rearrange the denominator inside the expectation in~\eqref{eqn:pniexpression}, to write
the LHS of~\eqref{eqn:pniexprvalue} as $J  +
K\EE\brk*{\frac{1}{L + x^2}}$,
and the LHS of~\eqref{eqn:pniexprderivative} as $J'  +
K'\EE\brk*{\frac{1}{L + x^2}}
+
K''\EE\brk*{\frac{1}{(L + x^2)^2}}$,
where the expectation is with respect to a standard normal $\Ncal(0,1)$ variable, $J, K, K', K'', L$ are quantities that depend on $n_i, m, c, \sigma^2, \alpha^2$,
and importantly, $L$ is strictly larger than $0$.
%
Using properties of the normal distribution, in Lemma~\ref{lem:expvaluebound}, we prove the following result:
\begin{align*}
&\EE\left[\frac{1}{L + x^2}\right] = \sqrt{\frac{\pi}{2L}} \exp\prn*{\frac{L}{2}}
\erfc\left(\sqrt{\frac{L}{2}}\right)
\numberthis\label{eqn:expLplusxsquared}\\
& 
\EE\left[\frac{1}{(L + x^2)^2}\right]
=
\frac{\sqrt{\pi}}{2\sqrt{2}L^{3/2}}(1-L)
\exp\prn*{\frac{L}{2}}
\erfc\prn*{\sqrt{\frac{L}{2}}}
+
\frac{1}{2L}
\numberthis\label{eqn:expLplusxsquaredderivateive}
\end{align*}
By plugging in these expressions and then substituting $n_i=\nopti$, we obtain~\eqref{eqn:pniexprvalue} and~\eqref{eqn:pniexprderivative}.


\begin{proof}[Proof of Lemma~\ref{lem:penalty function}]
We will rewrite $p(\nopti)$ and $p'(\nopti)$ as the Gaussian integral of rational functions and use~\eqref{eqn:expLplusxsquared} to calculate their values.
By~\eqref{eqn:pniexpression},
\begin{align*}
p(\nopti)
= & 
	\EE_{x\sim\Ncal(0,1)}\brk*
	{\frac{1}{\frac{(m-2)\nopti}{{\sigma^2+\alpha^2\prn*{\frac{\sigma^2}{\nopti}
						+
						\frac{\sigma^2}{\nopti}} x^2}}
			+
			\frac{\nopti+\nopti}{\sigma^2}
	}} + c\nopti\\
= & 
\EE_{x\sim\Ncal(0,1)}\brk*
{\frac{1}{\frac{(m-2)\nopti}{\sigma^2+\alpha^2\frac{2\sigma^2}{\nopti} x^2}
        +
        \frac{2\nopti}{\sigma^2}
}} + c\nopti
=
\frac{\sigma^2}{\nopti}\EE_{x\sim\Ncal(0,1)}\brk*
{\frac{1}{\frac{m-2}{1+\frac{2\alpha^2}{\nopti} x^2}+2
}} + c\nopti\\
= & 
\frac{\sigma^2}{\nopti}\EE_{x\sim\Ncal(0,1)}\brk*
{
\frac{1}{2}
-
\frac{m-2}{2}\frac{1}{\frac{4\alpha^2}{\nopti}x^2+m}
} + c\nopti \\
= & \frac{\sigma^2}{2\nopti}
-
\frac{\sigma^2}{\nopti}\frac{m-2}{2}\frac{\nopti}{4\alpha^2}\EE_{x\sim\Ncal(0,1)}\brk*{\frac{1}{x^2 + \frac{m\nopti}{4\alpha^2}}}
+
c\nopti\\
= & \frac{\sigma^2}{2\nopti}
-
\frac{\sigma^2}{4\alpha^2}\frac{m-2}{2}
\exp\prn*{\frac{m\nopti}{8\alpha^2}}\erfc\prn*{\sqrt{\frac{m\nopti}{8\alpha^2}}}\sqrt{\frac{\pi}{\frac{m\nopti}{2\alpha^2}}}
+
c\nopti\\
&\prn*{
\mbox{In~\eqref{eqn:expLplusxsquared}, let $L = \frac{m\nopti}{4\alpha^2}$}
}\\
= &\mbox{RHS of~\eqref{eqn:pniexprvalue}}.
\end{align*}
To prove the second statement of Lemma~\ref{lem:penalty function}, by~\eqref{eqn:p inside der} and the dominated convergence theorem, we have:
\begin{align*}
&p'(\nopti)
=  
\EE_{x\sim\Ncal(0,1)}\brk*{
-\sigma^2
\frac{1+
\frac{(m-2)\nopti}{
\prn*{
1+\alpha^2
\prn*{
\frac{1}{\nopti}
+
\frac{1}{\nopti}
}x^2
}^2
}
\frac{\alpha^2 x^2}{\nopti^2}
}{
\prn*{
\frac{(m-2)\nopti}{{1+\alpha^2\prn*{\frac{1}{\nopti}
				+
				\frac{1}{\nopti}} x^2}}
	+{\nopti+\nopti}
}^2
}
}
+
c\\
&\prn*{\mbox{By~\eqref{eqn:p inside der} and dominated convergence theorem}}\\
= &
-\sigma^2
\EE_{x\sim\Ncal(0,1)}\brk*{
\frac{1+
\frac{(m-2)\nopti}{
\prn*{
1+ 
\frac{2\alpha^2}{\nopti}x^2
}^2
}
\frac{\alpha^2 x^2}{\nopti^2}
}{
\prn*{
\frac{(m-2)\nopti}{{1+
				\frac{2\alpha^2}{\nopti} x^2}}
	+2\nopti
}^2
}
}
+
c\\
= &
-\frac{\sigma^2}{\nopti^2}
\EE_{x\sim\Ncal(0,1)}\brk*{
\frac{1+
\frac{(m-2)\nopti \alpha^2 x^2}{
\prn*{
\nopti+ 
2\alpha^2x^2
}^2
}
}{
\prn*{
\frac{(m-2)\nopti}{{\nopti+2\alpha^2 x^2}}
	+2
}^2
}
}
+
c\\
= &
-\frac{\sigma^2}{4\nopti^2}
\EE_{x\sim\Ncal(0,1)}\brk*{
\frac{4\prn*{
\nopti+ 
2\alpha^2x^2
}^2+
4(m-2)\nopti \alpha^2 x^2
}{
\prn*{
(m-2)\nopti
+2\prn*{\nopti+2\alpha^2 x^2}
}^2
}
}
+
c\\
= &
-\frac{\sigma^2}{4\nopti^2}
\EE_{x\sim\Ncal(0,1)}\brk*{
1+\frac{
- (m-2)^2\nopti^2
-4(m-2)\nopti\prn*{\nopti+2\alpha^2 x^2}
+
4(m-2)\nopti \alpha^2 x^2
}{
\prn*{
(m-2)\nopti
+2\prn*{\nopti+2\alpha^2 x^2}
}^2
}
}
+
c\\
= &
-\frac{\sigma^2}{4\nopti^2}
\EE_{x\sim\Ncal(0,1)}\brk*{
1+(m-2)\nopti\frac{
- (m-2)\nopti
-4\prn*{\nopti+2\alpha^2 x^2}
+
4\alpha^2 x^2
}{
\prn*{
4\alpha^2 x^2
+
m\nopti
}^2
}
}
+
c\\
= &
-\frac{\sigma^2}{4\nopti^2}
\EE_{x\sim\Ncal(0,1)}\brk*{
1+(m-2)\nopti\frac{
- (m+2)\nopti
-
4\alpha^2 x^2
}{
\prn*{
4\alpha^2 x^2
+
m\nopti
}^2
}
}
+
c\\
= &
-\frac{\sigma^2}{4\nopti^2}
+\frac{\sigma^2}{4\nopti^2}(m-2)\nopti
\EE_{x\sim\Ncal(0,1)}\brk*{
\frac{
(4\alpha^2 x^2
+
m\nopti)
+2\nopti
}{
\prn*{
4\alpha^2 x^2
+
m\nopti
}^2
}
}
+
c\\
= &
-\frac{\sigma^2}{4\nopti^2}
+\frac{\sigma^2}{4\nopti^2}(m-2)\nopti
\EE_{x\sim\Ncal(0,1)}\brk*{
\frac{
1
}{
4\alpha^2 x^2
+
m\nopti
}
+
\frac{
2\nopti
}{
\prn*{
4\alpha^2 x^2
+
m\nopti
}^2
}
}
+
c\\
= &
-\frac{\sigma^2}{4\nopti^2}
+\frac{\sigma^2}{4\nopti^2}(m-2)\nopti
\EE_{x\sim\Ncal(0,1)}\brk*{
\frac{1}{4\alpha^2}
\frac{
1
}{
x^2
+
\frac{m\nopti}{4\alpha^2}
}
+
\frac{2\nopti}{16\alpha^4}
\frac{
1
}{
\prn*{
x^2
+
\frac{m\nopti}{4\alpha^2}
}^2
}
}
+
c\\
= & 
c -\frac{\sigma^2}{4\nopti^2}
+
\frac{\sigma^2}{4\nopti^2}(m-2)\nopti
\prn*{\frac{1}{4\alpha^2}
+
\frac{2\nopti}{16\alpha^4}\frac{1-\frac{m\nopti}{4\alpha^2}}{\frac{m\nopti}{2\alpha^2}}}
\exp\prn*{\frac{m\nopti}{8\alpha^2}}\erfc\prn*{\sqrt{\frac{m\nopti}{8\alpha^2}}}\sqrt{\frac{\pi}{\frac{m\nopti}{2\alpha^2}}}\\
&+\frac{\sigma^2}{4\nopti^2}(m-2)\nopti
\frac{2\nopti}{16\alpha^4}\frac{1}{\frac{m\nopti}{2\alpha^2}}
\\
&\prn*{
\mbox{In~\eqref{eqn:expLplusxsquared} and~\eqref{eqn:expLplusxsquaredderivateive} and let $L = \frac{m\nopti}{4\alpha^2}$}
}\\
= & 
c - \frac{\sigma^2}{4\nopti^2}\prn*{1-\frac{(m-2)\nopti}{4\alpha^2 m}}\\
&+
\frac{\sigma^2}{4\nopti^2}(m-2)\nopti
\prn*{\frac{1}{4\alpha^2}
+
\frac{1}{4m\alpha^2}
-
\frac{\nopti}{16\alpha^4}}
\exp\prn*{\frac{m\nopti}{8\alpha^2}}\erfc\prn*{\sqrt{\frac{m\nopti}{8\alpha^2}}}\sqrt{\frac{\pi}{\frac{m\nopti}{2\alpha^2}}}\\
= & 
c - \frac{\sigma^2}{4\nopti^2}\prn*{1-\frac{(m-2)\nopti}{4\alpha^2 m}}\\
&+
\frac{\sigma^2}{4\nopti^2}(m-2)\nopti
\frac{\alpha\sqrt{2\pi}}{\sqrt{m\nopti}}
\frac{\nopti}{16\alpha^4}\prn*{\frac{4\alpha^2}{m\nopti}(m+1)
-
1}
\exp\prn*{\frac{m\nopti}{8\alpha^2}}\erfc\prn*{\sqrt{\frac{m\nopti}{8\alpha^2}}}\\
= &\mbox{RHS of~\eqref{eqn:pniexprderivative}}
\end{align*}
\end{proof}

We will now prove the statements in~\eqref{eqn:expLplusxsquared} and~\eqref{eqn:expLplusxsquaredderivateive}.
Both statements follow from the Lemma below by substituting $t=1/2$. 

\begin{lemma}
\label{lem:expvaluebound}
For all $t\ge0$ and some $L>0$,
\begin{align*}
I(t) := & \int_{-\infty}^\infty \frac{1}{L+x^2} \frac{1}{\sqrt{2\pi}}\exp\prn*{-t x^2}dx = \exp(Lt)\erfc(\sqrt{Lt})\sqrt{\frac{\pi}{2L}}\\
J(t) := & \int_{-\infty}^\infty \frac{1}{(L+x^2)^2} \frac{1}{\sqrt{2\pi}}\exp\prn*{-t x^2}dx = 
\sqrt{\frac{\pi}{2L}}
\prn*{\frac{1}{2L}- t}\exp(Lt)
\erfc(\sqrt{Lt})
+
\frac{\sqrt{t}}{\sqrt{2}L}
\end{align*}    
\end{lemma}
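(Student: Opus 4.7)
My plan is to prove the formula for $I(t)$ first and then derive $J(t)$ from it by differentiation in $L$. For $I(t)$, I would use the elementary integral representation
\begin{equation*}
\frac{1}{L+x^2} = \int_0^\infty e^{-s(L+x^2)}\, ds,
\end{equation*}
valid for $L>0$. Substituting this into the definition of $I(t)$ and applying Fubini's theorem (legal because the integrand is nonnegative) swaps the order of integration, leaving a Gaussian integral in $x$ with variance parameter $s+t$. That inner integral evaluates to $1/\sqrt{2(s+t)}$, so
\begin{equation*}
I(t) = \int_0^\infty \frac{e^{-sL}}{\sqrt{2(s+t)}}\, ds.
\end{equation*}

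The next step is the substitution $u = \sqrt{s+t}$ (so $s = u^2-t$ and $ds = 2u\,du$), which after cancellation converts this into $\sqrt{2}\,e^{Lt}\int_{\sqrt{t}}^\infty e^{-Lu^2}\, du$. A further substitution $v = u\sqrt{L}$ rewrites the remaining integral as $\frac{1}{\sqrt{L}}\int_{\sqrt{Lt}}^\infty e^{-v^2}\, dv = \frac{\sqrt{\pi}}{2\sqrt{L}}\erfc(\sqrt{Lt})$. Combining these pieces yields exactly $I(t) = \exp(Lt)\erfc(\sqrt{Lt})\sqrt{\pi/(2L)}$, as claimed.

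For $J(t)$, I would observe that $\partial_L (L+x^2)^{-1} = -(L+x^2)^{-2}$, so $J(t) = -\partial_L I(t)$. Exchanging the derivative and the integral is justified because on any compact neighborhood $[L_0,L_1]\subset(0,\infty)$ the integrand is dominated by $L_0^{-2}(2\pi)^{-1/2}e^{-tx^2}$, which is integrable. I then differentiate the closed form for $I(t)$ using $\partial_L \erfc(\sqrt{Lt}) = -\sqrt{t/L}\,e^{-Lt}/\sqrt{\pi}$ and $\partial_L\sqrt{\pi/(2L)} = -\frac{1}{2L}\sqrt{\pi/(2L)}$. The cross term involving $\partial_L\erfc$ produces the stray $\sqrt{t}/(\sqrt{2}L)$ (after the $e^{Lt}e^{-Lt}$ cancellation), while the other two terms combine into the $(1/(2L)-t)\exp(Lt)\erfc(\sqrt{Lt})\sqrt{\pi/(2L)}$ piece; negating gives the stated formula.

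The main obstacle here is essentially bookkeeping rather than a conceptual difficulty: carefully tracking the sign from $J = -\partial_L I$ and ensuring that the single non-$\erfc$ term has the correct sign and coefficient after the product rule. The only analytic subtleties are the two interchange-of-limit steps (Fubini for $I$, differentiation under the integral for $J$), both of which are immediate from the nonnegativity of the integrands and the uniform $L_0^{-2}$ domination respectively, so I would mention them only briefly.
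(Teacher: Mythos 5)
Your proposal is correct, and both formulas check out (I verified the $L$-differentiation: $-\partial_L I$ does produce $\prn*{\frac{1}{2L}-t}\exp(Lt)\erfc(\sqrt{Lt})\sqrt{\pi/(2L)}+\sqrt{t}/(\sqrt{2}L)$ exactly). However, your route is genuinely different from the paper's. The paper never uses the Schwinger representation $\frac{1}{L+x^2}=\int_0^\infty e^{-s(L+x^2)}ds$; instead it observes that $-I'(t)+LI(t)=\int\frac{x^2+L}{L+x^2}\frac{1}{\sqrt{2\pi}}e^{-tx^2}dx=\frac{1}{\sqrt{2t}}$ with $I(0)=\sqrt{\pi/(2L)}$, solves this first-order ODE in $t$ by the integrating factor $e^{-Lt}$, and then repeats the scheme for $J$ via the ODE $-J'(t)+LJ(t)=I(t)$, whose solution requires a chain of integrations by parts and substitutions. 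Your approach trades the paper's two ODEs for one Fubini computation plus one differentiation in $L$, which is noticeably shorter for $J$: differentiating the closed form of $I$ in $L$ is mechanical, whereas the paper's second ODE is the most laborious part of its proof. The paper's ODE method has the mild advantage of staying entirely within one-variable calculus in $t$ and not requiring any justification of differentiation under the integral sign. One small repair to your argument: your dominating function $L_0^{-2}(2\pi)^{-1/2}e^{-tx^2}$ for the $L$-derivative is not integrable over $\RR$ when $t=0$; you should instead dominate by $(L_0+x^2)^{-2}(2\pi)^{-1/2}$, which is integrable for all $t\ge0$ (or simply verify the $t=0$ case of the $J$ formula directly, which reduces to $\int_{\RR}(L+x^2)^{-2}dx=\pi/(2L^{3/2})$). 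This is a one-line fix, not a gap in the method.
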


\begin{proof}
We derive $I(t)$ and $J(t)$ as the solutions to two ODEs and solve the ODEs to obtain the results.
Firstly, by calculation:
\begin{align*}
-I'(t) + L I(t)
= &
\int_{-\infty}^\infty \frac{x^2 + L}{L+x^2} \frac{1}{\sqrt{2\pi}}\exp\prn*{-t x^2}dx 
= \frac{1}{\sqrt{2t}}.
\end{align*}
and
\begin{equation*}
I(0)
=
\int_{-\infty}^\infty \frac{1}{L+x^2} \frac{1}{\sqrt{2\pi}}dx 
=
\sqrt{\frac{\pi}{2L}}.
\end{equation*}
This means $I(t)$ satisfies the following ODE:
\begin{equation*}
\begin{cases}
-I'(t) + L I(t) = \frac{1}{\sqrt{2t}}    \\
I(0) = \sqrt{\frac{\pi}{2L}}
\end{cases}.
\numberthis\label{eqn:ODE I}
\end{equation*}
We solve~\eqref{eqn:ODE I} by multiplying integrating factor $-\exp(-L t)$:
\begin{equation*}
\exp(-L t)I'(t) - L \exp(-Lt)I(t) = -\frac{1}{\sqrt{2t}}\exp(-Lt)  
\end{equation*}
Note that the LHS is the derivative of $\exp(-Lt)I(t)$, the ODE becomes:
\begin{equation*}
\frac{d}{d t}\prn*{\exp(-Lt)I(t)}  = -\frac{1}{\sqrt{2t}}\exp(-Lt)  
\end{equation*}
Integrating both sides over $t$, we get:
\begin{equation*}
\exp(-Lt)I(t) = -\int\frac{1}{\sqrt{2t}}\exp(-Lt) dt
= - \int\frac{2}{\sqrt{2L}}\exp(-Lt) d\sqrt{Lt}=\erfc(\sqrt{Lt})\sqrt{\frac{\pi}{2L}}+C,
\end{equation*}
where we use integration by substitution for the last two equalities and $C$ is some constant that does not depend on $t$.
This means $I(t)$ satisfies the following form:
\begin{equation*}
I(t)
=
\exp(Lt)\prn*{\erfc(\sqrt{Lt})\sqrt{\frac{\pi}{2L}}+C}
\end{equation*}
Using the initial condition $I(0) = \sqrt{\frac{\pi}{2L}}$ and the fact that $\erfc(0)=0$, we conclude that
$C=0$.
Thus 
\begin{equation*}
I(t)
=
\exp(Lt)\erfc(\sqrt{Lt})\sqrt{\frac{\pi}{2L}}.
\end{equation*}
We can similarly derive an ODE for $J(t)$. By calculation:
\begin{align*}
-J'(t) + LJ(t)
= &
\int_{-\infty}^\infty \frac{x^2 + L}{(L+x^2)^2} \frac{1}{\sqrt{2\pi}}\exp\prn*{-t x^2}dx 
= I(t)\\
J(0)
= &
\int_{-\infty}^\infty \frac{1}{(L+x^2)^2} \frac{1}{\sqrt{2\pi}}dx 
=
\frac{1}{2L^{3/2}}\sqrt{\frac{\pi}{2}}
\end{align*}
Thus $J(t)$ satisfies the following ODE:
\begin{equation*}
\begin{cases}
-J'(t) + LJ(t)
= I(t)\\
J(0)
= 
\frac{1}{2L^{3/2}}\sqrt{\frac{\pi}{2}}    
\end{cases}.   
\numberthis\label{eqn:ODE J}
\end{equation*}
We similarly multiply integrating factor $-\exp(-L t)$ and integrate both sides:
\begin{align*}
&\int_0^t d \exp(-Lx)J(x)
=  -\int_0^t I(x)\exp(-Lx) dx
= - \int_0^t \erfc(\sqrt{Lx})\sqrt{\frac{\pi}{2L}} dx\\
= &  - \prn*{\left. x\erfc(\sqrt{Lx})\sqrt{\frac{\pi}{2L}}\right|_0^t
+ 
\int_0^t x \frac{\exp(-Lx)}{\sqrt{2x}}dx}\\
&\prn*{
\mbox{Integration by parts}
}\\
= &  - t\erfc(\sqrt{Lt})\sqrt{\frac{\pi}{2L}}
-
\frac{\sqrt{2}}{L^{3/2}}\int_0^{\sqrt{Lt}} y^2 \exp(-y^2)dy\\
&\prn*{
\mbox{Change of variable: $y = \sqrt{Lx}$}
}\\
= &  - t\erfc(\sqrt{Lt})\sqrt{\frac{\pi}{2L}}
+
\frac{\sqrt{2}}{L^{3/2}}
\prn*{
\left.
\frac{1}{2}y\exp(-y^2)
\right|_0^{\sqrt{Lt}}
-
\int_0^{\sqrt{Lt}} \frac{1}{2} \exp(-y^2)dy
}\\
&\prn*{
\mbox{Integration by parts}
}\\
= &  - t\erfc(\sqrt{Lt})\sqrt{\frac{\pi}{2L}}
+
\frac{\sqrt{2}}{L^{3/2}}
\frac{1}{2}\sqrt{Lt}\exp(-Lt)
-
\frac{\sqrt{2}}{L^{3/2}}
\int_0^{\sqrt{Lt}} \frac{1}{2} \exp(-y^2)dy
\\
= &  - t\erfc(\sqrt{Lt})\sqrt{\frac{\pi}{2L}}
+
\frac{\sqrt{t}}{\sqrt{2}L}\exp(-Lt)
-
\frac{\sqrt{\pi}}{2\sqrt{2}L^{3/2}}
\erf\prn*{\sqrt{Lt}}
\\
&
\prn*{
\mbox{By definition of $\erf$}
}\\
= &  - t\erfc(\sqrt{Lt})\sqrt{\frac{\pi}{2L}}
+
\frac{\sqrt{t}}{\sqrt{2}L}\exp(-Lt)
-
\frac{\sqrt{\pi}}{2\sqrt{2}L^{3/2}}
\prn*{1-\erfc\prn*{\sqrt{Lt}}}
\\
&
\prn*{
\mbox{By definition of $\erfc$}
}\\
= & 
\prn*{\frac{1}{2L}- t}
\erfc(\sqrt{Lt})\sqrt{\frac{\pi}{2L}}
+
\frac{\sqrt{t}}{\sqrt{2}L}\exp(-Lt)
-J(0)\\
&
\prn*{
\mbox{By~\eqref{eqn:ODE J}}
}\\
\end{align*}
This means:
\begin{align*}
J(t) = &
\exp(Lt)\prn*{
\int_0^t d \exp(-Lx)J(x)
+J(0)}\\
= & 
\exp(Lt)\prn*{
\prn*{\frac{1}{2L}- t}
\erfc(\sqrt{Lt})\sqrt{\frac{\pi}{2L}}
+
\frac{\sqrt{t}}{\sqrt{2}L}\exp(-Lt)}\\
= & 
\sqrt{\frac{\pi}{2L}}
\prn*{\frac{1}{2L}- t}\exp(Lt)
\erfc(\sqrt{Lt})
+
\frac{\sqrt{t}}{\sqrt{2}L}
\end{align*}
\end{proof}


\section{Code\label{sec:code}}
\texttt{G\_em\_plot.py}:
\lstinputlisting[language=Python]{code/G_em_plot.py}

\end{document}